\documentclass[aps,twocolumn,pra,superscriptaddress,nofootinbib]{revtex4-2}

\usepackage{amsmath}

\usepackage{amssymb}
\usepackage{amsthm}
\usepackage{thmtools}
\usepackage{mathrsfs}
\usepackage{bm, dsfont}
\usepackage[usenames,dvipsnames]{color}

\usepackage{graphicx}
\usepackage{natbib}
\usepackage[colorlinks=true,linkcolor=blue,citecolor=blue,urlcolor=blue]{hyperref}
\usepackage[capitalise]{cleveref}
\usepackage{hypcap}
\usepackage{verbatim, float}
\usepackage{psfrag}
\usepackage{tikz}
\usetikzlibrary{shapes.geometric, positioning, arrows.meta, decorations.pathmorphing, shadows,shapes.symbols,shadows.blur}
\usepackage[normalem]{ulem}
\usepackage{physics}		% for braket, etc.

\usepackage{mathtools}
\usepackage{relsize}

\newcommand{\id}{\mathds{1}}

\newcommand{\cA}{\mathcal{A}}

\newcommand{\cE}{\mathcal{E}}

\newcommand{\cX}{\mathcal{X}}

\newcommand{\be}{\begin{equation}}
\newcommand{\ee}{\end{equation}}

\newcommand{\ChJa}{\text{Choi-Jamio{\l}kowski }}

\newcommand{\SWAP}{\textsc{Swap} }

\newtheorem{theorem}{Theorem}%[section]

\newtheorem{lemma}[theorem]{Lemma}

\definecolor{darkblue}{RGB}{50,10,180}
\definecolor{darkgreen}{RGB}{10,180,50}
\definecolor{orangef}{RGB}{210,100,20}

\makeatletter 
    
\renewcommand\onecolumngrid{% <<<<<<
\do@columngrid{one}{\@ne}%
\def\set@footnotewidth{\onecolumngrid}% <<<<<<<<<<<<<<<<
\def\footnoterule{\kern-6pt\hrule width 1.5in\kern6pt}%
}

\renewcommand\twocolumngrid{% <<<<<<
        \def\footnoterule{% restore rule
        \dimen@\skip\footins\divide\dimen@\thr@@
        \kern-\dimen@\hrule width.5in\kern\dimen@}
        \do@columngrid{mlt}{\tw@}
}%

\makeatother    
\begin{document}

\title{Self-testing Quantum Supermaps}

\author{Victor Barizien}
\affiliation{Department of Applied Physics, University of Geneva, Geneva, Switzerland}

\author{Cyril Branciard}
\affiliation{Université Grenoble Alpes, CNRS, Grenoble INP, Institut Néel, 38000 Grenoble, France}

\author{Alastair A.\ Abbott}
\affiliation{Université Grenoble Alpes, Inria, 38000 Grenoble, France}

\author{Jean-Daniel Bancal}
\affiliation{Université Paris-Saclay, CEA, CNRS, Institut de Physique Théorique, 91191, Gif-sur-Yvette, France}

\author{Pavel Sekatski}
\affiliation{Department of Applied Physics, University of Geneva, Geneva, Switzerland}

\begin{abstract}
By certifying quantum operations from measurement statistics directly, without any assumption on the internal workings of the devices involved, self-testing enables a uniquely reliable identification of quantum objects. While such device-independent characterization has been shown to be possible for states, measurements and channels, it has so far not been extended to quantum supermaps -- operations that act on quantum channels themselves and can combine them in either a well-defined causal order or also, remarkably, in an indefinite causal order. Here we show that quantum supermaps can be identified device-independently. Specifically, we obtain two levels of certification, depending on the network structure of the experiment: when each slot of the supermap accepts a single uncharacterized black box, identification up to local embedding combs is obtained; when several black boxes are inserted within each slot, identification up to local extracting and injecting maps is achieved. We illustrate our approach on four examples -- the identity comb, a bit-flip error-correcting comb, the comb describing Grover's algorithm, and the quantum switch -- providing in particular the first self-test of both a quantum algorithmic comb and a causally indefinite quantum process.  Notably, in the latter case, this provides a new way to certify causal indefiniteness in a device-independent manner.
\end{abstract}
\maketitle

%%%%%%%%%%%%%%%%%%%%%%%%%%%%%%%%%%%%%%%%%%%%%%%%%%%%%%%%
\section{Introduction}
\label{sec:introduction}
%%%%%%%%%%%%%%%%%%%%%%%%%%%%%%%%%%%%%%%%%%%%%%%%%%%%%%%%

The characterization and certification of quantum resources is a central problem in quantum information science. Using one of the preeminent features of quantum theory, namely its ability to generate non-classical correlations, has become a powerful tool for obtaining such characterization. One of the starkest examples of this is the concept of self-testing~\cite{Mayers04, Supic20}. Self-testing exploits quantum correlations -- typically the maximal violation of a Bell inequality -- to provide the most complete quantum description of a system under study that is achievable in a black-box setting, i.e.~in a situation where neither an internal description of the system is available, nor of the additional devices used to analyse it.
The canonical example of this is in self-testing states, where the maximal violation of the CHSH inequality~\cite{Clauser69} can be used to certify the fact that a device prepares a state that is, up to local transformations, equivalent to the maximally entangled Bell state~\cite{Kaniewski16}. As a result, self-testing is sometimes considered the strongest form of certification of a quantum resource: it maximally characterises the resource in a device-independent setting, directly from observable correlations. Self-testing has been demonstrated for several key quantum resources, including quantum states~\cite{Coladangelo_2017, balanzo2026all} and measurements~\cite{Chen_2024,storz2025complete}, as well as quantum channels~\cite{Magniez05,Sekatski18} and instruments~\cite{wagner2020device}. 
Here, we extend the self-testing framework to the so-called higher-order quantum operations, also known as quantum supermaps or processes~\cite{chiribella2008transforming}. 

Quantum supermaps are general operations that map completely positive (CP) maps to CP maps. Prominent examples of quantum supermaps are so-called quantum combs, which can be realized as quantum circuits which apply the input CP maps in a fixed order.  
Quantum algorithms based on oracles~\cite{dewolf2023quantum} can for instance be described as quantum combs, such as Grover's algorithm~\cite{grover1996fast}, which can be viewed as a quantum comb operating on $M$ copies of an unknown unitary channel to produce a classical guess. Particular interest has also been sparked by the fact that some supermaps cannot be described as combs, but may combine operations in a dynamically established order or even without any well-defined causal order~\cite{Oreshkov12,costa26}. The archetypal supermap of this kind is the quantum switch~\cite{Chiribella13}, which uses a qubit to control the order in which two parties' quantum operations are performed. Supermaps have found numerous applications in computation~\cite{araujo14,Abbott24}, communication~\cite{guerin16}, or describing strategies in channel discrimination~\cite{chiribella08} and quantum metrology~\cite{liu23,bavaresco2024designing}; see Ref.~\cite{taranto2025} for a review.

In this contribution, after reviewing the necessary background on self-testing of states, measurements and channels~(Sec.~\ref{sec:ST_quantum_operations}), we propose a formal definition of self-testing for quantum supermaps. Considering first supermaps acting on quantum channels consisting of single black boxes, we show that they can be identified only up to local quantum combs, or ``\textit{embedding combs}''~(Sec.~\ref{sec:ST_approach1}). However, we argue that network connectivity assumptions make it possible to place several (uncharacterized) separate black boxes within each slot of a supermap, and in particular access the input and output systems of a slot independently. This allows the identification of supermaps with more precision, namely up to local maps at the input and output of each slot (Sec.~\ref{sec:ST_approach2}), similar to the injection and extraction maps used in the self-testing of quantum channels~\cite{Sekatski18}.

In practice, our method relies on the certification of the reference quantum channel obtained by plugging a swap gate into each slot of the target quantum process. We illustrate the difference between the certifications obtained with different network connectivity assumptions on several examples~(Sec.~\ref{sec:examples}). We show that our framework allows for general self-testing of a variety of quantum processes by providing explicit self-tests for the identity comb, the bit-flip error-correcting comb, the comb describing Grover's algorithm, and the quantum switch. These provide in particular the first self-test of an algorithmic comb and of a causally indefinite quantum process. 
We finish by discussing several open questions regarding the extent to which the assumptions of our approach can further be relaxed or simplified, and its generalizability to other supermaps beyond the quantum switch (Sec.~\ref{sec:discussion}).

%%%%%%%%%%%%%%%%%%%%%%%%%%%%%%%%%%%%%%%%%%%%%%%%%%%%%%%%
\section{Self-testing quantum operations}
\label{sec:ST_quantum_operations}
%%%%%%%%%%%%%%%%%%%%%%%%%%%%%%%%%%%%%%%%%%%%%%%%%%%%%%%%

Generally speaking, self-testing refers to the possibility of \emph{identifying} the quantum model of different elements of an experimental setup as precisely as possible given only the knowledge of observed correlations and general assumptions on the \emph{network structure} describing the experiment. Both questions of what it means to identify precisely a physical model and to assume a network structure require some clarification, which is easier to vehicle through concrete examples as we do below. In the following, we also distinguish \emph{self-testing} from \emph{rigidity} statements. Specifically, we refer to self-testing as the fact that certain sets of conditions on the experimental statistics obtained in given network structures imply a rigidity statement, i.e.~a characterization of part (or all) of the physical devices used in the experiment.

\subsection{Self-testing states and measurements}

First, consider the example of self-testing in bipartite Bell scenarios. Here, a common source prepares two physical systems and distributes them to two parties, Alice and Bob. Upon receiving their systems, Alice and Bob each choose a random measurement, labeled with $x$ and $y$, and perform it on their respective system to obtain an output, labeled $a$ and $b$, respectively. Upon identically repeating the described procedure many times Alice and Bob estimate the conditional probability distributions -- or \emph{correlations} -- $P(a,b|x,y)$ associated to the experiment. In quantum physics these correlations are given by the Born rule
\begin{subequations}
\begin{equation}\label{eq: corr}
    P(a,b|x,y) = {\rm Tr}\left(\rho\,  (A_{a|x} \otimes B_{b|y})\right),
\end{equation}
for some, a priori unknown, bipartite state $\rho$ and measurements $A_{a|x}$, $B_{b|y}$, referred to as a {\it realization}.\footnote{Throughout the paper (including in the diagrams) we use the ``mixed state picture'', with quantum states represented as density matrices, measurements as Positive Operator-Valued Measures (POVMs), and quantum operations as Completely Positive (CP) maps.}

The scenario described in the previous paragraph can be simply summarized by the following diagram 
\begin{equation} \label{Bell network} \begin{aligned}
\includegraphics[height=2.5 cm]{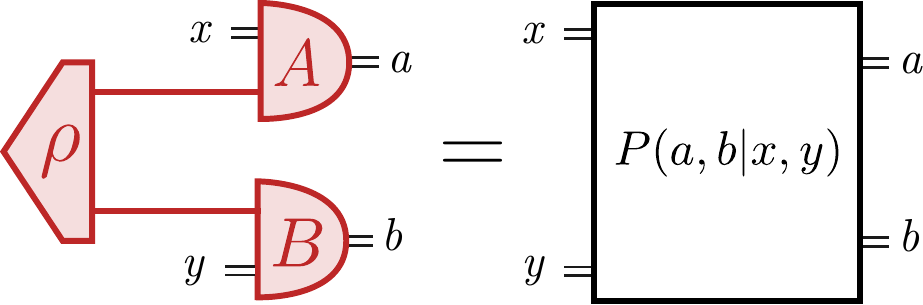}\ \ ,
\end{aligned}
\end{equation}
\end{subequations}
which specifies both the \emph{network structure} underlying the experiment (left hand side) as well as the observed correlations (right hand side). Each diagram is to be read from left to right, similarly to a quantum circuit. Boxes specify different devices in the setup, here (on the lhs above) a source preparing a bipartite state and two local measurements. Lines specify how quantum (single-stroke lines) or classical (double-stroke lines) systems are exchanged between the devices.  We use the red color to signify that the underlying devices or systems are unknown, and no assumption is made on their quantum models nor on the (finite or countably infinite) dimension of the associated Hilbert spaces. Notice that a crucial aspect of the assumed network connectivity is that the sources preparing the quantum state and the classical inputs $x$ and $y$ are independent. Indeed, if they were allowed to share arbitrary correlations, the devices could ``agree'' to display any pattern of inputs and outputs to the experimentalists, which is sometimes referred to as the super-determinism loophole~\cite{Hossenfelder20}. To keep it simple, in the diagrams we do not explicitly represent the sources of classical inputs, rather we associate the inputs with independent open classical wires to signify that they can be chosen freely.

When performing an experiment, one usually aims to implement a specific \emph{reference} -- or \emph{target} -- realization featuring $\bar \Phi$, $\bar A_{a|x}$ and $\bar B_{a|x}$:
\begin{equation} \label{target res} \begin{aligned}
\includegraphics[height=2.5 cm]{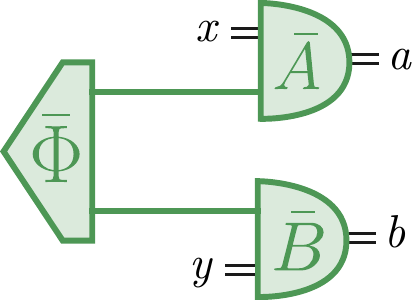}\ \ .
\end{aligned}
\end{equation}
As opposed to the lhs of \cref{Bell network}, here we use the green color to signify that the corresponding elements are known and trusted.
For a given target realization, the expected correlations $P(a,b|x,y)$ are straightforward to compute from the Born rule in Eq.~\eqref{eq: corr}.

Remarkably, for some observed correlations this logical link can be inverted, which is the essence of self-testing: the correlations allow one to essentially uniquely identify the underlying physical realization (here the state and measurements). For example, Alice and Bob observing the maximal quantum score, $2\sqrt{2}$, of the so-called CHSH Bell  test~\cite{Clauser69}, is essentially only compatible with the source preparing a maximally entangled two-qubit state, and the measurements acting as complementary Pauli measurements on these qubits~\cite{Supic20}. Special care is however needed to make this claim precise.

Indeed, it is not clear how a state can be identified within a device with no prior choice of basis or reference frame. Furthermore, one can not possibly know anything about auxiliary degrees of freedom which are prepared by the source but ignored by the measurement devices, and hence do not influence the statistics. This is easy to understand through the following diagram
\begin{equation}\label{physical res} \begin{aligned}
\includegraphics[height=4.5 cm]{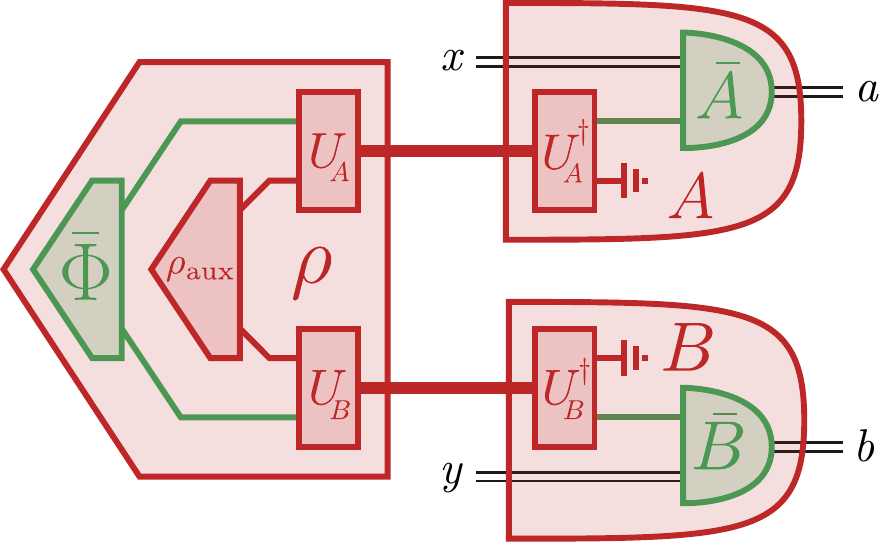} \ \ ,
\end{aligned}
\end{equation}
showing an example of a physical realization that can never be distinguished from the reference one of Eq.~\eqref{target res} in the black-box setting. Here, $U_A$ and $U_B$ are (unknown) unitary maps. Since both realizations in diagrams~\eqref{target res} and~\eqref{physical res} yield the same observed correlations and respect the network connectivity, we can only hope to identify the underlying quantum model up to some equivalence relations.

\begin{figure*}
    \centering
    \includegraphics[width=15.75cm]{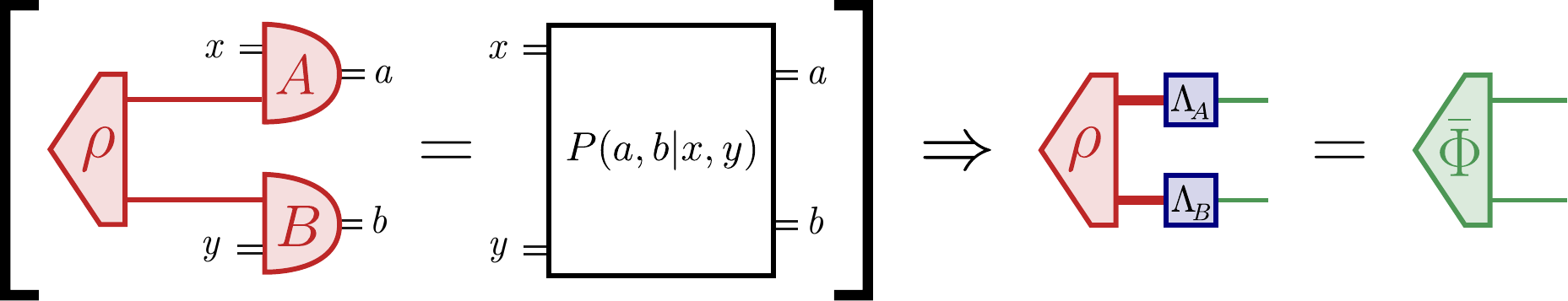}
    \caption{\textbf{Diagrammatic definition of self-testing of a quantum state in a bipartite Bell scenario.} 
    The fact (lhs) that an uncharacterized (red) setup reproduces certain statistics $P(a,b|x,y)$ in a specific network structure implies state rigidity (rhs), i.e.~the existence (blue) of maps extracting $\bar\Phi$ from the measured state $\rho$. The lhs contains the assumptions that the setup $(i)$ is described by a quantum model; $(ii)$ satisfies the required network connectivity; and $(iii)$ generates the specified statistics.}
    \label{fig:selftestingV2}
\end{figure*}

Formally, as summarized in Fig.~\ref{fig:selftestingV2}, we say that the correlations self-test  the target state $\bar \Phi$ if for any physical realization $\rho$, $A_{a|x}$, $B_{b|y}$ compatible with some given statistics $P(a,b|x,y)$, there exist local {\it extraction maps} $\Lambda_A$ and $\Lambda_B$ such that 
\begin{subequations}\label{eq: st states}
\begin{equation}\label{eq: st states eq}
\Lambda_A\otimes \Lambda_B[\rho] = \Bar \Phi.
\end{equation}
To lighten notation, we represent such {\it rigidity statements}\footnote{While we focus here on exact rigidity identities, self-testing statements typically come with some \textit{robustness}: approximate (noisy) statistics lead to approximate rigidity statements, which is however not the focus of this work.} 
graphically as 
\begin{equation}\label{eq: st states diag}\begin{aligned}
\includegraphics[height=1.5 cm]{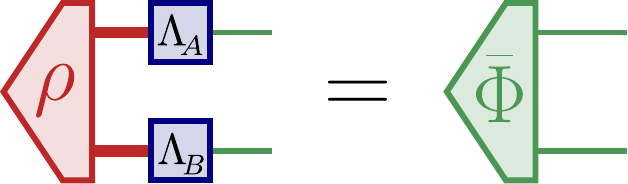} \ \ ,
\end{aligned}
\end{equation}
\end{subequations}
where the blue color means that the existence of an object is guaranteed. In our example, as every realization of the type of \cref{physical res} fulfills \cref{eq: st states} with the choice of extraction maps 
\begin{equation}\begin{aligned}
\includegraphics[height=1 cm]{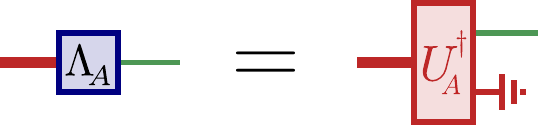}\ \ , \\\includegraphics[height=1 cm]{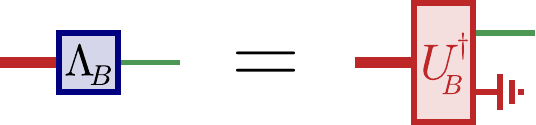}\ \ ,
\end{aligned}
\end{equation}
this captures the discussed freedom. Operationally, the local extraction maps extract the subsystems of desired dimension (when applied on the physical state $\rho$) such that the state they output equals the target one~$\bar \Phi$. They can be understood as a generalized choice of basis for the devices under test.

Things are a little different when it comes to identifying the measurements performed by Alice and Bob: in this case, ignoring the auxiliary degrees of freedom controlled by the source may be detrimental for the proper functioning of the measurement boxes so they cannot be merely ignored. For example, the usual polarization analyser, consisting of a polarizing beamsplitter followed by two single photon detectors, would only work if the incident photon is in the right frequency range. Hence, an identification of a measurement box must be allowed to set the auxiliary degrees of freedom right. Formally, we say that the correlations self-test the target measurement $\bar A_{a|x}$, if for any realization compatible with the observed statistics $P(a,b|x,y)$ there exists a local {\it injection map} $V_A$ (independent of $x$  and $a$) whose input system is of known dimension, such that
\begin{subequations}\label{eq: st meas}
\begin{equation}\label{eq: st meas eq}
V_A^* [A_{a|x}]   =  \bar{A}_{a|x}.
\end{equation}
This rigidity relation can also be represented graphically as
\begin{align}\label{eq: st meas diag}
\begin{aligned}\includegraphics[height=1 cm]{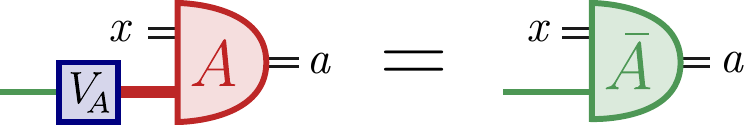} \ \ .
\end{aligned}
\end{align}
\end{subequations}
Again, every realization of the type of \cref{physical res} fulfills \cref{eq: st meas} with the following choice of injection map: 
\begin{equation}\label{eq: inj map VA}\begin{aligned}
\includegraphics[height=2.5 cm]{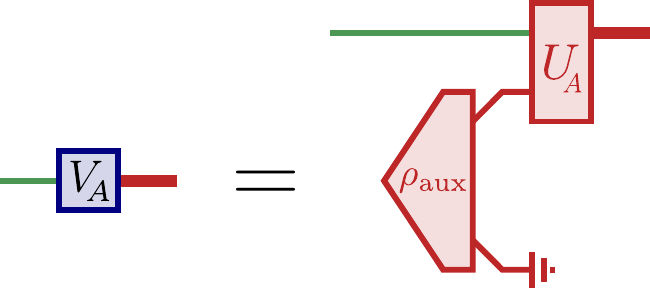}\ \ .
\end{aligned}
\end{equation}

We leave aside the discussion of how exactly self-testing statements are proven for states or measurements, i.e.~how the rigidity statements are implied from the observed statistics. It is nevertheless worth mentioning that today a plethora of results are available~\cite{Supic20}, including self-tests for all pure bipartite entangled states~\cite{Coladangelo_2017}, all pure multipartite entangled qubit states~\cite{balanzo2026all}, and all real projective measurements of arbitrary dimension~\cite{Chen_2024}. In Appendix~\ref{app:self_testing_1comb}, we discuss self-testing statements for products of maximally entangled states. In addition, we derive self-tests for a classical mixture of two GHZ-like states, a global unitary transformation of a  product of $n$ maximally entangled states and a specific 10-qubit state, which will be important for the certification of the supermaps presented as examples of our method in~\cref{sec:examples}. Note that since the latter relies on the certification of multiple states with possibly similar measurements, it is of great interest to make use of methods that allow for deriving self-tests of target states and target measurements, see e.g.~\cite{Barizien24}. Before turning to other types of quantum operations, let us briefly recall some features of measurement self-testing that are specific to the multipartite case.

\subsection{Self-testing multipartite measurements}

\label{sec: self-test measmts}
Multipartite measurements, such as the two-qubit Bell-state measurement (BSM) $\{\bar M_i\}_i =\{\Phi^+,\Phi^-,\Psi^+,\Phi^-\}$ composed of the four Bell-state projectors, are central to many quantum information tasks. Such  measurements can naturally be abstracted in the device-independent framework as a box with several physical inputs that can be manipulated independently, and one classical output. 

In  an experiment where a single global source is used to self-test the BSM, i.e.~both inputs may come from the same physical device, it is crucial to note that the physical realization of the BSM could rely on an auxiliary maximally entangled state distributed to the two inputs by the global source. In particular, this allows for the following realization of a BSM via local operations and classical communication~\cite{Sekatski18}:
\begin{align}\label{BSMbad}
\begin{aligned}\includegraphics[height=4 cm]{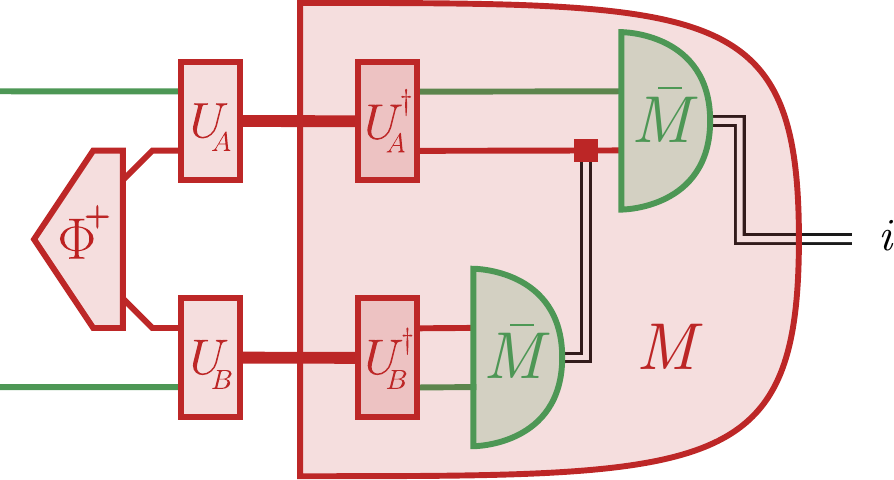},
\end{aligned}
\end{align}
where the classical control inside the box $M$ applies a Pauli correction (as in quantum teleportation). 
In this situation, the best rigidity one can hope for is of the form~\cite{bancal2018noise}
\begin{align}\label{BCM1source}
\begin{aligned}\includegraphics[height=1.5 cm]{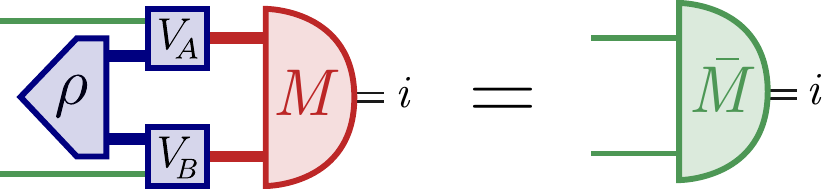}\ \ ,
\end{aligned}
\end{align}
where the injection maps also receive a joint quantum state.

This rigidity relation can however be  strengthened if the two inputs of the BSM are trusted to come from independent sources. In this case, the multipartite measurement can be self-tested by guaranteeing~\cite{bancal2018noise,renou18self} the existence of independent local injection maps for each of the separate inputs such that
\begin{align}
\begin{aligned}\includegraphics[height=1.5 cm]{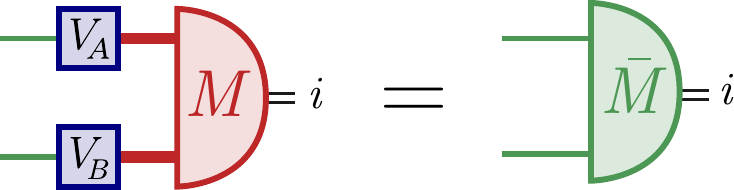}\ \ .
\end{aligned}
\end{align}
Since local injection maps are a special case of the entanglement-assisted injections in \cref{BCM1source}, for which no auxiliary correlations are required, the resulting identification of the measurement is more precise.

This highlights the role of the network structure in self-testing and illustrates the subtle interplay between assumptions on the network structure underlying an experiment and the self-testing possibilities it offers. While already crucial to ensure the independence of measurement choices in bipartite Bell tests, the network structure plays an even more important role in multipartite self-tests~\cite{Supic23quantum}.

\subsection{Self-testing quantum channels} \label{sec:self-test_channels}

To formalize the self-testing of dynamical quantum operations (channels and instruments), which have both quantum inputs and outputs, one essentially combines the definitions for the states~\eqref{eq: st states} and measurements~\eqref{eq: st meas}, which each have respectively a quantum output and a quantum input. Specifically, consider a physical realization $\mathcal{E}$ of an $n$-partite quantum channel, describing a black box that has $n$ well-identified quantum inputs and outputs,\footnote{Note that the number of output systems could be taken different from the number of input systems $n$ without much affecting the following discussion.} that can be manipulated independently. We say that the target channel $\bar \cE$ is self-tested in the experiment (with independent sources), if there exist $n$ independent local injection maps $V_i$ and extraction maps $\Lambda_j$, such that  
\begin{subequations}\label{selftesting_channels}
\be \label{selftesting_channels_eq}
 (\bigotimes_{j=1}^n \Lambda_{j})\circ \cE \circ (\bigotimes_{i=1}^n V_i) = \bar \cE.
\ee
Again, this rigidity statement can be simply given a graphical representation as follows:
\begin{align} \label{selftesting_channels_bis}
\begin{aligned}\includegraphics[height=1.5cm]{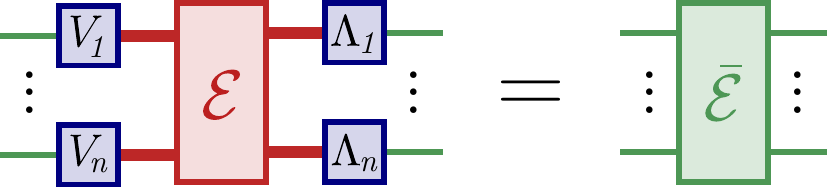} \ \ .
\end{aligned}
\end{align}
\end{subequations}

A very convenient way to represent dynamical quantum operations is offered by the Choi-Jamio\l{}kowski isomorphism~\cite{Jamiolkowski72}, which states that an $n$-partite quantum channel is faithfully represented by its $2n$-partite Choi state
$\bar\cE \simeq \mathcal{C}(\bar\cE ):= ({\rm id^{\otimes n}}\otimes \bar{\cE})[(\Phi^+)^{\otimes n}]$ (where $\Phi^+$ is a maximally entangled state), i.e.
\begin{align} \label{selftesting_channels_choi}
\begin{aligned}\includegraphics[height=2.5cm]{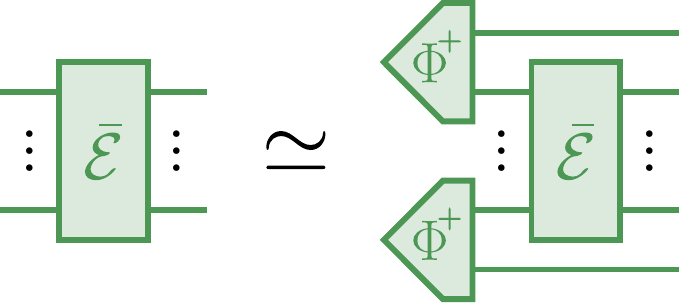} \ \ .
\end{aligned}
\end{align}
The Choi-Jamio\l{}kowski isomorphism gives us yet another way to express the rigidity relation of \cref{selftesting_channels} as
\begin{equation} \label{selftesting_channels_ter} \tag{12c}
\begin{aligned}
\includegraphics[height=2.5cm]{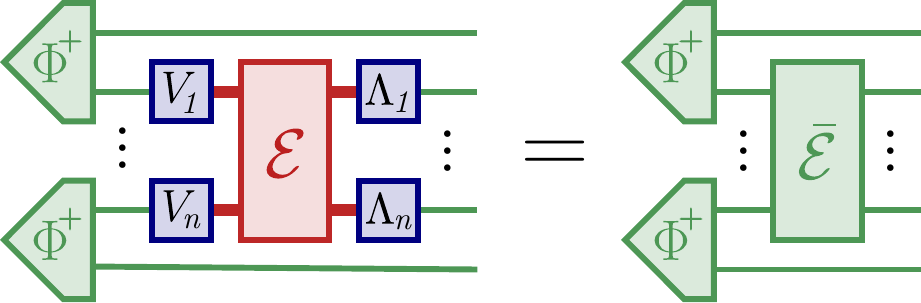} \ \ .
\end{aligned}
\end{equation}
 
The next natural question, then, is how to chose a network structure and devise an experiment featuring the target channel $\bar\cE$ where the latter can be self-tested. This was addressed in Ref.~\cite{Sekatski18}, which provided a generic way to devise such an experiment and combine self-tests for certain well chosen states into a self-test of the target channel. The idea, illustrated in~\cref{fig:selftesting_channel} is to consider an experiment where $n$ independent sources, each ideally preparing maximally entangled states $\Phi^+$ of the desired dimension, are combined with the channel and with local measurements in two different manners.

\begin{figure*}
    \centering
    \includegraphics[width=17.5cm]{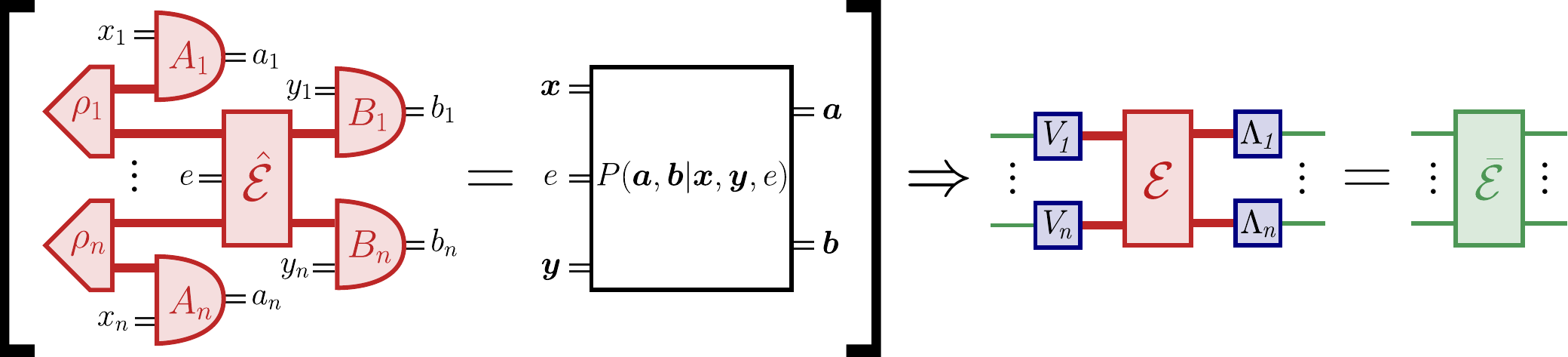}
    \caption{\textbf{Diagrammatic definition of self-testing of an $n$-partite quantum channel $\Bar{\mathcal{E}}$ in a $2n$-partite Bell scenario.} Obtaining the target probability distribution in the physical setup, where $\hat{\mathcal{E}} = \delta_{e,0} \,{\rm id} + \delta_{e,1}\,\mathcal{E}$ (with $\delta$ being the Kronecker delta, ${\rm id}$ denoting the identity channel -- i.e.~the classical bit $e$ decides wether the channel $\mathcal{E}$ is applied or not), implies channel rigidity: the reference (green) channel $\bar{\mathcal{E}}$ can be obtained from the physical (red) channel $\mathcal{E}$ by some local injection and extraction maps. As discussed in Sec.~\ref{sec:self-test_channels}, such a channel self-test requires the following conditions to be met: $(i)$ the target probability distribution $P(\bm a,\bm b|\bm x, \bm y , e=0)$, with $\bm x=(x_1,\ldots,x_n)$ and similarly for $\bm y,\bm a,\bm b$, -- corresponding the case where the channel is not applied -- self-tests (see~\cref{fig:selftestingV2}) each source to prepare a maximally entangled state $\Phi^+$; $(ii)$ the distribution $P(\bm a,\bm b|\bm x, \bm y , e=1)$ -- corresponding to the case where the channel is applied -- self-tests the Choi state $({\rm id}\otimes \bar{\mathcal{E}})[(\Phi^+)^{\otimes n}]$ of the reference channel; and $(iii)$ the two rigidity statements involve the same extraction maps on the ``channel-free'' systems (those on which the measurements $A_i$ are applied in the leftmost diagram).}
    \label{fig:selftesting_channel}
\end{figure*}

First, one performs local measurements on the sources combined with the physical channel in order to self-test the Choi state of the target channel $\mathcal{C}(\bar\cE )$ and obtain the following rigidity statement: 
\begin{align} \label{selftesting_channels_step1}
\begin{aligned}\includegraphics[height=2.5cm]{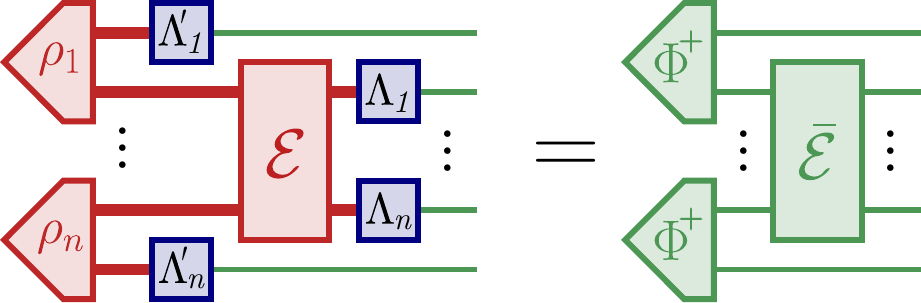} \ \ .
\end{aligned}
\end{align}
This relation exhibits an important difference with the desired one in Eq.~\eqref{selftesting_channels}: here, the physical realizations of the sources on which the physical channel $\cE$ acts are left uncharacterized, and are not guaranteed to prepare maximally entangled states. To close this discrepancy, the devices are combined in a second way to directly perform local measurements on the sources in order to self-test the state $(\Phi^+)^{\otimes n}$ they prepare:
\begin{align} \label{selftesting_channels_step2}
\begin{aligned}\includegraphics[height=2.5cm]{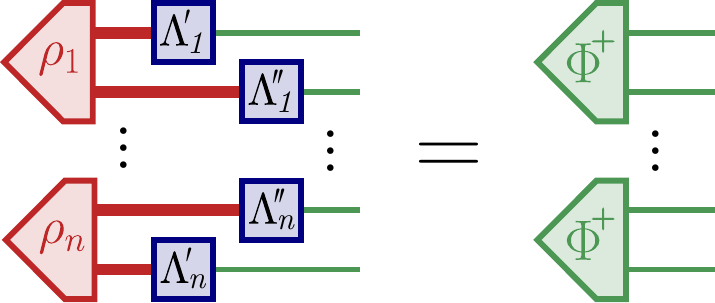}\ \ ,
\end{aligned}
\end{align}
which can be shown~\cite{Sekatski18} to imply
\begin{align} \label{selftesting_channels_step3}
\begin{aligned}\includegraphics[height=2.5cm]{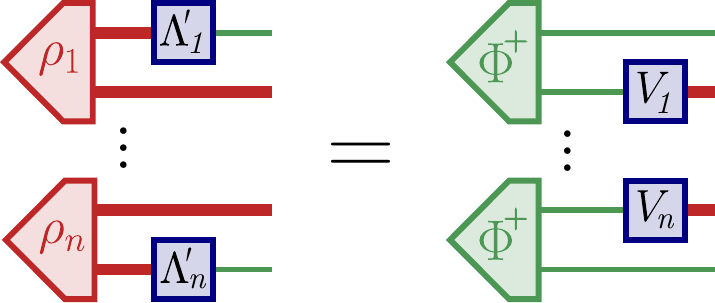}\ \ .
\end{aligned}
\end{align}
In order to finally obtain the desired result, we require the rigidity relations in Eqs.~(\ref{selftesting_channels_step1}) and (\ref{selftesting_channels_step3}) to feature the same extraction maps $\Lambda_i'$ on the ``channel-free'' systems. When this condition is fulfilled, the two rigidity statements can be readily combined to imply the channel rigidity in Eq.~\eqref{selftesting_channels}.

Note that self-testing of both states up to the same extraction maps $\Lambda_i'$ can be achieved by keeping the same measurement boxes on the channel-free system in both steps. This ensures that the extraction can not be influenced by the choice of whether the channel $\cE$ is applied or not on the remaining systems, similar to the assumption that a measurement of one one party has no influence on the measurement of the other party in standard Bell tests.
An explicit construction of such extraction maps is provided in Appendix~\ref{app:extractionmap}.

The rigidity statements we showcase here in Eq.~\eqref{selftesting_channels}, identifying multipartite channels up to local maps, hold when using independent sources to probe the $n$-partite channel. An analysis analogous to the one discussed in the previous section on multipartite measurements holds if independence between the sources is not guaranteed, resulting in a rigidity statements up to correlated injection maps, see~\cite{Sekatski18} for more details.

We also note that the identity channel plays a special role among channels with one input and one output. Indeed, any single-party channel $\bar \cE$ can be obtained by composing suitable injection and extraction maps with the identity channel (represented here by a simple wire)
\begin{align}
\begin{aligned}
    \includegraphics[height=0.8 cm]{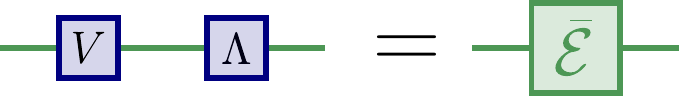}\ \ ,
\end{aligned}
\end{align}
chosen such that $\Lambda \circ V = \bar \cE$. This property extends to unitary channels, whose self-testing is thus equivalent to the one of the identity channel. However, all channels do not share this feature: entanglement-breaking channels, for instance, cannot reproduce all $\bar \cE$ (e.g.~the identity channel) even when supplemented with injection and extraction maps. For this reason, self-testing the identity channel is of particular interest~\cite{Sekatski18}. It has also shown to be practically relevant as quantum memories, frequency converters or transmission lines are all key devices for quantum technologies that ideally realize the identity channel and can be certified device-independently~\cite{sekatski2023toward,bock2024calibration,neves2025experimentally}.

To close this section we note that this framework can also accommodate self-testing of channels with classical outputs, i.e.~quantum instruments~\cite{heinosaari11}. For example, Ref.~\cite{wagner2020device} discussed self-testing of a non-demolition measurement.

We now turn to the self-testing of quantum resources beyond states, measurements and channels. In 
 the following, we provide a definition of self-testing for quantum supermaps, and illustrate it with several examples.

%%%%%%%%%%%%%%%%%%%%%%%%%%%%%%%%%%%%%%%%%%%%%%%%%%%%%%%%
\section{Self-testing quantum supermaps}
\label{sec:ST_quantum_supermaps}
%%%%%%%%%%%%%%%%%%%%%%%%%%%%%%%%%%%%%%%%%%%%%%%%%%%%%%%%

Quantum supermaps are linear transformations that map CP maps to CP maps and, in the case of deterministic supermaps, CP trace-preserving (TP) maps (i.e.~channels) to CPTP maps.%
\footnote{We will mostly consider deterministic supermaps, although our approaches can be extended to probabilistic ones (or superinstruments~\cite{Quintino2019}). E.g.~in the examples of the error-correcting comb (Sec.~\ref{sec:ECcomb}) or of Grover's algorithm (Sec.~\ref{sec:self_test_Grover}), the classical outcomes can be incorporated into classical output systems so as to make the combs (when including these classical systems) deterministic.}
They provide a natural framework for describing transformations of quantum operations, extending the standard formalism of quantum theory to higher-order quantum operations. For a complete review on this topic, see~\cite{taranto2025}. 

The simplest instance is given by one-slot supermaps, which transform a single input channel into another channel. These objects are operationally relevant as they describe the most general ways of embedding an unknown quantum operation into a larger circuit. A fundamental result shows that any one-slot supermap admits a memory-channel realization: it can be decomposed as a sequence of two CPTP maps with an intermediate memory system, such that the input channel is inserted between them~\cite{chiribella2008transforming}. 

The framework extends naturally to multi-slot supermaps, which act on several input channels. These can be thought of as query, or black-box, quantum algorithms, where each slot of the supermap corresponds to a query of an ``oracle channel''. Many such supermaps admit a circuit realization with a definite causal order -- corresponding to plugging the input channels into ordered ``slots'' of a quantum circuit. However, there also exist supermaps that cannot be decomposed in this way. In particular, some processes exhibit indefinite causal order and are termed causally nonseparable~\cite{Oreshkov12}. A paradigmatic example is the quantum switch~\cite{Chiribella13}, which uses a qubit to coherently control the order in which two channels are applied, and will be considered in~\cref{sec:self_test_QS}.

Before discussing a definition of self-testing for quantum supermaps, let us delve a little deeper into how these objects are defined and represented. For instance, consider some process $\bar W$ with one free input system, two open slots for input channels, and one output system:
\begin{align} 
\begin{aligned}\includegraphics[height=2.25cm]{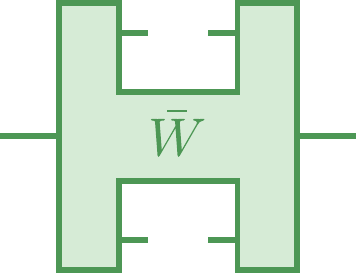}\ \ .
\end{aligned}
\end{align}
This object  is a supermap if, when plugging {\it any channels} $\mathcal{A}_1$ and $\mathcal{A}_2$ in its input slots, each possibly having auxiliary input/output wires, one always obtains a channel $\bar W[\mathcal{A}_1,\mathcal{A}_2]$ (linear in $\mathcal{A}_1$ and~$\mathcal{A}_2$): 
\begin{align} 
\begin{aligned}\includegraphics[height=3.25cm]{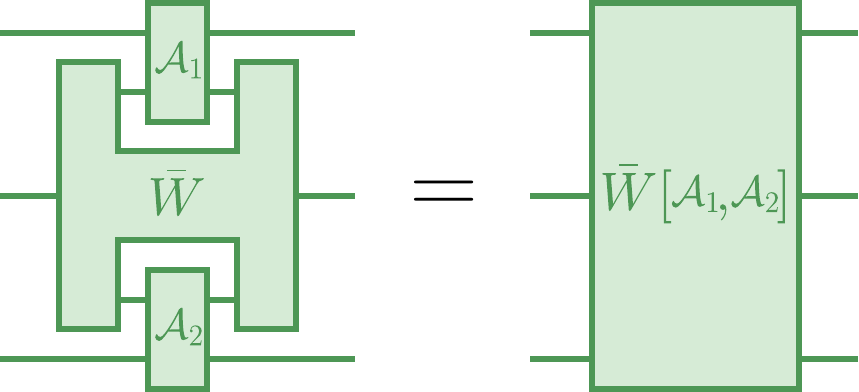}\ \ .
\end{aligned}
\end{align}

To represent supermaps, we will again leverage the Choi–Jamiołkowski isomorphism, that naturally generalizes to higher-order operations~\cite{chiribella2008transforming}. For our purpose it will be important that a supermap is faithfully represented by the channel $\mathcal{J}(\bar{W}):=\Bar W [\SWAP,\SWAP]$, obtained by plugging \SWAP gates in all of it slots, depicted in the right hand side of the following diagram:
\begin{align} \label{diag: Choi chanel}
\begin{aligned}\includegraphics[height=3cm]{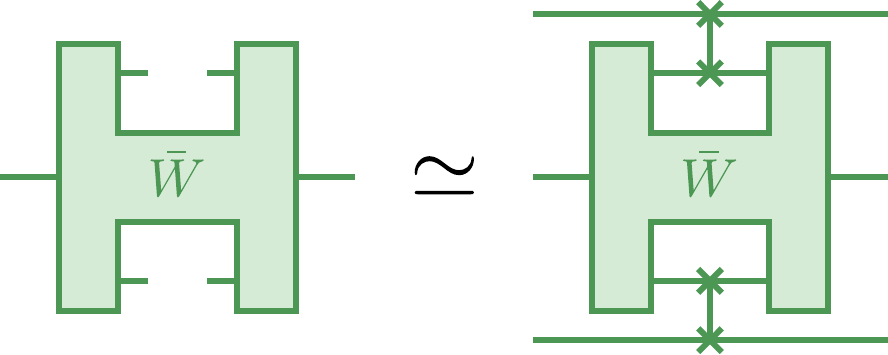} .
\end{aligned}
\end{align}
This representation immediately suggests that one could self-test the supermap $\bar W$ by self-testing its \emph{Choi channel} $\mathcal{J}(\bar{W})$ thus defined. With the help of the method presented in~\cref{sec:self-test_channels}, one can indeed devise an experiment to self-test this reference channel, leading to the following rigidity statement: 
\begin{align} \label{diag: Choi chanel2}
\begin{aligned}\includegraphics[height=3.25cm]{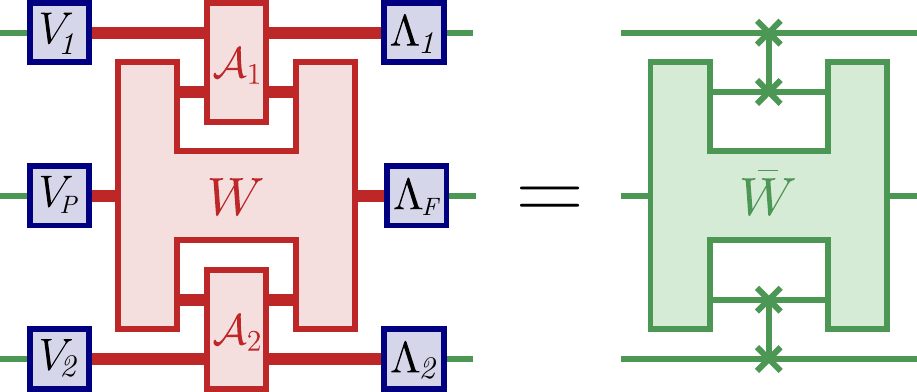}\ \ .
\end{aligned}
\end{align}
This identity, however, does not directly imply a self-test of the target process $\bar W$ itself, since the \SWAP gates introduced in the reference realization cannot be assumed to behave as faithful \textsc{Swap}s in the physical realization, but must also be treated as a priori unknown channels $\mathcal A_1$ and~$\mathcal A_2$.

There are now two ways to proceed that we introduce below, and which are summarized in~\cref{fig:supermaps}. The first approach aims to use the statistics to guarantee that these channels must indeed realize \SWAP gates. The second approach consists of guaranteeing that $\mathcal{A}_1$ and $\mathcal{A}_2$ realize \SWAP gates by ensuring that they consist of separate black boxes accessing the input and output systems of the slot independently.
This is akin to a refinement of the assumptions about the network connectivity used to describe the experiment, and thus leads to a stronger characterization of the quantum process.

\begin{figure*}
    \raggedright a) \\
    \centering
    \includegraphics[height=4.3125cm]{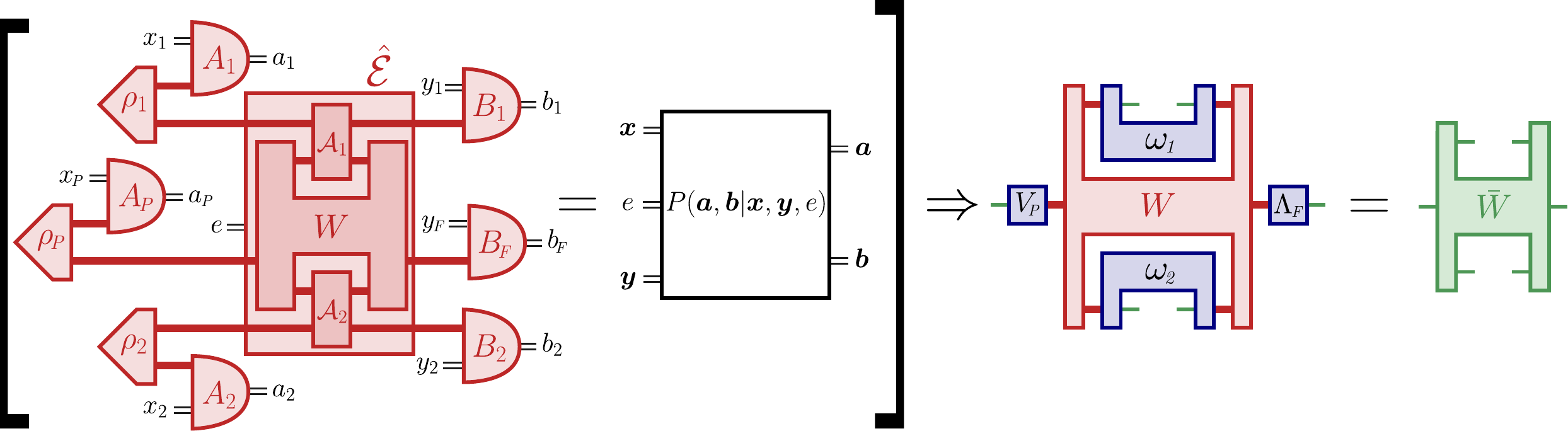}\bigskip\\
    \raggedright b) \\
    \centering
    \includegraphics[height=4.3125cm]{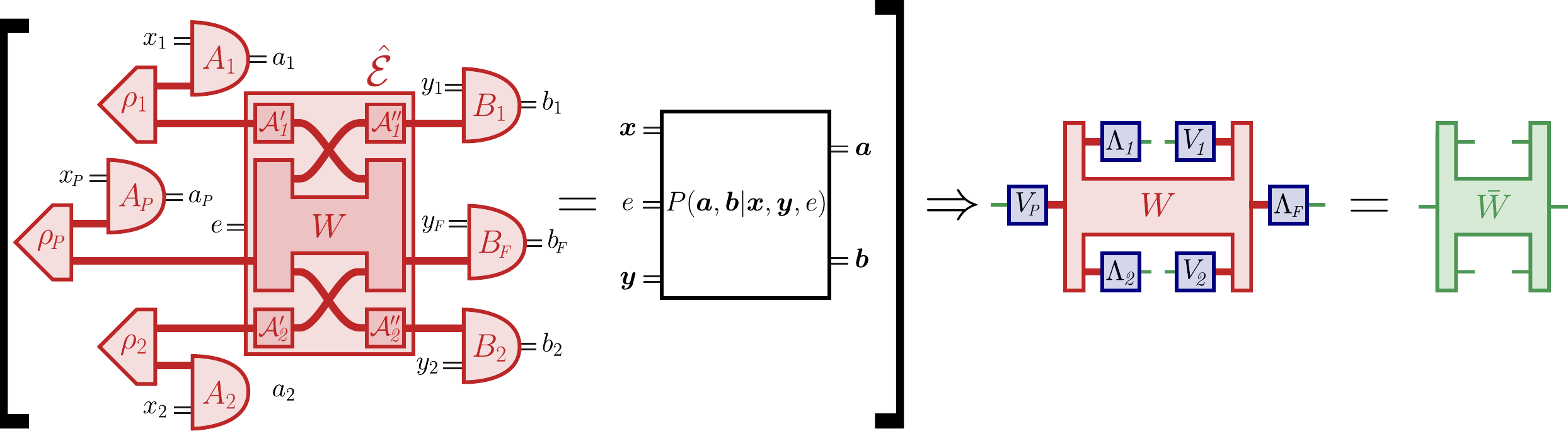}
    \caption{\textbf{Diagrammatic definitions of self-testing for a supermap $\Bar{W}$, illustrated here for a 2-slot supermap with a quantum input and output, requiring one to consider a 6-partite Bell scenario.} Obtaining the target probability distribution in the physical setup, where the classical bit $e$ decides wether the tested supermap and the operations $\mathcal{A}_1, \mathcal{A}_2$ are applied or not, i.e.~$\hat{ \mathcal{E}} = \delta_{e,0} \,{\rm id} + \delta_{e,1}\,W[\mathcal{A}_1, \mathcal{A}_2]$, implies supermap rigidity. In the first approach~(a), where single black boxes are inserted into each slot, the reference (green) supermap $\Bar W$ can be obtained from the physical (red) supermap $W$ by some local injection and extraction maps on the input/outputs systems, and local embedding combs in each slots. In the second approach~(b), with stronger network connectivity assumptions, considering independent black boxes accessing the input and output of each slot, the reference (green) supermap $\Bar W$ can be obtained from the physical (red) supermap $W$ by some local injection and extraction maps on every system involved. As discussed in Sec.~\ref{sec:ST_quantum_supermaps}, the target probability distribution $P(\bm a,\bm b|\bm x, \bm y , e)$, with $\bm x=(x_1,x_2, x_P)$, $\bm y=(y_1,y_2, y_F)$ and similarly for $\bm a,\bm b$, should self-test the Choi channel $\mathcal{J}(\bar{W}):=\Bar W [\SWAP,\SWAP]$ of the reference supermap (see~\cref{fig:selftesting_channel}). In the second approach, this self-testing of the channel directly guarantees the rigidity on the supermap, while in the first approach, additional derivation -- using~\cref{lemma1} -- is needed to ensure that the physical channels $\mathcal{A}_1$, $\mathcal{A}_2$ indeed involve \SWAP operations.}
    \label{fig:supermaps}
\end{figure*}

%%%%%%%%%%%%%%%%%%
\subsection{First approach: Rigidity up to local embedding combs}
\label{sec:ST_approach1}
%%%%%%%%%%%%%%%%%%

Consider a reference experiment where a supermap $\Bar W$ is combined with a \SWAP gate in each slot, and the resulting Choi channel $\mathcal{J}(\Bar W)$ is self-tested (see examples in~\cref{sec:examples} below). For concreteness and simplicity, let us for now consider a one-slot supermap. By definition, the self-test of the associated Choi channel grants the following rigidity identity for any physical realization  of the supermap $W$ and of the bipartite channel $\mathcal{A}$ that is supposed to realize the \SWAP gate: 
\begin{align} \label{diag: Woneslot}
\begin{aligned}\includegraphics[height=2cm]{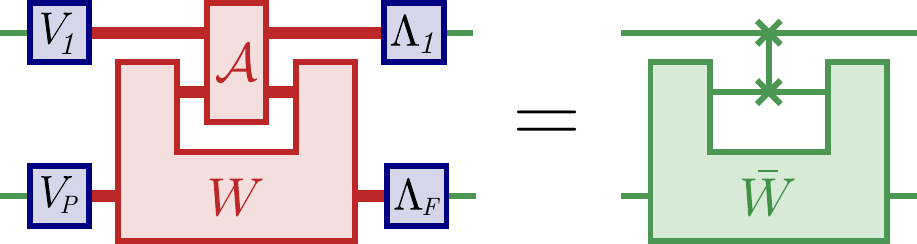} .
\end{aligned}
\end{align}

To further identify both of these operations, we need to again understand which physical realizations thereof are fundamentally indistinguishable from the reference one. To do so, it is enlightening to consider the following alternative physical realization: 
\begin{align}\label{eq: A tilde}
\begin{aligned}\includegraphics[height=1.5cm]{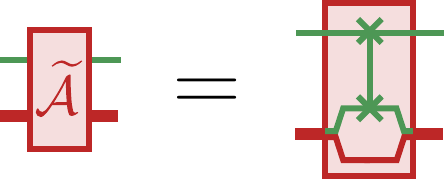}\ \ ,
\end{aligned}
\end{align}
\vspace{-0.2cm}
\begin{align}\label{eq: tilde W}
\begin{aligned}
\includegraphics[height=2.4cm]{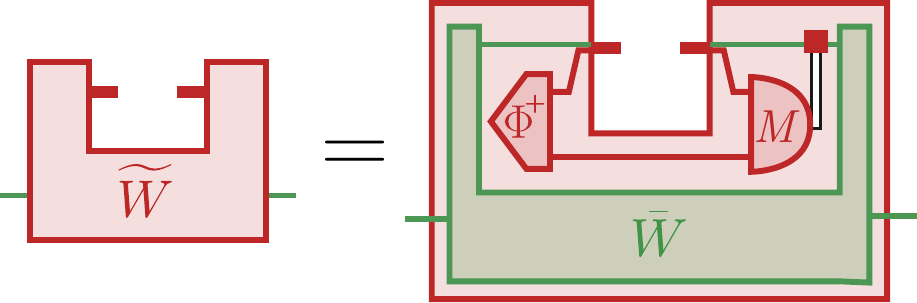}\ \ .
\end{aligned}
\end{align}
Here, $M$ denotes a binary projective measurement that checks if the state $\Phi^+$ is received unchanged, and feeds forwards the classical command to either transmit the state on the green wire, untouched, if $\Phi^+$ is indeed obtained, or replace it by a maximally mixed state otherwise. Combining these operations, one verifies that they indeed realize the same overall channel, i.e.~$\widetilde{W}[\widetilde{\mathcal A}] = \bar W[\SWAP]$, and thus satisfy the channel rigidity in Eq.~\eqref{diag: Woneslot}, with trivial extracting and injecting maps. Nevertheless, one 
also sees that there is in general no way to recover $\bar W$ from $\widetilde{W}$ by inserting local extraction/injection maps into the channel slot, i.e. 
\begin{align} \label{diag: Wid}
\begin{aligned}\includegraphics[height=1.5cm]{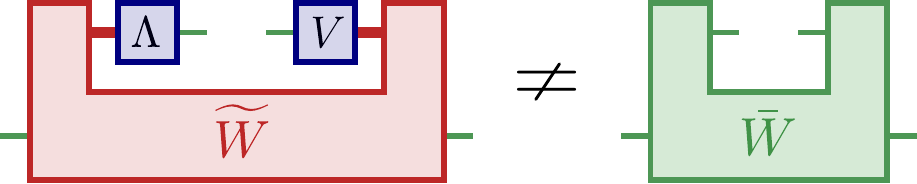}
\end{aligned}\, ,
\end{align}
meaning that no blue boxes acheiving the equality exist.
This is because in general the reference system output by $\Lambda$ (green wire) can not contain both of its inputs: the first output of $\bar W$ (green wire) and half of the entangled state $\Phi^+$ (red wire). Any such compression of two systems into one (of a smaller dimension) performed by~$\Lambda$ is a non-invertible operation, which cannot be undone by $V$.
It follows that, when appended with the maps $\Lambda$ and $V$, the supermap $\widetilde W$ will fail to imitate $\bar W$ in general.

This example shows that the channel rigidity of Eq.~\eqref{diag: Woneslot} is on its own insufficient to guarantee that the physical supermap $W$ can be transformed to the target one $\bar W$ by appending all wires with local extraction and injection maps. 
Moreover, note that this tension cannot be resolved by extending the experiment and combining the reference supermap also with other reference channels $\bar{\mathcal{A}}'$ different from the \SWAP gate. Indeed, the physical realization $\widetilde W$ in Eq.~\eqref{eq: tilde W} remains indistinguishable from the target one $\widetilde W[\tilde{\mathcal{A}}']=\bar W[\bar{\mathcal{A}}']$ upon also adjusting the physical realization of the channels $\tilde{\mathcal{A}}'$ accordingly to Eq.~\eqref{eq: A tilde}.

A way to capture the degrees of freedom discussed above is to allow for identification of supermaps that include some memory effect in each slot. Indeed, from any realization of the form of \cref{eq: tilde W} one can recover the reference supermap $\Bar W$ by plugging the following local comb into the slot:
\begin{align}
\begin{aligned}
    \includegraphics[height=1 cm]{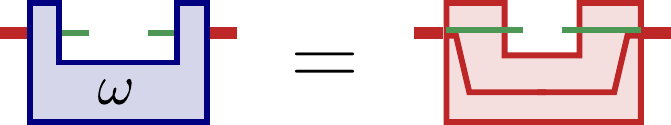}\ \ . 
\end{aligned}
\end{align}
Hence, in the present scenario, this observation supports the fact that the best kind of rigidity statement that one can hope for is to identify the existence of a 1-comb $\omega$ -- that we call an \emph{embedding comb} -- which, composed with the physical supermap $W$, recovers the reference supermap $\bar{W}$. In the case of a process with two slots, this corresponds to the following diagram
\begin{align}
     \begin{aligned}\label{eq:st up to w}
           \includegraphics[height=3.1 cm]{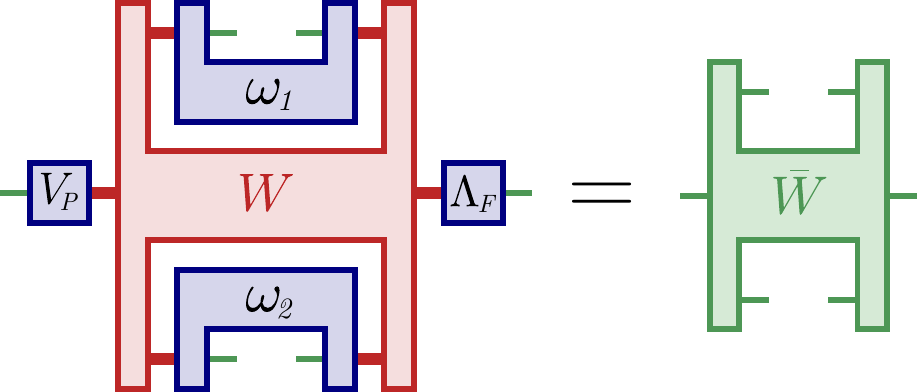} \ ,
       \end{aligned}
\end{align}
which defines the notion of {\bf supermap rigidity up to local embedding combs}. We emphasize that, while depicted here for a supermap with two slots, one input and one output, this form of rigidity naturally generalizes to arbitrary supermaps. 

Remark that any one-slot comb $\bar W$ with a single input and output can be realized by inserting an embedding comb $\omega$ into the identity comb $W_{\rm Id}$, which maps any CP map into itself: 
\begin{align} \label{eq:identityrefcomb}
    \begin{aligned}
        \includegraphics[height=2cm]{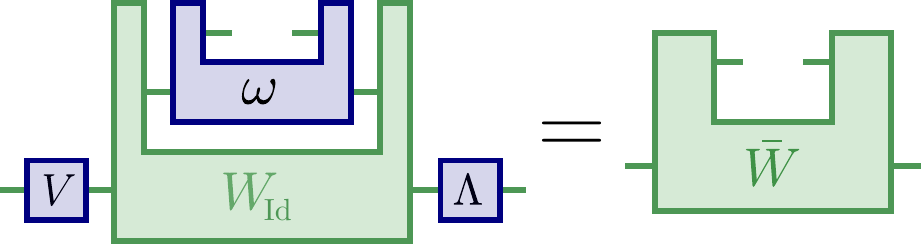} ,
    \end{aligned}
\end{align}
as one can always chose $\omega = \bar W$ and $V=\Lambda={\rm id}$. In this sense $W_{\rm Id}$ can be seen as the reference 1-comb. A practical example of this process is a transducer that transforms an input state into a form, encoded onto different degrees of freedom, that is suitable for the application of an arbitrary gate or channel (in the slot), and transforms the result back. We consider its self-testing in Sec.~\ref{sec: 1-comb}. 
This is analogous to the case of single-system channels, and the self-testing of the identity channel discussed at the end of Sec.~\ref{sec:self-test_channels}. 

Having set the goal, we are left with a natural question:  does the rigidity of the Choi channel in Eq.~\eqref{diag: Choi chanel2} always imply the rigidity of the supermap in the form of Eq.~(\ref{eq:st up to w})? While we leave this general question hanging for future works, we still provide a partial yet helpful result along these lines in the following lemma, where
the ground symbol indicates that the system is discarded.

\setcounter{theorem}{0}

\begin{restatable}{lemma}{LemmaOne}\label{lemma1}
    For any bipartite channel $\mathcal{X}$ and any one-slot comb $\Omega$ such that  
\begin{align}\label{eq: lem11}
\begin{aligned}\includegraphics[height=2cm]{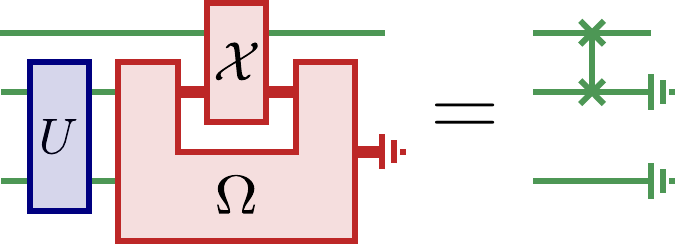}\ \ ,
\end{aligned}
\end{align}
    it holds true that
\begin{align}\label{eq: lem12}
\begin{aligned}\includegraphics[height=2.5cm]{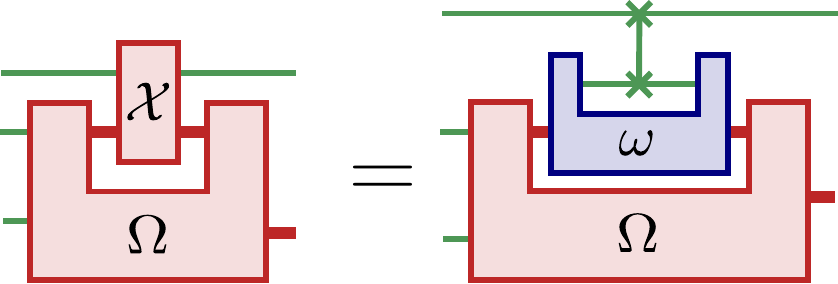}\ \ .
\end{aligned}
\end{align}
\end{restatable}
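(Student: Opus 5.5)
The plan is to read \cref{eq: lem11} as saying that the channel obtained by composing $\mathcal{X}$ with the comb $\Omega$ acts as the identity channel on the wire that passes from an input of $\mathcal{X}$ to the final output of $\Omega$. I will then derive \cref{eq: lem12} from the standard fact that \emph{a channel that is (a dilation of) the identity cannot leak any information to its environment}. Since only the statement is available here, I assume \cref{eq: lem12} asserts the corresponding factorization. In that factorization, the discarded part of $\mathcal{X}$, together with everything else traced out in \cref{eq: lem11}, is produced independently of the input, and the input is routed by $\mathcal{X}$ into the slot in a recoverable way.

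\textbf{Step 1: realize the comb.} Following the memory-channel realization of one-slot supermaps~\cite{chiribella2008transforming}, I write $\Omega$ as an isometric encoder $E$ (slot input plus a memory $m$) followed, after the slot, by an isometric decoder $D$ acting on the slot output and $m$. I also take a Stinespring isometry $U_{\mathcal X}$ for $\mathcal{X}$ with environment $e_X$. With these choices, the left-hand side of \cref{eq: lem11} becomes a single isometry $U$ from the input wire to the output wire, together with a joint environment $F$. The environment $F$ gathers $e_X$, the environments of $E$ and $D$, and every wire that \cref{eq: lem11} discards with the ground symbol.

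\textbf{Step 2: exploit the identity.} The hypothesis $\operatorname{Tr}_F\, U\rho U^\dagger=\rho$ for all $\rho$ forces $U=\id\otimes\ket{\phi}_F$ for some fixed state $\ket\phi$ of the environment. The argument is the usual one: the complementary channel of the identity is constant. Equivalently, by Choi--Jamio\l{}kowski, the Choi state of $U$ is a pure maximally entangled state on the input reference and the output, so it must be in product with $F$.

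\textbf{Step 3: cut at the slot.} Inserting the decomposition $U=(D\otimes\id)\circ (U_{\mathcal X}\otimes \id)\circ E$ into $U=\id\otimes\ket\phi$ and cutting the circuit just after $\mathcal{X}$, I will show two things. First, the reduced state on $e_X$ and on the discarded free output of $\mathcal{X}$ (jointly with the memory) is independent of the input. Second, $D$, applied with the memory, inverts what $\mathcal{X}$ sends into the slot. Concretely, one applies the monotonicity of the decoupling under partial trace, and then Uhlmann's theorem to purify this decoupled part. This yields the recovery isometry, i.e.\ the blue boxes of \cref{eq: lem12}, and those boxes are channels because they arise as restrictions of isometries.

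\textbf{Main obstacle.} I expect the hardest step to be this last one, Step 3. The decoupling holds for the \emph{whole} circuit, and it must be transported back to a statement about $\mathcal{X}$ alone, with the correct local structure. In particular, the memory of $\Omega$ may carry correlations around $\mathcal{X}$. I would handle this by applying Uhlmann's theorem to the purification of the state just after $\mathcal{X}$, with the reference system attached, rather than to $U$ itself. Doing so ensures that the constructed maps act only on the systems allowed in \cref{eq: lem12}.
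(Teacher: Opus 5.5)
\textbf{There is a genuine gap, and it starts with how you read the hypothesis.}

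\emph{Wrong direction.} In \eqref{eq: lem11} the perfectly transmitted system enters as the \emph{middle input of $\Omega$} (after $U$) and must reappear at the \emph{upper, external output of $\cX$}. The output of $\Omega$ is discarded. This is how the lemma is used for the identity comb and the switch, where a comb input is swapped out through the slot. Write $\Omega$ as a first tooth $\mathcal{T}$ (comb input $\to$ slot input $s\otimes$ memory $\mu$) followed by $\mathcal{S}$, and let $a\to b$ be the external wires of $\cX$ and $s\to o$ its slot wires. Then $\mathcal{S}$ drops out by trace preservation, and the hypothesis only constrains $(\cX\otimes\mathrm{id}_\mu)\circ(\mathrm{id}_a\otimes\mathcal{T})$. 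You route the information the opposite way, from an input of $\cX$ through the slot into the decoder $D$. So your Step 3 aims at ``$D$ inverts what $\cX$ sends into the slot'', which is neither given nor needed.

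\emph{Wrong target.} \eqref{eq: lem12} is not a recoverability statement. It asserts an embedding comb $\omega$ with $\Omega[\cX]=\Omega[\omega[\SWAP]]$ as full channels. One must therefore also reproduce how the external input $a$, which the \SWAP feeds into the comb, affects $o$. Your plan never constructs the two teeth of $\omega$ and never checks this identity.

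\emph{Step 2 is too weak.} The left-hand side has inputs other than the transmitted one (the input $a$ of $\cX$ and the lower input), and these are traced out on the right-hand side. Decoupling then gives $\mathrm{id}_{m\to b}$ tensored with a \emph{channel} from these side inputs into the environment, not a fixed $\ket{\phi}_F$. The dependence on $a$ is exactly what the second tooth of $\omega$ must reproduce. This is why the paper needs the side-input decoupling Lemmas~\ref{prop0_2} and~\ref{cor0}. It also absorbs the lower input and $U$ into a new comb (using half of a $\Phi^+$) and undoes them at the end via Choi--Jamio\l{}kowski and $U^\dagger$.

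\textbf{The missing idea is the left-invertibility of $\mathcal{T}$.}
\begin{itemize}
\item For any fixed state $\sigma$ on $a$, the map $\Lambda=\tr_o\,\cX(\sigma\otimes\cdot)$ acts on $s$ alone and inverts $\mathcal{T}$ on the middle input $m$.
\item Take a unitary Stinespring dilation $V$ of $\Lambda$ with ancilla $\tau$. Lemma~\ref{prop1} then gives $V(\mathcal{T}(\rho_m)\otimes\tau)V^\dagger=\rho_m\otimes\xi$, with $\xi$ (on the environment and $\mu$) independent of $\rho_m$. In particular, the comb memory carries no information about $m$.
\item Apply the side-input decoupling lemma to $\cX\circ(\mathrm{id}_a\otimes\tr_{\rm anc}V^\dagger)$. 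It shows that $b$ receives exactly this copy of $m$, while $(o,\mu)$ depend only on $(a,\xi)$.
\item This yields $\omega_1=V(\cdot\otimes\tau)$ and $\omega_2=\tr_b\circ\cX\circ\bigl(\mathrm{id}_a\otimes\tr_{\rm anc}V^\dagger(\tau'\otimes\cdot)V\bigr)$ for any state $\tau'$. Composing with $\mathcal{S}$ gives \eqref{eq: lem12}.
\end{itemize}
The conclusion genuinely needs this structure of $\mathcal{T}$: an $\cX$ that acts as a \SWAP when a flag prepared by $\mathcal{T}$ is $0$, and as the identity otherwise, satisfies the hypothesis but is not of the form $\omega[\SWAP]$ on its own. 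Your Stinespring/Uhlmann toolkit could carry out the same steps, but only if you cut the circuit \emph{before} $\cX$ (to extract the copy of $m$ from $s$) and keep the side inputs throughout.
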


\noindent The proof of the lemma, presented in Appendix~\ref{app:lemmas}, is based on the following intuition. Eq.~\eqref{eq: lem11} states that the channel on the left hand side perfectly transfers the middle input system to the upper output system. But since the only causal connection between these two systems goes through the channel $\mathcal{X}$, then $\mathcal{X}$ must somehow contain a \SWAP gate, i.e.~it must guarantee the decomposition of this channel in Eq.~\eqref{eq: lem12} for some  $\omega$. 

As we will see with the examples of sections~\ref{sec: 1-comb} and~\ref{sec:self_test_QS},  the lemma is very useful for translating between the rigidity statements in Eqs.~\eqref{diag: Choi chanel2} and~\eqref{eq:st up to w}. This is easier to understand through the examples, however the general idea can be explained already. Consider a Choi channel $\mathcal{J}(\bar{W})$ in Eq.~\eqref{diag: Choi chanel2}, identified by the local injection and extraction maps. Suppose also that with a unitary channel $U$ on the lower two input systems, one can identify a subsystem of these, which the channel $\mathcal{J}(\bar{W})\circ U$ swaps to the upper output system. This condition is equivalent to the premise of the \cref{lemma1} (Eq.~\ref{eq: lem11}). Hence, the lemma guarantees that the channel $\cA_1$ inserted in the upper slot of the process, can be replaced by $\omega_1[\SWAP]$ for some embedding comb $\omega_1$.
This is precisely the logical step required to go from the rigidity statement of Eq.~\eqref{diag: Choi chanel2} to that of Eq.~\eqref{eq:st up to w}; the argument can then be repeated for all slots.

%%%%%%%%%%%%%%%%%%
\subsection{Second approach: Stronger rigidity exploiting network connectivity assumptions}
\label{sec:ST_approach2}
%%%%%%%%%%%%%%%%%%

In the previous subsection we have formalized the notion of a supermap rigidity up to local embedding combs. We argued that this is the best we can hope for in  experiments where the supermap is combined with arbitrary channels and each channel is treated as a single black box. 

We will now show that a sharper notion of supermap self-testing can be achieved  in scenarios where the {\it network connectivity} of the underlying experiment allows one to identify multiple black boxes interacting independently with the input and output systems of each slot.
Indeed, recall that to deduce a self-test as in Eq.~\eqref{diag: Choi chanel2} of the Choi channel associated with a supermap, the latter is combined with bipartite \SWAP gates in all of its slots. A \SWAP gate is not a generic bipartite channel, but can be simply viewed as a specific {\it wiring} between the input and output systems, to the same extent as a bipartite identity channel is represented by two independent wires in the network diagram. Thus, it is in practice possible to devise an experiment whose network structure can be safely assumed to satisfy the independence of the wires inside the \SWAP gates. In this case, one starts with a guarantee that all these operations have the following physical realization: 
\begin{align}\label{eq: SWAP wiring}
\begin{aligned}\includegraphics[height=1.25
cm]{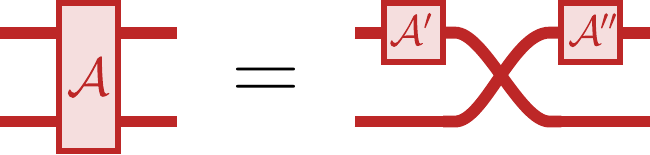}\ \ ,
\end{aligned}
\end{align}
where $\mathcal{A}'$ and $\mathcal{A}''$ identify the uncharacterized parts of channel $\mathcal{A}$, i.e.~they are black boxes.
With this identity in hand, we can immediately rewrite the rigidity of the Choi channel of a generic supermap, Eq.~\eqref{diag: Choi chanel2}, in the following form:
\begin{align}
\begin{aligned}\includegraphics[height=3.75
cm]{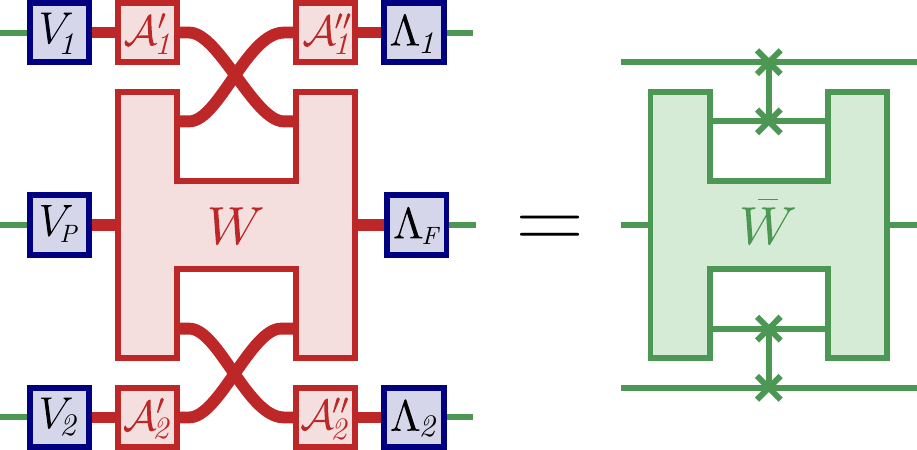}\ .
\end{aligned}
\end{align}
Next, by absorbing the unknown channels $\mathcal{A}_{1(2)}'$ and $\mathcal{A}_{1(2)}''$ into the local injection  $V_{1(2)}$ and extraction $\Lambda_{1(2)}$ maps and invoking once again the supermap-channel isomorphism of \cref{diag: Choi chanel}, we immediately obtain
\begin{equation}\label{eq: Rig up to maps}
       \begin{aligned}
           \includegraphics[height=2.2
cm]{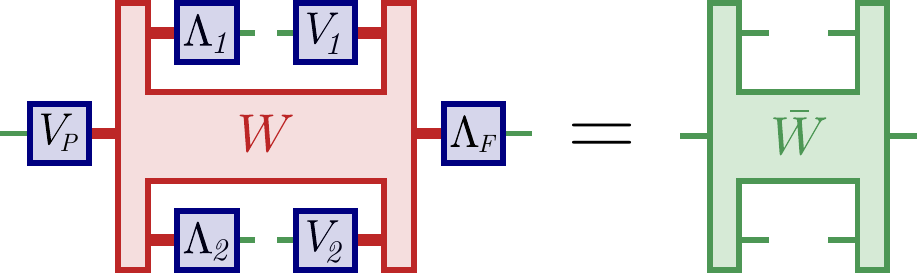} \ ,
       \end{aligned}
  \end{equation}
defining now the notion of {\bf supermap rigidity up to local maps}, in direct analogy with the case of states, measurements and channels we discussed in the background section. Remark again that this naturally generalizes to arbitrary supermaps. 

In summary, we have demonstrated the following general implication: if the network connectivity underlying the experiment is such that the physical realization of the \SWAP gates can be assumed to have the wiring structure of Eq.~\eqref{eq: SWAP wiring}, the rigidity of the Choi channel in Eq.~\eqref{diag: Choi chanel2} automatically implies the rigidity of the supermap  up to local maps in Eq.~\eqref{eq: Rig up to maps}. Thus, in this setting the task of supermap self-testing reduces to that of channel self-testing. 

\section{Examples} \label{sec:examples}

In this section, we present four examples of self-testing statements of quantum supermaps. We begin with the certification of one-combs, showing in~\cref{sec: 1-comb} how the identity comb can be certified using our first approach, up to a local embedding comb and proceeding in~\cref{sec:ECcomb} to self-test a non-trivial one-comb -- a bit-flip error correcting comb -- using our second approach, up to local maps. We then show in~\cref{sec:self_test_Grover} how to self-test (again using our second approach) an $M$-slot comb corresponding to Grover's algorithm. Finally, in~\cref{sec:self_test_QS}, we show, using both approaches, how to self-test the quantum switch, a two-slot comb that allows for coherent superposition of ordering of two quantum channels. \\

\subsection{Self-testing the identity 1-comb up to local embedding combs}

\label{sec: 1-comb}

As the first example of supermap self-testing, we consider the simple ``identity 1-comb'' $\Bar{W}_{\text{Id}}$, which transforms any channel into itself $\Bar{W}_{\text{Id}}[\mathcal{A}] = \mathcal{A}$, and is thus associated with the Choi channel $\mathcal{J}(\bar{W}_{\text{Id}}) = \SWAP$, i.e.
\begin{align} \label{eq: choi swap 1}
\begin{aligned}\includegraphics[height=2cm]{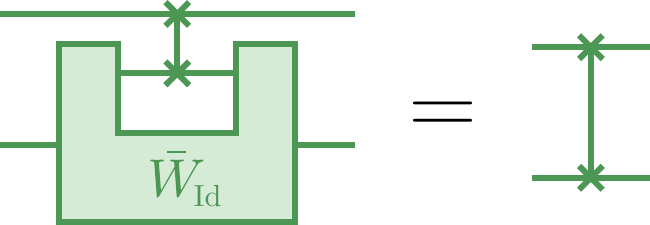} \ \  .
\end{aligned}
\end{align}
As discussed in~\cref{sec:ST_approach1}, all 1-slot combs with single input and output registers can be prepared from the identity comb, making it worth consideration.

A self-test of the \SWAP channel can be easily constructed by combining known self-tests of maximally entangled states, as detailed in Appendix~\ref{app:self_testing_SWAP}. Indeed, as discussed in~\cref{sec:self-test_channels}, a rigidity statement of this channel can be obtained by showing that both Eqs.~(\ref{selftesting_channels_step1}) and (\ref{selftesting_channels_step3}) hold with the same extraction maps on the ``channel-free'' systems. Concretely, to self-test the \SWAP channel it suffices to combine self-tests of the reference Choi state $\mathcal{C}(\SWAP)$ and $(\Phi^+)^{\otimes 2}$, i.e. 
\begin{align}
\begin{aligned}
    \includegraphics[height=2.25cm]{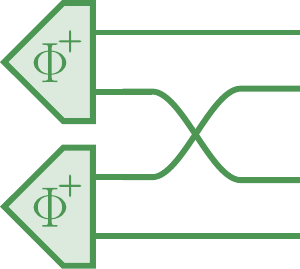} \  \quad \text{and}\quad \ \includegraphics[height=2.25cm]{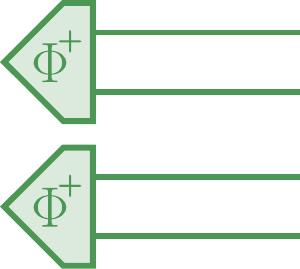}\ \ , 
\end{aligned}
\end{align}
obtained with the same measurements on the external wires. For instance, in the case where green wires represent qubits, this can be done by combining CHSH self-tests~\cite{Kaniewski16} of the maximally entangled two-qubit states on the appropriate systems. Using the same tests guarantees that the extraction maps are identical in both. For arbitrary dimension, see self-tests of maximally entangled qudits provided in e.g.~\cite{coladangelo2016parallel,Sarkar21,Meyer25}.

This leaves us with the following identity valid for any physical realization of the experiment: 
\begin{align}\label{st channel Wid}
\begin{aligned}\includegraphics[height=2cm]{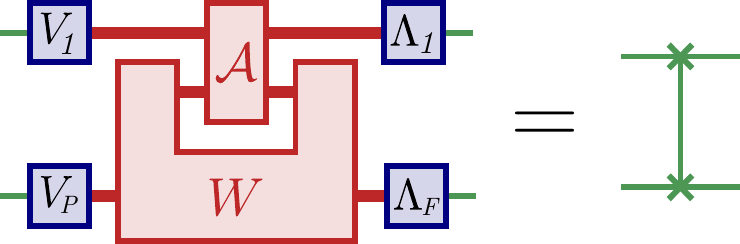}\ \  .
\end{aligned}
\end{align}
On the left hand side, let us apply the following identification of the operations $\mathcal{X}$ and $\Omega$
\begin{align}\label{eq: decomp Omega}
\begin{aligned}\includegraphics[height=3cm]{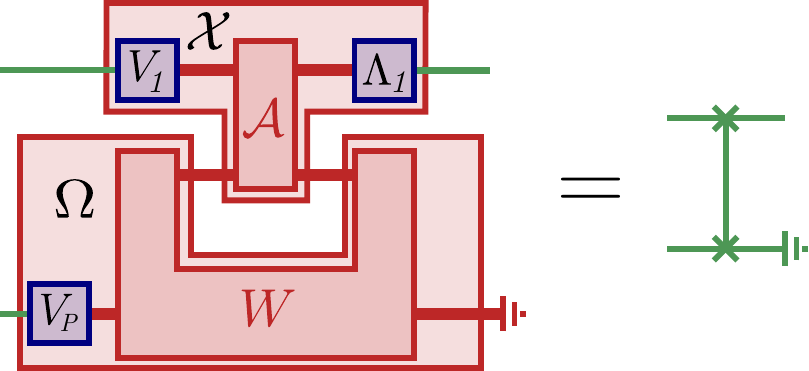}\ \ ,
\end{aligned}
\end{align}
which allows us to apply Lemma~\ref{lemma1} (with a trivial lower system and a trivial $U$) in order to obtain that~\cref{eq: lem12} holds. Plugging this back into \cref{st channel Wid}, and using \cref{eq: choi swap 1}, we obtain
\begin{align}
\begin{aligned}\includegraphics[height=2.25
cm]{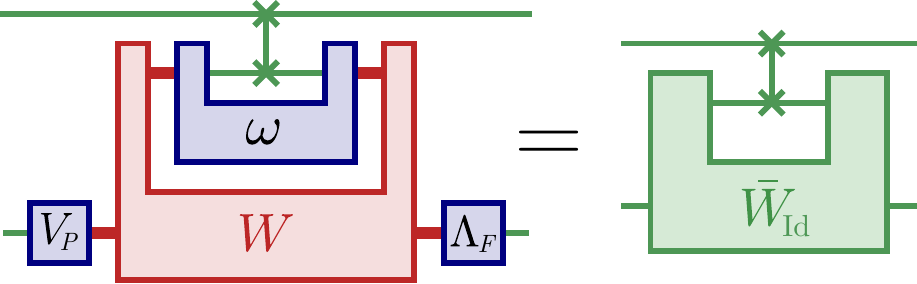} \ \ ,
\end{aligned}
\end{align}
which is equivalent to the rigidity statement given in Eq.~\eqref{eq:st up to w} by virtue of the supermap-channel isomorphism of \cref{diag: Choi chanel}, as desired.

\subsection{Self-testing an error-correcting comb up to local maps}
\label{sec:ECcomb}

As a second example of supermap self-testing, we consider the error-correction comb $\bar W_{\rm EC}$ defined as
\begin{align}\label{eq:EC_comb}
\begin{aligned}
    \includegraphics[height=2cm]{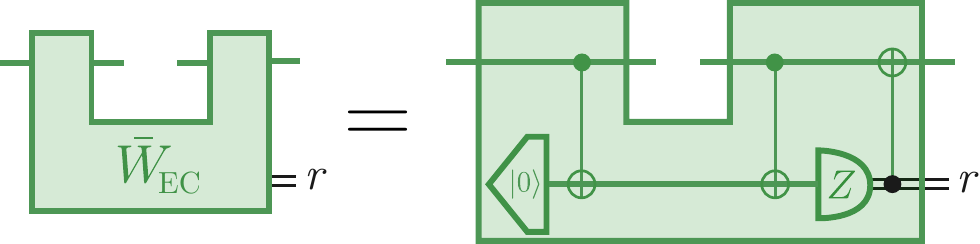} \ \ ,    
\end{aligned}
\end{align}
which also outputs a binary classical register $r\in\{0,1\}$. The comb $\bar W_{\rm EC}$ detects if a bit-flip error ($X$-type) occurs on the transmitted qubit, corrects the error if it occurs, and outputs the value $r$ indicating if the correction was performed. This comb plays an important role, for instance in error-correction assisted quantum metrology~\cite{dur2014improved,sekatski2017quantum} and in channel cloning~\cite{sekatski2025cloning}.

The Choi channel of the error-correcting comb $\mathcal{J}(\Bar{W}_{\text{EC}})=\Bar{W}_{\text{EC}}[\SWAP]$ corresponds to a quantum instrument, due to the classical register $r$. Accordingly, the task is to combine the self-tests of the quantum-classical state $\mathcal{C}(\mathcal{J}(\Bar{W}_{\text{EC}}))$ of four qubits plus a classical register, and of the four-qubit state $(\Phi^+)^{\otimes 2}$, prepared in the experimental rounds where the comb was not applied. With straightforward algebra one finds these states to be 
\begin{align}
\begin{aligned}
    \includegraphics[height=2.25cm]{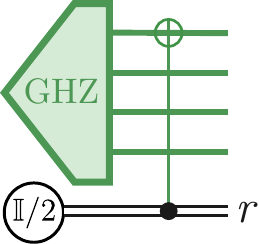}\  \quad \text{and}\quad \includegraphics[height=2.25cm]{FigSec4/1comb_state2.pdf} \ \ ,
\end{aligned}
\end{align}
where on the left the classical register is prepared as the uniform mixture $\mathds{I}/2=\frac{1}{2}(\ketbra{0}+\ketbra{1})$, the four qubits start in the state ${\rm GHZ}=\frac{1}{2}\sum_{i,j=0,1}\ket{iiii}\bra{jjjj}$ and the bit-flip gate, controlled on the classical register, acts on the other half of the singlet sent into the slot.
In Appendix~\ref{app:self_testing_EC_comb}, we show that the state $\mathcal{C}(\mathcal{J}(\Bar{W}_{\text{EC}}))$ can be self-tested using Bell inequalities tailored to GHZ states introduced in~\cite{Barizien24} and a symmetry argument to take into account the extra bit-flip gate in the $r=1$ branch. Crucially, this can be done while keeping the same measurements, and thus extraction maps, on the ``channel-free'' (external) systems as for the CHSH tests that self-test $(\Phi^+)^{\otimes 2}$.

This allows us to obtain a rigidity statement, in the form of ~\cref{selftesting_channels}, for the Choi channel $\mathcal{J}(\Bar{W}_{\text{EC}})$ of the EC-comb. Under the strong network connectivity assumption, this is sufficient to provide a rigidity statement of the form~\cref{eq: Rig up to maps}, i.e.
\begin{align}
    \begin{aligned}
        \includegraphics[height=1.5cm]{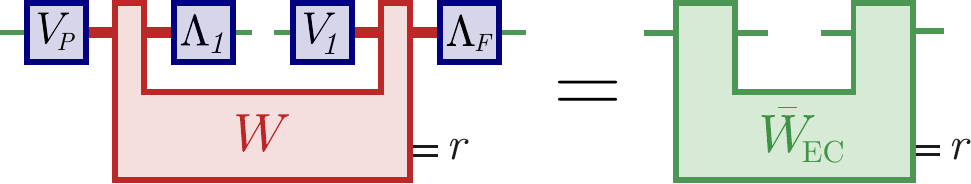} \ .
    \end{aligned} 
\end{align}

%%%%%%%%%%%%%%%%%%%%%%%%%%%%%%%%%%%%%%%%%%%%%%%%%%%%%%%%
\subsection{Self-testing Grover's algorithm}
\label{sec:self_test_Grover}
%%%%%%%%%%%%%%%%%%%%%%%%%%%%%%%%%%%%%%%%%%%%%%%%%%%%%%%%

Grover's algorithm~\cite{grover1996fast} is a quantum search algorithm that provides the optimal quadratic speedup~\cite{zalka1999grover} over classical search methods for unstructured databases. For a database with $N=2^n$ elements the algorithm operates on a quantum systems consisting of $n$ qubits. It is an oracle algorithm, where each of the $M=O(\sqrt{N})$ queries of the database is described by the application of an unknown, but fixed, unitary channel $U_{\bm x}[\cdot]= (\id - 2\ketbra{\bm x})\, \cdot \, (\id - 2\ketbra{\bm x})$ marking the $n$-bitstring $\bm x$ that has to be found. After $M$ queries of the unitary $U_{\bm x}$, the algorithm outputs an estimator $\hat {\bm x}$  satisfying ${\rm Pr}(\hat {\bm x}=\bm x)=1-O(N^{-1})$. Grover's algorithm can be represented by a quantum supermap, or comb, with $M$ slots and a classical output $\hat {\bm x}=\hat x_1 \hat x_2\dots \hat x_n$
\begin{align} 
\begin{aligned}
   \includegraphics[height=1.9cm]{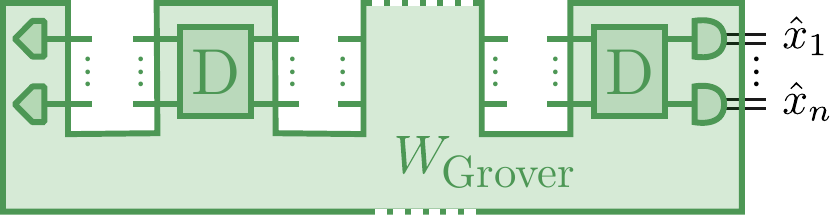} .
\end{aligned}
\end{align}
Here the qubit sources each prepare the state $\ketbra{+}{+}=\frac12(\ket{0}+\ket{1})(\bra{0}+\bra{1})$, the final measurements are in the computational basis $\{\ket{0},\ket{1}\}$, and 
\begin{equation}\label{eq: diffusion}
{\rm D}[\,\cdot\,] =(2 \ketbra{+}^{\otimes n}-\id) \cdot (2 \ketbra{+}^{\otimes n}-\id)
\end{equation}
is the Grover diffusion operation (a unitary $n$-qubit channel) applied after each slot.

To self-test Grover's algorithm, we self-test the reference comb $\bar W$ which corresponds to $W_{\rm Grover}$, where the input sources, respectively the final measurements, have been moved outside, and replaced by $n$ input, respectively $n$ output, qubit wires. Its physical realization $W$, depicted in red in Eq.~\eqref{eq: grover rigid},  has $n$ inputs, $M$ slots with $n$ inputs and outputs each, and $n$ outputs.  For simplicity, we consider its self-testing under the stronger network connectivity assumptions discussed in Sec.~\ref{sec:ST_approach2}. More precisely,  we assume that all the $2n$-system channels $\mathcal{A}_i$ inserted in each slot of the comb are wired like $n$ independent \SWAP operations
\begin{align}\label{eq: grover wiring}
\begin{aligned}
   \includegraphics[height=2.2cm]{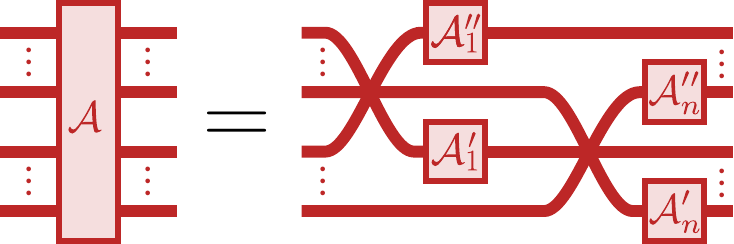}.
\end{aligned}
\end{align}

Following the discussion of Sec.~\ref{sec:ST_approach2}, we know that the rigidity of the supermap $W$ up to local maps 
\begin{widetext}
    \begin{align}\label{eq: grover rigid}
\begin{aligned}
   \includegraphics[width=\textwidth]{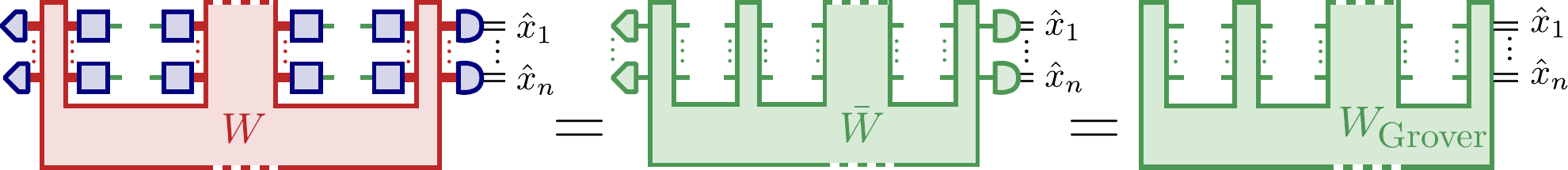}
\end{aligned}
\end{align}
\end{widetext}
 can be guaranteed, if one self-tests the Choi channel $\mathcal{J}(\bar W)$ obtained by combining the comb with $n \times M$ qubit \SWAP gates. To do so, notice that the target comb $\bar W$ has no memory and applies the same $n$-qubit Grover diffusion channel $\rm D$  after every slot. Therefore, its Choi channel $\mathcal{J}(\bar W)$ is product:
\begin{align}\label{eq: grover Choi}
\begin{aligned}
    \includegraphics[height=3cm]{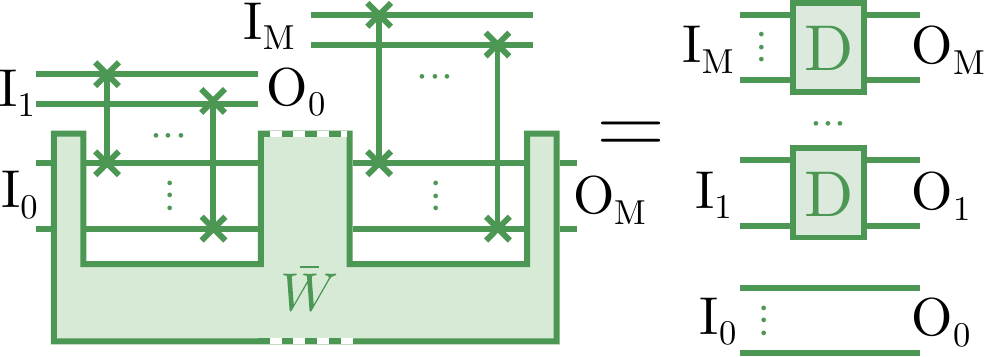}.
\end{aligned}
\end{align}
Consequently the Choi state $\mathcal{C}(\mathcal{J}(\Bar{W}))$ is also the product of the state 
\begin{align}
\begin{aligned}
    \includegraphics[height=2cm]{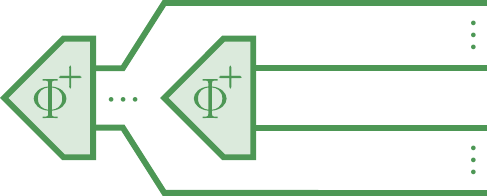}
\end{aligned}
\end{align}
corresponding to the wires $\rm{I}_0\rm{O}_0$ left untouched by $\mathcal{J}(\bar W)$, and $M$ states 
\begin{align}
\begin{aligned}
    \includegraphics[height=2cm]{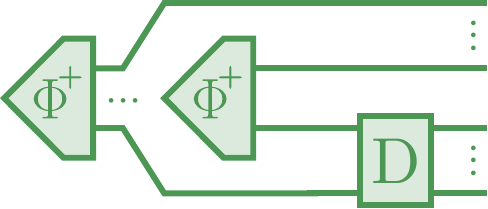}
\end{aligned}
\end{align}
 corresponding to the wires $\rm{I}_i\rm{O}_i$ with $i=1,\dots,M$. The remaining challenge is thus to self-test the pair of $2n$-qubit states $(\Phi^{+})^{\otimes n}$ and $({\rm id}\otimes {\rm D})[(\Phi^{+})^{\otimes n}]$ using the same measurements on the auxiliary systems. In  Appendix~\ref{sec:STGrover}, we show that this can be done based on the maximal violation of appropriate Bell inequalities. The former utilizes parallel self-testing with the CHSH inequality~\cite{Supic20}; while for the latter we construct new Bell inequalities tailored to $({\rm id}\otimes {\rm D})[(\Phi^{+})^{\otimes n}]$ by leveraging the method introduced in~\cite{Barizien24}. This method relies on finding a complete set of nullifying operators for the target state, which can be constructed here by applying the unitary channel ${\rm id}\otimes {\rm D}$ to nullifying operators of $(\Phi^{+})^{\otimes n}$.

%%%%%%%%%%%%%%%%%%%%%%%%%%%%%%%%%%%%%%%%%%%%%%%%%%%%%%%%
\subsection{Self-testing the quantum switch}
\label{sec:self_test_QS}
%%%%%%%%%%%%%%%%%%%%%%%%%%%%%%%%%%%%%%%%%%%%%%%%%%%%%%%%

The quantum switch is a two-slot quantum supermap which (in the version we consider here) has two quantum inputs and two quantum outputs~\cite{Chiribella13}. These two systems at both the input and output are referred to as the target ($T$) and control ($C$) systems. The latter is a qubit that controls the order in which the operations $\mathcal{A}_1$ and $\mathcal{A}_2$ plugged into the two slots are to be performed on the target system entering in the state $\rho_T$: if  the control qubit is in state $\ketbra{0}_C$, the output target system is $(\mathcal{A}_1 \circ \mathcal{A}_2)[\rho_T]$, while if the control is in state $\ketbra{1}_C$, the output target system is $(\mathcal{A}_2 \circ \mathcal{A}_1)[\rho_T]$. Crucially, the control qubit can be in a superposition, thus leading to a superposition of orders in which the operations are applied.  

The capabilities of the quantum switch as a resource in various tasks~\cite{chiribella12,araujo14,costa26}, as well as recent debate about the status of physical implementations of quantum switch~\cite{rozema24}, means that certifying the causal indefiniteness of the quantum switch is an important problem. In contrast to some causally indefinite supermaps~\cite{Oreshkov12}, however, the correlations generated by the parties in the quantum switch cannot violate any so-called ``causal inequality''~\cite{Branciard16,oreshkov16} and can hence be given a causal description. As a result, approaches to certifying its causal indefiniteness require some additional assumptions, typically by relaxing the black-box assumption (at least partially) of the parties' operations~\cite{araujo15,branciard16a,bavaresco19} or, e.g., providing them with trusted quantum, instead of classical, inputs~\cite{dourdent22}.

Recently, however, it was shown that one can certify the causal indefiniteness of the quantum switch device-independently if one assumes a certain network structure. In particular, by considering additional parties who are assumed to be non-signaling with respect to the parties in the quantum switch, one can violate some inequalities derived under assumptions that the causal (signaling) structure is determined by a hidden variable obeying this network structure~\cite{gogioso23,van-der-lugt23,van-der-lugt23a}. A conceptually related but different type of device-independent (but theory dependent) certification was obtained in Ref.~\cite{Dourdent23} by self-testing the quantum inputs used in the semi device-independent certification of~\cite{dourdent22}.

Here we propose to self-test directly the quantum switch (with a qubit target system), using the approaches we introduced for general supermaps. This not only guarantees that a test process  must be causally indefinite, but further implies that it must be precisely identical to the quantum switch, at least up to a restricted set of local operations.

For this, we start by self-testing the quantum switch's Choi channel, obtained by plugging a \SWAP into each slot. The latter channel
\begin{align}\label{eq: choi swap}
    \begin{aligned}
        \includegraphics[height=3cm]{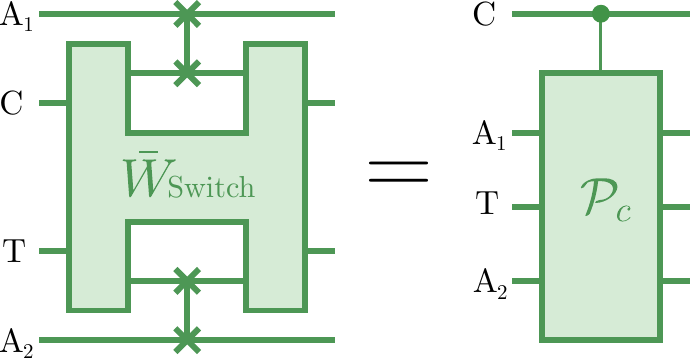} \ \ 
    \end{aligned}
\end{align}
is a permutation $\mathcal{P}_c$ of the $A_1TA_2$ systems, coherently controlled by the value of the control qubit $\ket{c}_C$. More precisely, one finds $\mathcal{P}_0: \ket{ijk}_{A_1TA_2} \mapsto \ket{jki}_{A_1TA_2}$ and $\mathcal{P}_1: \ket{ijk}_{A_1TA_2} \mapsto \ket{kij}_{A_1TA_2}$. 

For concreteness, let us now choose the systems $A_1$, $T$ and $A_2$ as qubits, and show how to derive the rigidity of the Choi channel in Eq.~\eqref{eq: choi swap}, by combining the self-tests of the states $(\Phi^+)^{\otimes 4}$ and $\mathcal{C}(\mathcal{J}(\Bar W_{\text{Switch}}))$. To self-test the four Bell pairs it is sufficient to combine four CHSH self-tests. To self-test the Choi state of the channel, remark that when the control qubit is measured in the computational basis, revealing the values $c$, the channel $\tr_C \left(\ketbra{c}_C \mathcal{J}(\Bar W_{\text{Switch}})\right)$ realizes a fixed permutation $\mathcal{P}_c$ of the systems $A_1TA_2$. The Choi states for these permutations can also be self-tested by combining three CHSH tests between appropriate systems. The remaining challenge is to show that the two permutations are performed in a superposition coherently controlled by the qubit $C$. This can be done by performing another CHSH test of the control qubit with its Choi partner, conditional on the systems $A_1TA_2$ being projected on the same states, thus erasing the which-permutation information; see Appendix~\ref{app:self_testing_channel} for the full details.

From the rigidity of the Choi channel, one can get that
\begin{align}\label{eq:STJC_Switch}
    \begin{aligned}
        \includegraphics[height=3.25cm]{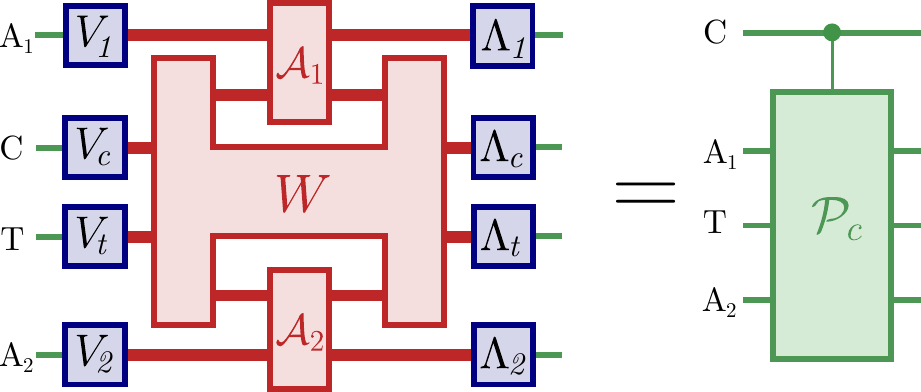}\ . 
    \end{aligned}
\end{align}
Again, we argued in~\cref{sec:ST_approach2} that this is enough to get a rigidity statement \cref{eq: Rig up to maps} up to local maps when leveraging network connectivity assumptions and considering independent access to the input and output of each slot, i.e.~that
\begin{align}
    \begin{aligned}
        \includegraphics[height=2.25cm]{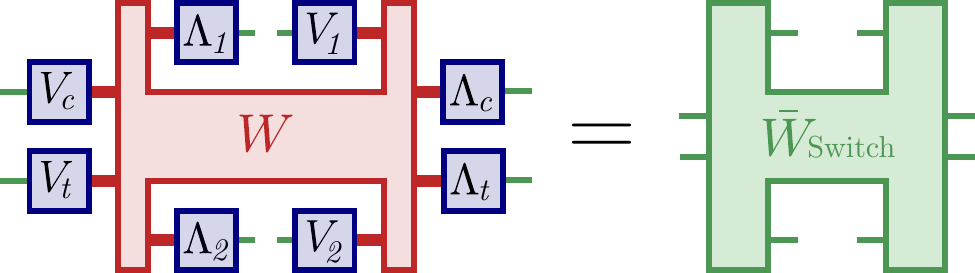} \ .
    \end{aligned}
\end{align}

In the one-box per slot approach, a rigidity statement can still be obtained. Starting from~\cref{eq:STJC_Switch}, our goal is to apply \cref{lemma1} in order to prove the rigidity up to local embedding combs. For this, we first need to identify a unitary $U$ that allows us to extract a \textsc{Swap}: this is done by noting that
\begin{align}
    \begin{aligned}
        \includegraphics[height=3cm]{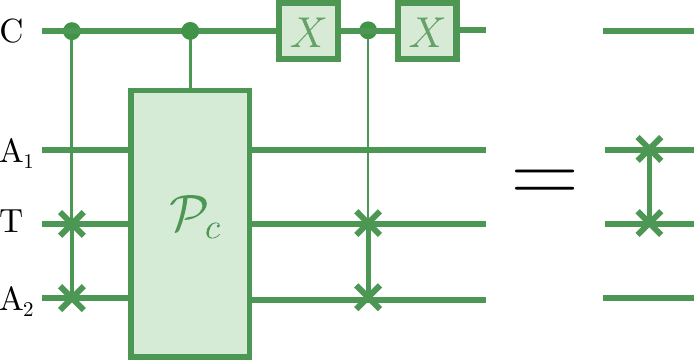} \ \ .
    \end{aligned}
\end{align}
Thus, applying the controlled-\SWAP gate  on the input systems $CTA_2$ in the Choi channel rigidity (Eq.~\ref{eq:STJC_Switch}), allows us to get the  following identity
\begin{align}
    \begin{aligned}
        \includegraphics[height=4.15cm]{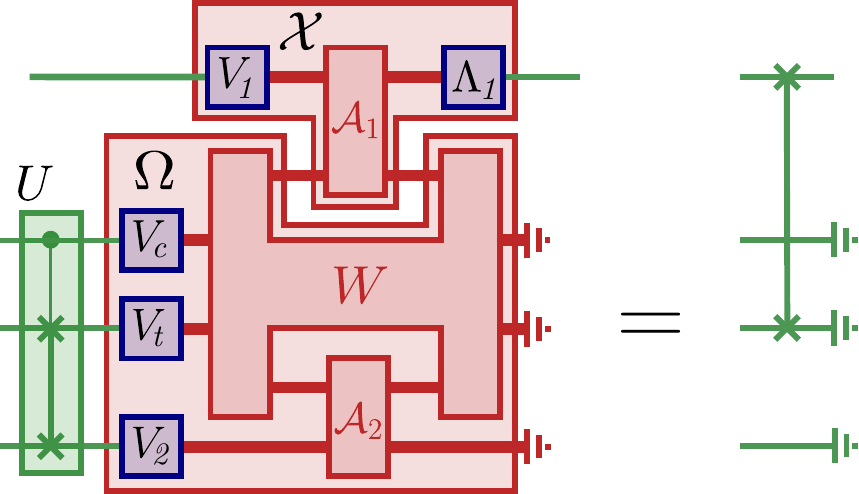} .
    \end{aligned}
\end{align}
We can now  use \cref{lemma1} to certify that, when inserted in the comb slot, the channel $\mathcal{A}_1$ realizes a \SWAP gate up to some local embedding comb $\omega_1$. That is, we get the identity
\begin{align}
    \begin{aligned}
        \includegraphics[height=3.65cm]{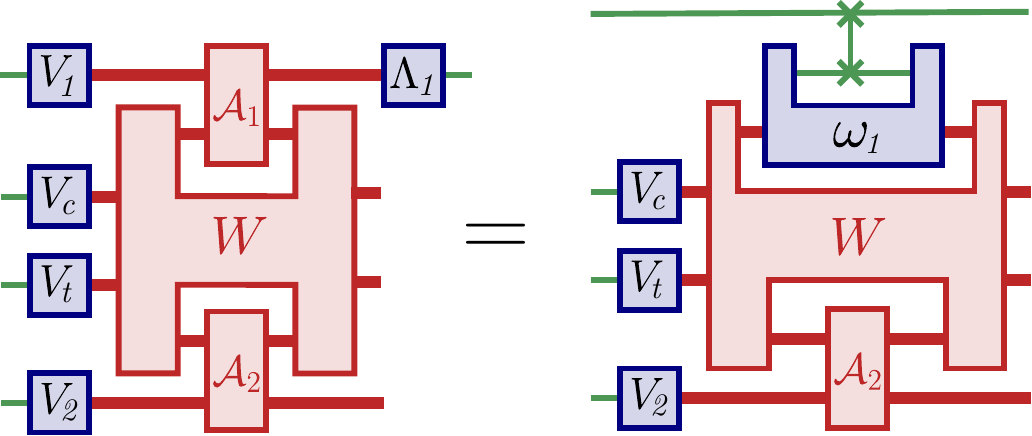}.
    \end{aligned}
\end{align}
By symmetry, one can conduct the exact same reasoning on the channel $\mathcal{A}_2$ inserted in the second slot, to finally obtain the desired identity 
\begin{align}\label{eq: ST switch comb final}
    \begin{aligned}
        \includegraphics[height=3.8cm]{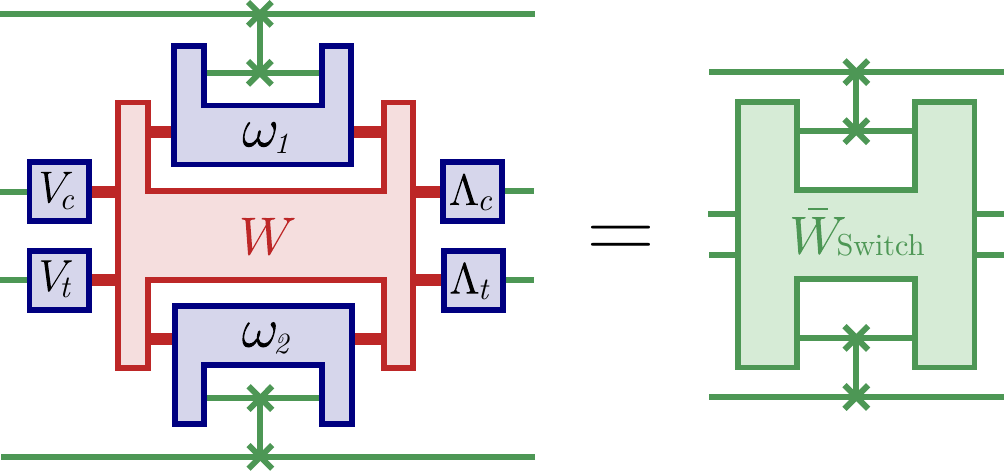} \ .
    \end{aligned}
\end{align}
Due to the supermap-channel isomorphism of \cref{diag: Choi chanel}, this is exactly the rigidity statement of \cref{eq:st up to w}, up to local embedding combs. In particular, Eq.~\eqref{eq: ST switch comb final} implies that the physical process  $W$ is causally indefinite. Indeed, local operations and embedding combs preserve causal definiteness; if $W$ on the lhs above was causally separable, then so would its composition with the blue operations, and the resulting supermap could not possibly be equal to the quantum switch. 

In conclusion, our results allow one to self-test the quantum switch under the network connectivity assumptions, i.e.~if one assumes that the experiment is well described by either of the diagrams in Fig.~\ref{fig:supermaps}. We emphasize that this assumption is a prerequisite for self-testing, which can not be used to deduce the network connectivity of the experiment in the first place. In fact, interpreting the network connectivity of experiments aiming to implement non-causal processes, such as the quantum switch, can be particularly subtle. These experiments (see e.g.~\cite{rozema24,costa26} for reviews) are typically based on photon interference. There, indefinite causal order is found when quantum systems are associated with degrees of freedom carried by a single photon, such as polarization and path.
However, when quantum systems are associated with traveling optical modes, whose state space includes the vacuum state, such experiments admit a quantum circuit description without causal indefiniteness. The latter picture is also consistent with a causally ordered notion of spacetime~\cite{vilasini2024fundamental}. Our self-testing results are not concerned by this debate, nor do they seek to resolve it.

%%%%%%%%%%%%%%%%%%%%%%%%%%%%%%%%%%%%%%%%%%%%%%%%%%%%%%%%
\section{Discussion}
\label{sec:discussion}
%%%%%%%%%%%%%%%%%%%%%%%%%%%%%%%%%%%%%%%%%%%%%%%%%%%%%%%%

In this paper, we showed how self-testing of a quantum supermap can be obtained by self-testing its Choi channel, obtained by plugging \SWAP gates in each of the supermap's slots. Specifically, we introduced two approaches leading to different kinds of rigidity statements for the supermap, depending on the network connectivity of the experiment, as pictured in~\cref{fig:supermaps}. When a single unstructured physical box is inserted in each slot, rigidity can be obtained only up to local embedding combs. 
When independent boxes interact with the input and output systems of each slot and are wired so as to realize \SWAP operations, the rigidity statements can be strengthened up to local maps.

To illustrate these notions we presented a series of examples. We first showed how to self-test the identity 1-comb, mapping each CP map onto itself, and the error-correcting 1-comb, detecting and correcting bit-flip errors. Then, we constructed a self-test for the $M$-slot comb describing Grover's search algorithm. Finally, we showed that it is also possible to self-test the quantum switch, a supermap which implements the operations inserted in the slots in a coherent superposition of causal orders.

These results highlight several fundamental aspects of the device-independent program and raise a number of open questions. 

First, our analysis underlines the central role of network connectivity in self-testing, i.e.~the causal structure of the underlying experiment. This is not specific to supermaps since the self-testing of quantum states, measurement and channel also assume a particular network structure. For instance, it is crucial in all cases that no hidden communication violate the assumed network structure.  The independence of all sources preparing either the quantum states, or of the classical measurement settings, are equally important. In the case of supermaps, the network structure may include a condition on the wiring of the channels inserted in the supermap. We underline that such assumptions on the causal structure of the experiment are not a limitation but a founding feature of any device-independent analysis of experiments, and are already present in Bell's pioneering analysis of bipartite correlations~\cite{bell1964einstein}.

Next, our discussion on self-testing of quantum channels and supermaps naturally raises the question of the generality of the proposed approach. While we believe that it could in principle provide self-tests for all extremal channels and supermaps, we leave this general question open. However, we note that there seems to be a clear path towards a provable self-testing for all channels and supermaps operating on qubits. Indeed, the possibility to self-test all pure multipartite qubit states has been established recently~\cite{balanzo2026all}, which suggests that the Choi state of any extremal multi-qubit channel is similarly certifiable. Additionally, the ad-hoc method, which allows adding new measurements to an existing self-test in order to self-test them together with other elements of the setup~\cite{Chen_2024}, suggests that Bell-pairs can be self-tested alongside the measurement used on the systems when the channel is applied. Combined together, these two elements could allow one to self-test any extremal multi-qubit channel (Sec.~\ref{sec:self-test_channels}), which would also lead to self-testing any extremal qubit supermap (Sec.~\ref{sec:ST_quantum_supermaps}). One prominent example would be to study the self-testing of the (purification of the~\cite{araujo17}) Lugano process, a unitary process violating causal inequalities~\cite{Baumeler16}.
In turn, it could also be relevant to study whether rigidity statements could be obtained based on maximal violations of causal inequalities, the analog of Bell inequalities for causal indefiniteness.

Finally, we recall that the results we presented focus on the exact case of perfect experimental statistics. Extending them to the approximate setting where noisy statistics lead to approximate rigidity statements is an important practical question. Robust self-testing for states has seen tremendous progress in the past years, see e.g.~\cite{Kaniewski16,coopmans2019robust,valcarce2022self}, and Ref.~\cite{Sekatski18} showed that this robustness can be generically lifted to self-testing of channels via the triangle inequality. Given that our method follows a similar construction, we believe that robust versions of our supermap self-tests can be derived, but leave this question for future work.

%%%%%%%%%%%%%%%%%%%%%%%%%%%%%%%%%%%%%%%%%%%%%%%%%%%%%%%%
\acknowledgments

This research was funded in part by l’Agence Nationale de la Recherche (ANR) projects ANR-15-IDEX-02 and ANR-22-CE47-0012, and the PEPR integrated project EPiQ ANR-22-PETQ-0007 as part of Plan France 2030.
For the purpose of open access, the authors have applied a CC-BY public copyright license to any Author Accepted Manuscript (AAM) version arising from this submission.

\bibliography{STSwitch}

@article{Quintino2019,
  title = {Probabilistic exact universal quantum circuits for transforming unitary operations},
  author = {Quintino, Marco T\'ulio and Dong, Qingxiuxiong and Shimbo, Atsushi and Soeda, Akihito and Murao, Mio},
  journal = {Phys. Rev. A},
  volume = {100},
  issue = {6},
  pages = {062339},
  numpages = {18},
  year = {2019},
  month = {Dec},
  publisher = {American Physical Society},
  doi = {10.1103/PhysRevA.100.062339},
  url = {https://link.aps.org/doi/10.1103/PhysRevA.100.062339}
}

@misc{gogioso23,
      title={The Geometry of Causality}, 
      author={Stefano Gogioso and Nicola Pinzani},
      year={2023},
      eprint={2303.09017},
      archivePrefix={arXiv},
      primaryClass={quant-ph},
}

@article{araujo17,
	author = {Mateus Ara\'{u}jo and Adrien Feix and Miguel Navascu{\'e}s and {\v{C}}aslav Brukner},
	doi = {10.22331/q-2017-04-26-10},
	journal = {Quantum},
	pages = {10},
	primaryclass = {quant-ph},
	title = {A purification postulate for quantum mechanics with indefinite causal order},
	volume = {1},
	year = {2017}}

@article{chiribella12,
	author = {Giulio Chiribella},
	doi = {10.1103/PhysRevA.86.040301},
	journal = {Phys. Rev. A},
	pages = {040301(R)},
	primaryclass = {quant-ph},
	title = {Perfect discrimination of no-signalling channels via quantum superposition of causal structures},
	volume = {86},
	year = {2012}}

@misc{costa26,
	archiveprefix = {arXiv},
	author = {Fabio Costa and Giulia Rubino and Cyril Branciard and {\v{C}}. Brukner and Marco T{\'u}lio Quintino},
	eprint = {2606.19438},
	primaryclass = {quant-ph},
	title = {Indefinite Quantum Causality},
	year = {2026}}

@article{chiribella08,
	author = {Giulio Chiribella and Giacomo M. D'Ariano and Paolo Perinotti},
	doi = {10.1103/PhysRevLett.101.180501},
	journal = {Phys. Rev. Lett.},
	pages = {180501},
	title = {Memory Effects in Quantum Channel Discrimination},
	volume = {101},
	year = {2008}}

@misc{dewolf2023quantum,
	author = {Ronald de Wolf},
	primaryclass = {quant-ph},
	title = {Quantum Computing: Lecture Notes},
	year = {2023},
    eprint={1907.09415},
    archivePrefix={arXiv},
    primaryClass={quant-ph},}

@article{valcarce2022self,
	author = {Valcarce, Xavier and Zivy, Julian and Sangouard, Nicolas and Sekatski, Pavel},
	doi = {doi.org/10.1103/PhysRevResearch.4.013049},
	journal = {Phys. Rev. Research},
	number = {1},
	pages = {013049},
	publisher = {APS},
	title = {Self-testing two-qubit maximally entangled states from generalized {C}lauser-{H}orne-{S}himony-{H}olt tests},
	volume = {4},
	year = {2022}}

@article{coopmans2019robust,
	author = {Coopmans, Tim and Kaniewski, J{\k{e}}drzej and Schaffner, Christian},
	doi = {doi.org/10.1103/PhysRevA.99.052123},
	journal = {Phys. Rev. A},
	number = {5},
	pages = {052123},
	publisher = {APS},
	title = {Robust self-testing of two-qubit states},
	volume = {99},
	year = {2019}}

@article{neves2025experimentally,
	author = {Neves, Simon and dos Santos Martins, Laura and Yacoub, Verena and Lefebvre, Pascal and {\v{S}}upi{\'c}, Ivan and Markham, Damian and Diamanti, Eleni},
	doi = {https://doi.org/10.1103/PRXQuantum.6.030312},
	journal = {PRX Quantum},
	number = {3},
	pages = {030312},
	publisher = {APS},
	title = {Experimentally Certified Transmission of a Quantum Message through an Untrusted and Lossy Quantum Channel via {B}ell's Theorem},
	volume = {6},
	year = {2025}}

@article{vilasini2024fundamental,
	author = {Vilasini, Venkatesh and Renner, Renato},
	doi = {10.1103/PhysRevLett.133.080201},
	journal = {Phys. Rev. Lett.},
	number = {8},
	pages = {080201},
	publisher = {APS},
	title = {Fundamental limits for realizing quantum processes in spacetime},
	volume = {133},
	year = {2024}}

@article{zalka1999grover,
	author = {Zalka, Christof},
	doi = {10.1103/PhysRevA.60.2746},
	issue = {4},
	journal = {Phys. Rev. A},
	month = {Oct},
	numpages = {0},
	pages = {2746--2751},
	publisher = {American Physical Society},
	title = {Grover's quantum searching algorithm is optimal},
	url = {https://link.aps.org/doi/10.1103/PhysRevA.60.2746},
	volume = {60},
	year = {1999}}

@inproceedings{grover1996fast,
	address = {New York, NY, USA},
	author = {Grover, Lov K.},
	booktitle = {Proceedings of the Twenty-Eighth Annual ACM Symposium on Theory of Computing},
	doi = {10.1145/237814.237866},
	isbn = {0897917855},
	location = {Philadelphia, Pennsylvania, USA},
	numpages = {8},
	pages = {212--219},
	publisher = {Association for Computing Machinery},
	series = {STOC '96},
	title = {A fast quantum mechanical algorithm for database search},
	url = {https://doi.org/10.1145/237814.237866},
	year = {1996}}

@misc{sekatski2025cloning,
	author = {Sekatski, Pavel and Guryanova, Yelena and Kothakonda, Naga Bhavya Teja and Skotiniotis, Michalis},
	eprint = {arXiv:2509.08059},
	primaryclass = {quant-ph},
	title = {Cloning Quantum Channels},
	year = {2025}}

@article{Supic23quantum,
	author = {{\v S}upi{\'c}, Ivan and Bowles, Joseph and Renou, Marc-Olivier and Ac{\'\i}n, Antonio and Hoban, Matty J.},
	copyright = {2023 The Author(s), under exclusive licence to Springer Nature Limited},
	doi = {10.1038/s41567-023-01945-4},
	issn = {1745-2481},
	journal = {Nature Phys.},
	langid = {english},
	month = may,
	number = {5},
	pages = {670--675},
	publisher = {Nature Publishing Group},
	title = {Quantum Networks Self-Test All Entangled States},
	urldate = {2024-02-05},
	volume = {19},
	year = 2023}

@article{storz2025complete,
	author = {Storz, Simon and Kulikov, Anatoly and Sch{\"a}r, Josua D. and Barizien, Victor and Valcarce, Xavier and Berterotti{\`e}re, Florence and Sangouard, Nicolas and Bancal, Jean-Daniel and Wallraff, Andreas},
	doi = {10.1103/nv7d-k3wr},
	journal = {Phys. Rev. Lett.},
	month = jul,
	number = {3},
	pages = {030801},
	publisher = {American Physical Society},
	title = {Complete {{Self-Testing}} of a {{System}} of {{Remote Superconducting Qubits}}},
	urldate = {2025-07-16},
	volume = {135},
	year = 2025}

@article{bell1964einstein,
	author = {Bell, John S},
	doi = {10.1103/PhysicsPhysiqueFizika.1.195},
	journal = {Physics},
	number = {3},
	pages = {195},
	publisher = {APS},
	title = {On the {E}instein {P}odolsky {R}osen paradox},
	volume = {1},
	year = {1964}}

@article{balanzo2026all,
	author = {Balanz{\'o}-Juand{\'o}, Maria and Coladangelo, Andrea and Augusiak, Remigiusz and Ac{\'\i}n, Antonio and {\v{S}}upi{\'c}, Ivan},
	doi = {https://doi.org/10.1038/s41467-026-70829-x},
	journal = {Nature Commun.},
	pages = {4463},
	publisher = {Nature Publishing Group UK London},
	title = {All pure multipartite entangled states of qubits can be self-tested},
	volume = {17},
	year = {2026}}

@article{coladangelo2016parallel,
	address = {Paramus, NJ},
	author = {Coladangelo, Andrea},
	issn = {1533-7146},
	issue_date = {August 2017},
	journal = {Quantum Inf. Comput.},
	month = aug,
	number = {9--10},
	numpages = {35},
	pages = {831--865},
	publisher = {Rinton Press, Incorporated},
	title = {Parallel self-testing of (tilted) {EPR} pairs via copies of (tilted) {CHSH} and the magic square game},
	url = {https://dl.acm.org/doi/abs/10.5555/3179561.3179567},
	volume = {17},
	year = {2017}}

@article{sekatski2017quantum,
	author = {Sekatski, Pavel and Skotiniotis, Michalis and Ko{\l}ody{\'n}ski, Janek and D{\"u}r, Wolfgang},
	doi = {10.22331/q-2017-09-06-27},
	journal = {Quantum},
	pages = {27},
	publisher = {Verein zur F{\"o}rderung des Open Access Publizierens in den Quantenwissenschaften},
	title = {Quantum metrology with full and fast quantum control},
	volume = {1},
	year = {2017}}

@article{dur2014improved,
	author = {D{\"u}r, Wolfgang and Skotiniotis, Michalis and Froewis, Florian and Kraus, Barbara},
	doi = {10.1103/PhysRevLett.112.080801},
	journal = {Phys. Rev. Lett.},
	number = {8},
	pages = {080801},
	publisher = {APS},
	title = {Improved quantum metrology using quantum error correction},
	volume = {112},
	year = {2014}}

@article{sekatski2023toward,
	author = {Sekatski, Pavel and Bancal, Jean-Daniel and Ioannou, Marie and Afzelius, Mikael and Brunner, Nicolas},
	doi = {10.1103/physrevlett.131.170802},
	journal = {Phys. Rev. Lett.},
	number = {17},
	pages = {170802},
	publisher = {APS},
	title = {Toward the device-independent certification of a quantum memory},
	volume = {131},
	year = {2023}}

@article{bock2024calibration,
	author = {Bock, Matthias and Sekatski, Pavel and Bancal, Jean-Daniel and Kucera, Stephan and Bauer, Tobias and Sangouard, Nicolas and Becher, Christoph and Eschner, J{\"u}rgen},
	doi = {https://doi.org/10.1038/s41534-024-00859-0},
	journal = {npj Quantum Inf.},
	number = {1},
	pages = {63},
	publisher = {Nature Publishing Group UK London},
	title = {Calibration-independent bound on the unitarity of a quantum channel with application to a frequency converter},
	volume = {10},
	year = {2024}}

@article{renou18self,
	author = {Renou, Marc Olivier and Kaniewski, J{\k e}drzej and Brunner, Nicolas},
	doi = {10.1103/PhysRevLett.121.250507},
	journal = {Phys. Rev. Lett.},
	month = dec,
	number = {25},
	pages = {250507},
	publisher = {American Physical Society},
	title = {Self-{{Testing Entangled Measurements}} in {{Quantum Networks}}},
	urldate = {2022-09-29},
	volume = {121},
	year = 2018}

@article{bancal2018noise,
	author = {Bancal, Jean-Daniel and Sangouard, Nicolas and Sekatski, Pavel},
	doi = {10.1103/PhysRevLett.121.250506},
	journal = {Phys. Rev. Lett.},
	month = dec,
	number = {25},
	pages = {250506},
	publisher = {American Physical Society},
	title = {Noise-{{Resistant Device-Independent Certification}} of {{Bell State Measurements}}},
	urldate = {2022-09-29},
	volume = {121},
	year = 2018}

@article{bavaresco2024designing,
	author = {Bavaresco, Jessica and Lipka-Bartosik, Patryk and Sekatski, Pavel and Mehboudi, Mohammad},
	doi = {10.1103/physrevresearch.6.023305},
	issn = {2643-1564},
	journal = {Phys. Rev. Research},
	month = June,
	number = {2},
	pages = {023305},
	publisher = {American Physical Society (APS)},
	title = {Designing optimal protocols in {B}ayesian quantum parameter estimation with higher-order operations},
	url = {http://dx.doi.org/10.1103/PhysRevResearch.6.023305},
	volume = {6},
	year = {2024}}

@article{Wang_2016,
	author = {Wang, Yukun and Wu, Xingyao and Scarani, Valerio},
	doi = {10.1088/1367-2630/18/2/025021},
	issn = {1367-2630},
	journal = {New J. Phys.},
	month = Feb,
	number = {2},
	pages = {025021},
	publisher = {IOP Publishing},
	title = {All the self-testings of the singlet for two binary measurements},
	url = {http://dx.doi.org/10.1088/1367-2630/18/2/025021},
	volume = {18},
	year = {2016}}

@article{Barizien25,
	author = {Barizien, Victor and Bancal, Jean-Daniel},
	doi = {10.1038/s41567-025-02782-3},
	journal = {Nature Phys.},
	number = {4},
	pages = {577--582},
	publisher = {Nature Publishing Group UK London},
	title = {Quantum statistics in the minimal {B}ell scenario},
	volume = {21},
	year = {2025}}

@article{liu23,
	author = {Liu, Qiushi and Hu, Zihao and Yuan, Haidong and Yang, Yuxiang},
	doi = {10.1103/physrevlett.130.070803},
	issn = {1079-7114},
	journal = {Phys. Rev. Lett.},
	month = Feb,
	number = {7},
	pages = {070803},
	publisher = {American Physical Society (APS)},
	title = {Optimal Strategies of Quantum Metrology with a Strict Hierarchy},
	url = {http://dx.doi.org/10.1103/PhysRevLett.130.070803},
	volume = {130},
	year = {2023}}

@article{chiribella2008transforming,
	author = {Chiribella, G. and D'Ariano, G. M. and Perinotti, P.},
	doi = {10.1209/0295-5075/83/30004},
	issn = {1286-4854},
	journal = {Europhys. Lett.},
	month = July,
	number = {3},
	pages = {30004},
	publisher = {IOP Publishing},
	title = {Transforming quantum operations: Quantum supermaps},
	url = {http://dx.doi.org/10.1209/0295-5075/83/30004},
	volume = {83},
	year = {2008}}

@article{wagner2020device,
	author = {Wagner, Sebastian and Bancal, Jean-Daniel and Sangouard, Nicolas and Sekatski, Pavel},
	doi = {10.22331/q-2020-03-19-243},
	issn = {2521-327X},
	journal = {Quantum},
	month = Mar,
	pages = {243},
	publisher = {Verein zur Forderung des Open Access Publizierens in den Quantenwissenschaften},
	title = {Device-independent characterization of quantum instruments},
	url = {http://dx.doi.org/10.22331/q-2020-03-19-243},
	volume = {4},
	year = {2020}}

@article{Barizien24,
	author = {Barizien, Victor and Sekatski, Pavel and Bancal, Jean-Daniel},
	doi = {10.22331/q-2024-05-02-1333},
	issn = {2521-327X},
	journal = {{Quantum}},
	month = may,
	pages = {1333},
	publisher = {{Verein zur F{\"{o}}rderung des Open Access Publizierens in den Quantenwissenschaften}},
	title = {Custom {B}ell inequalities from formal sums of squares},
	url = {https://doi.org/10.22331/q-2024-05-02-1333},
	volume = {8},
	year = {2024}}

@article{branciard16a,
	author = {Cyril Branciard},
	doi = {10.1038/srep26018},
	journal = {Sci. Rep.},
	pages = {26018},
	title = {Witnesses of causal nonseparability: an introduction and a few case studies},
	volume = {6},
	year = {2016}}

@article{van-der-lugt23a,
	author = {van der Lugt, Tein and Ormrod, Nick},
	doi = {10.22331/q-2024-12-03-1543},
	issn = {2521-327X},
	journal = {Quantum},
	month = Dec,
	pages = {1543},
	publisher = {Verein zur Forderung des Open Access Publizierens in den Quantenwissenschaften},
	title = {Possibilistic and maximal indefinite causal order in the quantum switch},
	url = {http://dx.doi.org/10.22331/q-2024-12-03-1543},
	volume = {8},
	year = {2024}}

@article{van-der-lugt23,
	author = {Tein {van der Lugt} and Jonathan Barrett and Giulio Chiribella},
	doi = {10.1038/s41467-023-40162-8},
	journal = {Nature Commun.},
	pages = {5811},
	primaryclass = {quant-ph},
	title = {Device-independent certification of indefinite causal order in the quantum switch},
	volume = {14},
	year = {20223}}

@article{dourdent22,
	author = {Hippolyte Dourdent and Alastair A. Abbott and Nicolas Brunner and Ivan {\v S}upi{\'c} and Cyril Branciard},
	doi = {10.1103/PhysRevLett.129.090402},
	journal = {Phys. Rev. Lett.},
	pages = {090402},
	primaryclass = {quant-ph},
	title = {Semi-Device-Independent Certification of Causal Nonseparability with Trusted Quantum Inputs},
	volume = {129},
	year = {2022}}

@article{Bancal15,
   title={Physical characterization of quantum devices from nonlocal correlations},
   volume={91},
   pages= {022115},
   ISSN={1094-1622},
   url={http://dx.doi.org/10.1103/PhysRevA.91.022115},
   DOI={10.1103/physreva.91.022115},
   number={2},
   journal={Phys. Rev. A},
   publisher={American Physical Society (APS)},
   author={Bancal, Jean-Daniel and Navascués, Miguel and Scarani, Valerio and Vértesi, Tamás and Yang, Tzyh Haur},
   year={2015},
   month=Feb }

@article{bavaresco19,
	author = {Jessica Bavaresco and Mateus Ara\'{u}jo and Brukner, {\v{C}}aslav and Quintino, Marco T\'{u}lio},
	doi = {10.22331/q-2019-08-19-176},
	journal = {Quantum},
	pages = {176},
	primaryclass = {quant-ph},
	title = {Semi-device-independent certification of indefinite causal order},
	volume = {3},
	year = {2019}}

@article{araujo14,
	author = {{Ara{\'u}jo}, M. and {Costa}, F. and {Brukner}, {\v C}.},
	doi = {10.1103/PhysRevLett.113.250402},
	journal = {Phys. Rev. Lett.},
	month = dec,
	number = 25,
	pages = {250402},
	primaryclass = {quant-ph},
	title = {{Computational Advantage from Quantum-Controlled Ordering of Gates}},
	volume = 113,
	year = 2014}

@article{Abbott24,
	author = {Abbott, Alastair A. and Mhalla, Mehdi and Pocreau, Pierre},
	doi = {10.1103/physrevresearch.6.l032020},
	issn = {2643-1564},
	journal = {Phys. Rev. Research},
	month = July,
	number = {3},
	pages = {L032020},
	publisher = {American Physical Society (APS)},
	title = {Quantum query complexity of {B}oolean functions under indefinite causal order},
	url = {http://dx.doi.org/10.1103/PhysRevResearch.6.L032020},
	volume = {6},
	year = {2024}}

@article{guerin16,
	author = {Gu{\'e}rin, Philippe Allard and Feix, Adrien and Ara{\'u}jo, Mateus and Brukner, {\v C}aslav},
	doi = {10.1103/PhysRevLett.117.100502},
	issue = {10},
	journal = {Phys. Rev. Lett.},
	month = {Sep},
	numpages = {5},
	pages = {100502},
	primaryclass = {quant-ph},
	publisher = {American Physical Society},
	title = {Exponential Communication Complexity Advantage from Quantum Superposition of the Direction of Communication},
	url = {https://link.aps.org/doi/10.1103/PhysRevLett.117.100502},
	volume = {117},
	year = {2016}}

@article{araujo15,
	author = {M. Ara\'{u}jo and C. Branciard and F. Costa and A. Feix and C. Giarmatzi and {\v{C}}. Brukner},
	doi = {10.1088/1367-2630/17/10/102001},
	journal = {New J. Phys.},
	pages = {102001},
	primaryclass = {quant-ph},
	title = {Witnessing causal nonseparability},
	volume = {17},
	year = {2015}}

@article{oreshkov16,
	author = {Ognyan Oreshkov and Christina Giarmatzi},
	doi = {10.1088/1367-2630/18/9/093020},
	isbn = {1367-2630},
	journal = {New J. Phys.},
	number = {9},
	pages = {093020},
	title = {Causal and causally separable processes},
	ty = {JOUR},
	volume = {18},
	year = {2016}}

@article{Chiribella13,
	author = {Chiribella, Giulio and D'Ariano, Giacomo Mauro and Perinotti, Paolo and Valiron, Benoit},
	doi = {10.1103/PhysRevA.88.022318},
	issue = {2},
	journal = {Phys. Rev. A},
	month = {Aug},
	numpages = {15},
	pages = {022318},
	primaryclass = {quant-ph},
	publisher = {American Physical Society},
	title = {Quantum computations without definite causal structure},
	url = {https://link.aps.org/doi/10.1103/PhysRevA.88.022318},
	volume = {88},
	year = {2013}}

@article{Oreshkov12,
	author = {Oreshkov, Ognyan and Costa, Fabio and Brukner, {\v{C}}aslav},
	doi = {10.1038/ncomms2076},
	journal = {Nature Commun.},
	pages = {1092},
	title = {{Quantum correlations with no causal order}},
	url = {https://doi.org/10.1038/ncomms2076},
	volume = {3},
	year = {2012}}

@article{Dourdent23,
	author = {Dourdent, Hippolyte and Abbott, Alastair A. and {\v S}upi{\'c}, Ivan and Branciard, Cyril},
	doi = {10.22331/q-2024-10-30-1514},
	issn = {2521-327X},
	journal = {Quantum},
	month = Oct,
	pages = {1514},
	publisher = {Verein zur Forderung des Open Access Publizierens in den Quantenwissenschaften},
	title = {Network-Device-Independent Certification of Causal Nonseparability},
	url = {http://dx.doi.org/10.22331/q-2024-10-30-1514},
	volume = {8},
	year = {2024}}

@article{Branciard16,
	author = {Cyril Branciard and Mateus Ara{\'u}jo and Adrien Feix and Fabio Costa and {\v C}aslav Brukner},
	doi = {10.1088/1367-2630/18/1/013008},
	journal = {New J. Phys.},
	month = {dec},
	number = {1},
	pages = {013008},
	publisher = {IOP Publishing},
	title = {The simplest causal inequalities and their violation},
	url = {https://dx.doi.org/10.1088/1367-2630/18/1/013008},
	volume = {18},
	year = {2015}}

@inproceedings{Magniez05,
	address = {Berlin, Heidelberg},
	author = {Magniez, Fr{\'e}d{\'e}ric and Mayers, Dominic and Mosca, Michele and Ollivier, Harold},
	booktitle = {Automata, Languages and Programming},
	editor = {Bugliesi, Michele and Preneel, Bart and Sassone, Vladimiro and Wegener, Ingo},
	isbn = {978-3-540-35905-0},
	pages = {72--83},
	publisher = {Springer Berlin Heidelberg},
	title = {Self-testing of Quantum Circuits},
	year = {2006}}

@article{Supic20,
	author = {{\v{S}}upi{\'{c}}, Ivan and Bowles, Joseph},
	doi = {10.22331/q-2020-09-30-337},
	issn = {2521-327X},
	journal = {{Quantum}},
	month = sep,
	pages = {337},
	publisher = {{Verein zur F{\"{o}}rderung des Open Access Publizierens in den Quantenwissenschaften}},
	title = {{Self-testing of quantum systems: a review}},
	url = {https://doi.org/10.22331/q-2020-09-30-337},
	volume = {4},
	year = {2020}}

@article{Mayers04,
	author = {Mayers D. and Yao, A.},
	doi = {10.48550/arXiv.quant-ph/0307205},
	journal = {Quantum Info. Comput.},
	pages = {273},
	title = {{Self testing quantum apparatus}},
	url = {https://doi.org/10.48550/arXiv.quant-ph/0307205},
	volume = {4},
	year = {2004}}

@article{Sekatski18,
	author = {Sekatski, Pavel and Bancal, Jean-Daniel and Wagner, Sebastian and Sangouard, Nicolas},
	doi = {10.1103/PhysRevLett.121.180505},
	issue = {18},
	journal = {Phys. Rev. Lett.},
	month = {Nov},
	numpages = {5},
	pages = {180505},
	publisher = {American Physical Society},
	title = {Certifying the Building Blocks of Quantum Computers from Bell's Theorem},
	url = {https://link.aps.org/doi/10.1103/PhysRevLett.121.180505},
	volume = {121},
	year = {2018}}

@article{Kaniewski16,
	author = {{Kaniewski, J.}},
	doi = {10.1103/PhysRevLett.117.070402},
	issue = {7},
	journal = {Phys. Rev. Lett.},
	month = {Aug},
	numpages = {6},
	pages = {070402},
	publisher = {American Physical Society},
	title = {Analytic and Nearly Optimal Self-Testing Bounds for the {C}lauser-{H}orne-{S}himony-{H}olt and {M}ermin Inequalities},
	url = {https://link.aps.org/doi/10.1103/PhysRevLett.117.070402},
	volume = {117},
	year = {2016}}

@article{Baumeler16,
	author = {Baumeler, {\"A}min and Wolf, Stefan},
	doi = {10.1088/1367-2630/18/1/013036},
	issn = {1367-2630},
	journal = {New J. Phys.},
	month = Jan,
	number = {1},
	pages = {013036},
	publisher = {IOP Publishing},
	title = {The space of logically consistent classical processes without causal order},
	url = {http://dx.doi.org/10.1088/1367-2630/18/1/013036},
	volume = {18},
	year = {2016}}

@article{taranto2025,
  title = {A Guide to Higher-order quantum operations},
  author = {Taranto, Philip and Milz, Simon and Murao, Mio and Quintino, Marco Túlio and Modi, Kavan},
  journal = {PRX Quantum},
  pages = {},
  year = {2026},
  month = {Jun},
  publisher = {American Physical Society},
  doi = {10.1103/92c4-g7qs},
  url = {https://link.aps.org/doi/10.1103/92c4-g7qs}
}

@article{Coladangelo_2017,
	author = {Coladangelo, Andrea and Goh, Koon Tong and Scarani, Valerio},
	doi = {10.1038/ncomms15485},
	issn = {2041-1723},
	journal = {Nature Commun.},
	month = may,
	number = {1},
	pages = {15485},
	publisher = {Springer Science and Business Media LLC},
	title = {All pure bipartite entangled states can be self-tested},
	url = {http://dx.doi.org/10.1038/ncomms15485},
	volume = {8},
	year = {2017}}

@article{Chen_2024,
	author = {Chen, Ranyiliu and Man{\v c}inska, Laura and Vol{\v c}i{\v c}, Jurij},
	doi = {10.1038/s41567-024-02584-z},
	issn = {1745-2481},
	journal = {Nature Phys.},
	month = aug,
	number = {10},
	pages = {1642--1647},
	publisher = {Springer Science and Business Media LLC},
	title = {All real projective measurements can be self-tested},
	url = {http://dx.doi.org/10.1038/s41567-024-02584-z},
	volume = {20},
	year = {2024}}

@article{Clauser69,
	author = {Clauser, John F. and Horne, Michael A. and Shimony, Abner and Holt, Richard A.},
	doi = {10.1103/PhysRevLett.23.880},
	journal = {Phys. Rev. Lett.},
	month = oct,
	number = {15},
	pages = {880--884},
	publisher = {{American Physical Society}},
	title = {Proposed {{Experiment}} to {{Test Local Hidden-Variable Theories}}},
	urldate = {2022-09-12},
	volume = {23},
	year = {1969}}

@article{Hossenfelder20,
	author = {Hossenfelder, Sabine and Palmer, Tim},
	doi = {10.3389/fphy.2020.00139},
	issn = {2296-424X},
	journal = {Frontiers Phys.},
	month = May,
	pages = {139},
	publisher = {Frontiers Media SA},
	title = {Rethinking Superdeterminism},
	url = {http://dx.doi.org/10.3389/fphy.2020.00139},
	volume = {8},
	year = {2020}}

@article{Jamiolkowski72,
	author = {A. Jamio{\l}kowski},
	doi = {https://doi.org/10.1016/0034-4877(72)90011-0},
	issn = {0034-4877},
	journal = {Reports Math. Phys.},
	number = {4},
	pages = {275-278},
	title = {Linear transformations which preserve trace and positive semidefiniteness of operators},
	url = {https://www.sciencedirect.com/science/article/pii/0034487772900110},
	volume = {3},
	year = {1972}}

@book{heinosaari11,
	author = {Heinosaari, Teiko and Ziman, M{\'a}rio},
	doi = {https://doi.org/10.1017/CBO9781139031103},
	publisher = {Cambridge University Press},
	title = {The mathematical language of quantum theory: from uncertainty to entanglement},
	year = {2011}}

@article{rozema24,
	author = {Rozema, Lee A. and Str{\"o}mberg, Teodor and Cao, Huan and Guo, Yu and Liu, Bi-Heng and Walther, Philip},
	doi = {10.1038/s42254-024-00739-8},
	issn = {2522-5820},
	journal = {Nature Rev. Phys.},
	month = July,
	number = {8},
	pages = {483--499},
	publisher = {Springer Science and Business Media LLC},
	title = {Experimental aspects of indefinite causal order in quantum mechanics},
	url = {http://dx.doi.org/10.1038/s42254-024-00739-8},
	volume = {6},
	year = {2024}}

@article{Sarkar21,
	author = {Sarkar, Shubhayan and Saha, Debashis and Kaniewski, J{\k e}drzej and Augusiak, Remigiusz},
	doi = {10.1038/s41534-021-00490-3},
	issn = {2056-6387},
	journal = {npj Quantum Inf.},
	month = Oct,
	number = {1},
	pages = {151},
	publisher = {Springer Science and Business Media LLC},
	title = {Self-testing quantum systems of arbitrary local dimension with minimal number of measurements},
	url = {http://dx.doi.org/10.1038/s41534-021-00490-3},
	volume = {7},
	year = {2021}}

@misc{Meyer25,
	archiveprefix = {arXiv},
	author = {Uta Isabella Meyer and Ivan {\v S}upi{\'c} and Fr{\'e}d{\'e}ric Grosshans and Damian Markham},
	eprint = {2508.01071},
	primaryclass = {quant-ph},
	title = {Robustly self-testing all maximally entangled states in every finite dimension},
	year = {2025}}

% \clearpage

%%%%%%%%%%%%%%%%%%%%%%%%%%%%%%%%%%%%%%%%%%%%%%%%%%%%%%%%
\appendix

\setcounter{lemma}{0}
\renewcommand{\thelemma}{\thesection.\arabic{lemma}}
\renewcommand{\thetheorem}{\thesection.\arabic{lemma}}

\onecolumngrid

%%%%%%%%%%%%%%%%%%%%%%%%%%%%%%%%%%%%%%%
\section{Self-testing the Choi channels} \label{app:self_testing_1comb}
\setcounter{lemma}{0}

In this section we present the self-testing results for the Choi channels of the supermaps introduced in~\cref{sec:examples} following the scheme proposed for channel self-testing presented in \cref{sec:self-test_channels} and summarized in~\cref{fig:selftesting_channel}. Namely, for each Choi channel $\mathcal{J}(\Bar W)$, we provide a self-testing result for both the Choi state of the channel $\mathcal{C}(\mathcal{J}(\Bar W))$ and the product of maximally entangled states $(\Phi^+)^{\otimes n}$ corresponding to when the channel is not applied.

\subsection{Constructing the extraction map} \label{app:extractionmap}

In the first subsection we start by presenting a general construction of a local extraction map that is tailored to the situation where the reference measurements $\bar A_{a|x}$ of a party are complementary qubit measurements with $a,x=0,1$. This map applies to all examples discussed in the main text and presented afterwards and can be used to guarantee that the Choi states with and without to the application of the are self-tested with identical extraction maps on Alice's (channel-free) side, see \cref{sec:self-test_channels}. For completeness, at  the end of the subsection we discuss how the construction generalizes to the case of arbitrary measurements in all finite dimensions.

The idea, pioneered by Mayers and Yao in their seminal work \cite{Mayers04}, consists of combining the measurement boxes of the party under consideration $A_{a|x}$ into a black-box quantum circuit that performs the extraction.  Remark that the physical measurement box described by the POVM elements $A_{a|x}$ does not have a quantum output, and has a very limited usage inside a quantum circuit. Nevertheless, for each input setting $x$ the measurements box {\it defines} a unitary channel $\hat{\mathcal{A}}_x$
     \begin{equation}
      \includegraphics[height=1cm]{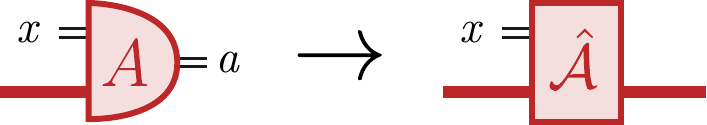}.
    \end{equation}
The latter $\hat {\mathcal{A}}_x [\cdot] =\hat A_x \, \cdot \, \hat A_x^\dag$, can be defined from the POVM elements via a regularization step\footnote{In addition the zero eigenvalues of $A_{0|x}-A_{1|x}$ are reset to 1, which we do not write explicitly.} $\hat A_x := |A_{0|x}-A_{1|x}|^{-1} (A_{0|x}-A_{1|x})$, such that the operator $\hat A_x$ has eigenvalues $\pm 1$ and the same eigenstates as $A_{0|x}$ and $A_{1|x}$. This allows one to define the extraction map $\Lambda_A$ through the circuit
  \begin{equation} \label{eq:explicitextraction}
      \includegraphics[height=2.5cm]{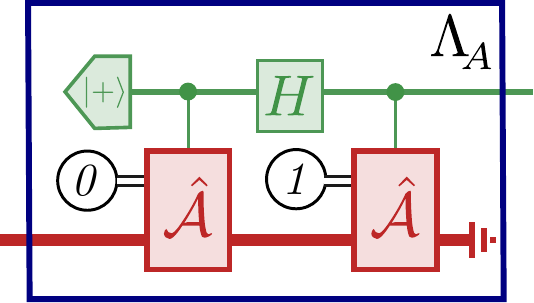},
    \end{equation}
by combining the unitaries obtained from the two measurements controlled on an auxiliary qubit (green wire). Here, this qubit is prepared in the state $\ket{+}=(\ket{0}+\ket{1})/\sqrt{2}$ and $H$ is the Hadamard gate. To understand the idea of this construction, remark that when the red wire carries a qubit and an auxiliary system and $\hat A_0$, $\hat A_1$ are complementary operators ($\{\hat A_0, \hat A_1\} = \id$) acting only on a qubit system, e.g~ $\hat A_0 = \sigma_z \otimes \id$ with $\hat A_1 = \sigma_x\otimes \id$, the above circuit swaps this qubit onto the green wire. See Section~4.3 of Ref.~\cite{Supic20} for a more detailed discussion.
 When the reference measurements are not necessarily complementary and/or act on systems of dimension $d\geq 2$, this construction can be generalized~\cite{Bancal15}.

For pedagogical reason, let us first start by only lifting the complementary assumption while keeping $d=2$. In this case, the extraction circuit of~\cref{eq:explicitextraction} does not realize a \SWAP operation even when the physical measurements are exactly the reference ones. To circumvent this, one needs to build a new operator $\hat{\mathcal{A}}_\perp$, complementary to $\hat{\mathcal{A}}_0$ which is inserted in the circuit of Eq.~\eqref{eq:explicitextraction} instead of  $\hat{\mathcal{A}}_1$. Constructing this new operator can be done using only resources accessible within the physical devices as 
\begin{align}
   \hat {\mathcal A}_\perp = |{\mathcal A}_\perp|^{-1} {\mathcal A}_\perp \qquad {\mathcal A}_\perp = \alpha_0 (A_{0|0}-A_{1|0})+ \alpha_1 (A_{0|1}-A_{1|1}).
\end{align}
Here, the regularization step guarantees that $\hat{\mathcal{A}}_\perp$ is unitary,
while $\alpha_0, \alpha_1$ are real parameters chosen such that when $\hat{\mathcal{A}}_0$, $\hat{\mathcal{A}}_1$ are the reference measurements, $\hat{\mathcal{A}}_0$ and $\hat{\mathcal{A}}_1$ are complementary, i.e.~$\{\hat{\mathcal{A}}_\perp, \hat{\mathcal{A}}_0\}=0$. There always exist such parameters as long as the reference measurements are not the same. \{Note that even though for all physical implementations the operator $\hat{\mathcal{A}}_\perp$ is not unitary  (hence the map $\Lambda_a$ is not necessarily trace-preserving), self-testing of the complete realization (state and measurements) ensures that this is the case when acting on the physical state $\rho$.\}

For high-dimensional systems, Ref.~\cite{Bancal15} provides a general construction for a \SWAP circuit, similar to the one of~\cref{eq:explicitextraction}. The basic idea follows the observation that the generalized Pauli, or Weyl, operators $Z$ and $X$ in any dimension $d$ can be used to define a high-dimensional \SWAP gate when combined in a circuit resembling to Eq.~\eqref{eq:explicitextraction}, but is not restricted to this case. In general, such a circuit relies on operators $\hat{\mathcal{A}}$ acting on the physical system, constructed as polynomials of the measurements operators $\hat{\mathcal{A}}_0$, $\hat{\mathcal{A}}_1$, i.e.~$\hat{\mathcal{A}}:=Q(\hat{\mathcal{A}}_0, \hat{\mathcal{A}}_1)$. Again, self-testing of the complete realization (state and measurements) ensures that the action of any such polynomial on the state is equal to the action of the reference measurements on the reference state. This is thus a sufficient condition to guarantee that the overall extraction circuit performs a \SWAP on the support of $\rho$ when the target experimental statistics are achieved.

\subsection{Self-testing the sources}

We start by treating the latter for all examples together. Indeed, when the local dimension is 2 (qubit case), the state $\Phi^+$ can be self-tested using the maximal violation of the CHSH inequality~\cite{Clauser69} in a bipartite Bell scenario as per the following lemma.

\begin{lemma}[Section 4 of \cite{Supic20}]
    \textbf{CHSH self-test.} \label{LemmaSelftest_CHSH} \ \\
    Let's consider a bipartite Bell scenario with binary inputs and outputs. The two parties, labeled $A$ and $B$, thus share a quantum state $\rho_{AB}$ and perform two binary outcome measurements -- $\pm1$-valued observables -- $A_x, B_y$ for $x,y \in \{0,1\}$. Let 
    \begin{equation}
        \beta_\text{CHSH}^{(AB)}=A_0B_0 + A_0B_1 +A_1B_0 - A_1B_1.
    \end{equation} 
    If the observed statistics are such that 
    \begin{equation}
        \Tr(\rho_{AB} \cdot \beta_\text{CHSH}^{(AB)})=2\sqrt{2}, 
    \end{equation} 
    then the states and measurements are self-tested (in the sense of the rigidity relations~\cref{eq: st states} and \cref{eq: st meas}) with reference realization
    \begin{align}\label{eq:refchsh}
        \begin{aligned}
            & \Phi^+ = \frac{1}{\sqrt{2}}(\ket{00} + \ket{11}),\\
            & A_0 = \sigma_z, \quad A_1 = \sigma_x, \\
            & B_y = \frac{1}{\sqrt{2}}(\sigma_z + (-1)^y \sigma_x),
        \end{aligned}
    \end{align}
    where $\sigma_z$ and $\sigma_x$ are Pauli $z$ and $x$ operators. 
\end{lemma}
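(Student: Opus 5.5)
We follow the standard sum-of-squares (SOS) route, combined with the Mayers--Yao swap isometry of \cref{eq:explicitextraction}.

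\textbf{Reduction to a pure state with projective measurements.} Since no dimension is assumed, we may purify $\rho_{AB}$ to $\ket{\psi}_{ABE}$, with the purifying system $E$ treated as a trivial spectator. By Naimark dilation, absorbed into the local spaces, we may also take all observables to be unitary and Hermitian, $A_x^2=B_y^2=\id$. The extraction maps built below act only on $A$ and $B$, so the final statement for $\rho_{AB}$ follows by tracing out $E$.

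\textbf{Step 1 (SOS).} Using only $A_x^2=B_y^2=\id$, we verify the operator identity
\begin{equation*}
2\sqrt{2}\,\id-\beta^{(AB)}_{\rm CHSH}=\frac{1}{\sqrt2}\Big[\big(A_0-\tfrac{B_0+B_1}{\sqrt2}\big)^2+\big(A_1-\tfrac{B_0-B_1}{\sqrt2}\big)^2\Big].
\end{equation*}
Maximal violation forces the expectation of each square to vanish. Hence
\begin{equation*}
A_0\ket{\psi}=Z_B\ket{\psi}, \qquad A_1\ket{\psi}=X_B\ket{\psi},
\end{equation*}
where $Z_B:=(B_0+B_1)/\sqrt2$ and $X_B:=(B_0-B_1)/\sqrt2$.

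\textbf{Step 2 (anticommutation).} Operators on different parties commute, so
\begin{equation*}
\{A_0,A_1\}\ket\psi=(A_0X_B+A_1Z_B)\ket\psi=(X_BZ_B+Z_BX_B)\ket\psi=\{Z_B,X_B\}\ket\psi .
\end{equation*}
A direct computation gives $\{Z_B,X_B\}=B_0^2-B_1^2=0$. Moreover $Z_B^2+X_B^2=2\id$ together with $Z_B^2\ket\psi=\ket\psi$ shows that $Z_B$ and $X_B$ act as involutions on the relevant vectors.

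\textbf{Step 3 (regularization).} We replace $Z_B$ and $X_B$ by their regularized unitaries $\hat Z_B,\hat X_B$, as in \cref{app:extractionmap}. The point to check is that this does not change their action on $\ket\psi$. Since $Z_B^2\ket\psi=A_0^2\ket\psi=\ket\psi$, the vector $\ket\psi$ lies in the eigenvalue-$1$ eigenspace of $Z_B^2$, so $|Z_B|\ket\psi=\ket\psi$ and $\hat Z_B\ket\psi=Z_B\ket\psi$; the same argument applies to $X_B$. Products such as $\hat X_B\hat Z_B\ket\psi$ need extra care: we move operators to the $A$ side via Step 1 and use that Alice's operators commute with Bob's.

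\textbf{Step 4 (swap isometry).} We apply the circuit of \cref{eq:explicitextraction} locally, with $(A_0,A_1)$ on Alice's side and $(\hat Z_B,\hat X_B)$ on Bob's side, each acting on a fresh ancilla qubit. Expanding the controlled operations gives a sum of terms of the form $\ket{ij}\otimes X_A^iZ_A^{(\cdot)}\ldots\ket\psi$. Each term is then rewritten using:
\begin{itemize}
\item the relations of Steps 1--3, to transfer Bob's operators onto Alice's side;
\item the anticommutation relation, to reorder Alice's operators.
\end{itemize}
After this reduction every term is proportional to $\ket{\rm junk}:=\tfrac{1}{\sqrt2}(\id+A_0)\ket\psi$, and the sum collapses to $\Phi^+\otimes\ket{\rm junk}$. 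This yields the state rigidity \cref{eq: st states}, with $\Lambda_A,\Lambda_B$ given by the circuit followed by tracing out the red wires.

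\textbf{Measurements.} Inserting $A_x$ (respectively $B_y$) before the isometry and using the same commutation rules, we obtain $\sigma_z,\sigma_x$ (respectively $(\sigma_z\pm\sigma_x)/\sqrt2$) acting on the extracted qubit. This gives \cref{eq: st meas}, with the injection map chosen as the adjoint of the extraction isometry with the junk state appended.

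\textbf{Expected obstacle.} We expect Step 4 to be the main difficulty. The non-unitarity of $Z_B$ and $X_B$ away from the support of $\rho_B$ forces every manipulation to be carried out on $\ket\psi$ rather than at the operator level, so the algebra must be kept in the order that only ever uses relations already established on the state.
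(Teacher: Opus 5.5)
Your proposal is correct and follows essentially the same route as the proof the paper relies on. The paper gives no proof of its own and defers to Section~4 of the {\v S}upi{\'c}--Bowles review, which proceeds exactly as you do: the SOS decomposition of $2\sqrt{2}\,\id-\beta^{(AB)}_{\rm CHSH}$, the state-dependent relations $A_0\ket\psi=Z_B\ket\psi$, $A_1\ket\psi=X_B\ket\psi$ and $\{A_0,A_1\}\ket\psi=0$, regularization of Bob's operators, and the Mayers--Yao swap isometry, which the paper also describes in its appendix on extraction maps.

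The one step you leave informal, the injection map, works as you suggest provided you append Alice's reduced junk state $\tau_A$ before applying the adjoint of her isometry. The resulting trace-non-increasing map must then be completed to a channel. This completion is harmless, because $\tfrac{\id}{2}\otimes\tau_A$ is supported in the range of that isometry.
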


A few notes are in order. First, note that in the case of larger dimension, similar results exist for certifying the maximally entangled state of two qudits (see e.g.~\cite{coladangelo2016parallel,Sarkar21,Meyer25}). Second, this self-test can be made robust to noise up to a CHSH violation of approximately $2.11$, see~\cite{Kaniewski16}. Finally, self-testing of the maximally entangled two qubit state is possible with other choices of reference measurements. For example, in the binary input/output Bell scenario, all such self-tests are known and given in Refs.~\cite{Wang_2016, Barizien24}. 

\cref{LemmaSelftest_CHSH} can directly be used to self-test an arbitrary product of maximally entangled two qubit state between distinct pairs of parties. 
\begin{lemma}\textbf{Simultaneous CHSH self-tests.} \label{LemmaSelftest_CHSH_parallel} \ \\
    Let's consider a $2n$-partite Bell scenario with binary inputs and outputs in which quantum states $\rho_{A^{(i)}B^{(i)}}$ are shared between pairs of parties $A^{(i)}$ and $B^{(i)}$ for $i\in\{1,...,n\}$. Suppose that for each $i$, the observed statistics are such that 
    \begin{equation}
        \Tr(\rho_{A^{(i)}B^{(i)}} \cdot \beta_\text{CHSH}^{(A^{(i)}B^{(i)})})=2\sqrt{2}, 
    \end{equation} 
    then the states and measurements are self-tested (in the sense of the rigidity relations \cref{eq: st states} and \cref{eq: st meas}) with the reference state 
    \begin{align}
        \begin{aligned}
            & \bigotimes_{i=1}^n \Phi^+_{A^{(i)}B^{(i)}},
        \end{aligned}
    \end{align}
    and reference measurements as in~\cref{eq:refchsh} for each pair of parties. 
\end{lemma}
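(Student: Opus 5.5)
The plan is to reduce the multipartite statement to $n$ independent applications of \cref{LemmaSelftest_CHSH}, and then argue that the resulting local rigidity statements glue together into a rigidity statement for the global state. Let $\rho$ be the global physical state shared by the $2n$ parties, whose marginals are $\rho_{A^{(i)}B^{(i)}}$. For each $i$, the hypothesis $\Tr(\rho_{A^{(i)}B^{(i)}}\beta_\text{CHSH}^{(A^{(i)}B^{(i)})})=2\sqrt2$ lets us invoke \cref{LemmaSelftest_CHSH} on the pair $(A^{(i)},B^{(i)})$. This yields local extraction maps $\Lambda_{A^{(i)}}$ and $\Lambda_{B^{(i)}}$ with $\Lambda_{A^{(i)}}\otimes\Lambda_{B^{(i)}}[\rho_{A^{(i)}B^{(i)}}]=\Phi^+$. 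It also yields local injection maps certifying that each party's observables are the reference ones of \cref{eq:refchsh}.

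The key point is to take these extraction maps in the explicit Mayers--Yao form of \cref{eq:explicitextraction}. Each map is then built solely from the measurement boxes of its own party, and so does not depend on which state it is applied to. This is also what guarantees, as required in \cref{sec:self-test_channels}, that the same extraction maps appear in every experiment that uses the same measurement boxes.

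The measurement part of the conclusion follows immediately, since measurement rigidity (\cref{eq: st meas}) is a per-party statement that only involves that party's marginal state. For the state, I would apply $\Lambda:=\bigotimes_{i}\Lambda_{A^{(i)}}\otimes\Lambda_{B^{(i)}}$ to the global $\rho$ and call the output $\sigma$, a state on $n$ pairs of qubits. Tracing out every pair except pair $i$ in $\sigma$ equals applying $\Lambda_{A^{(i)}}\otimes\Lambda_{B^{(i)}}$ to $\rho_{A^{(i)}B^{(i)}}$, provided the other extraction maps are trace preserving on the relevant support. Hence every two-qubit marginal of $\sigma$ on a pair is $\Phi^+$. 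Since $\Phi^+$ is pure, a state whose marginal on a subsystem is pure must be a product with the rest: $\sigma=\Phi^+_{A^{(1)}B^{(1)}}\otimes\sigma'$. Iterating over $i$ gives $\sigma=\bigotimes_i\Phi^+_{A^{(i)}B^{(i)}}$, which is the desired rigidity statement \cref{eq: st states}.

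I expect the main obstacle to be the trace-preservation step. As noted in Appendix~\ref{app:extractionmap}, when $\hat A_0,\hat A_1$ are merely regularized observables the circuit is a genuine unitary/isometric channel, but its action equals a \SWAP only on the support of the state. I would handle this in the purified picture. Purify $\rho$ to $\ket{\psi}$, with the purifying system attached to (say) party $B^{(n)}$, and write each extraction as an isometry $\Phi_{X}$ followed by discarding junk; the regularized $\hat A_x$ are unitary, so each $\Phi_X$ is a true isometry. Because these isometries act on different tensor factors, they commute. The single-pair statement in its isometric form, $\Phi_{A^{(i)}}\otimes\Phi_{B^{(i)}}\ket{\psi}=\ket{\Phi^+}\otimes\ket{\text{junk}_i}$, can therefore be applied successively, each time to the already-transformed vector. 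The CHSH identities used in the proof of \cref{LemmaSelftest_CHSH} (e.g.\ $A_0\ket\psi=B'_0\ket\psi$ and anticommutation on the support) survive the application of isometries on the other parties, since those act on disjoint factors. This makes the iteration rigorous and yields the product form.
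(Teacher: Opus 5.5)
Your proposal is correct. It follows the paper's route of applying \cref{LemmaSelftest_CHSH} pair by pair, but it adds a gluing step that the paper does not make. The paper gives no proof beyond noting that \cref{LemmaSelftest_CHSH} ``can directly be used''. It implicitly reads the hypothesis as independent sources, so the global state is $\bigotimes_i\rho_{A^{(i)}B^{(i)}}$ and tensoring the per-pair extraction maps immediately gives $\bigotimes_i\Phi^+$. You instead allow an arbitrary global state with the given pair marginals, and glue the pairwise conclusions by noting that a pure extracted marginal forces a product structure. This is a genuinely stronger statement, and it is the one the paper actually needs later. In Appendix~\ref{app:self_testing_SWAP}, for instance, the CHSH tests between $A^{(1)}B^{(2)}$ and $A^{(2)}B^{(1)}$ are run on the output of an unknown channel acting jointly on $B^{(1)}B^{(2)}$. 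That channel may correlate the two pairs, so the direct product argument does not cover that case, whereas yours does.

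The step you flag as the main obstacle is not really one. The regularized observables are Hermitian with eigenvalues $\pm1$, hence unitary. Each Mayers--Yao extraction is therefore an isometry followed by a partial trace, i.e.\ a CPTP map, and $\mathrm{Tr}_{\neq i}[(\bigotimes_j\Lambda_j)(\rho)]=\Lambda_i(\rho_{A^{(i)}B^{(i)}})$ holds exactly. Even trace preservation only on $\mathrm{supp}(\rho_{A^{(j)}B^{(j)}})$ would suffice, since any global state is supported on $\mathrm{supp}(\rho_{A^{(j)}B^{(j)}})\otimes(\text{rest})$. Your purified, successive-application argument is valid but redundant. Likewise, the explicit Mayers--Yao form is not needed for this lemma: any local CPTP extraction maps supplied by \cref{LemmaSelftest_CHSH} make the marginal argument work. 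The explicit form only matters for the later requirement that the same maps appear in both the channel-free and channel-applied experiments.
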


\subsection{Self-testing the SWAP}\label{app:self_testing_SWAP}

In this section, we propose a self-testing result for the qubit \SWAP operation, seen as a channel acting on two systems. This amounts to self-test both the product of two maximally entangled qubits states $\Phi^+_{A^{(1)}B^{(1)}} \otimes \Phi^+_{A^{(2)}B^{(2)}}$ as well as the Choi state $\mathcal{C(\SWAP)}$ obtained by applying the \SWAP operator on the parties $B^{(1)}, B^{(2)}$, i.e.~$\mathcal{C(\SWAP)} = \Phi^+_{A^{(1)}B^{(2)}} \otimes \Phi^+_{A^{(2)}B^{(1)}}$. This can be done by requiring that the observed statistics verify~\cref{LemmaSelftest_CHSH_parallel} between parties $A^{(1)}, B^{(1)}$ and $A^{(2)},B^{(2)}$ when the channel is not applied and between parties $A^{(1)}, B^{(2)}$ and $A^{(2)},B^{(1)}$ when the channel is applied. Since both self-tests have the same reference measurements for parties $A^{(i)}$ (as given by~\cref{eq:refchsh}), this ensures self-testing of the $\SWAP$ channel.

\subsection{Self-testing the Choi channel of the error-correcting comb} \label{app:self_testing_EC_comb}

In this section we propose a self-testing result for the Choi channel $\mathcal{J}(\bar W_{\text{EC}})$ of the error-correcting comb, discussed in Sec.~\ref{sec:ECcomb} of the main text. This amounts to certifying 
\begin{equation}
    \bar \Psi = \Phi^+_{A^{(1)}B^{(1)}}\otimes \Phi^+_{A^{(2)}B^{(2)}},
\end{equation}
when the channel is not applied, which can be done using~\cref{LemmaSelftest_CHSH_parallel}, and the Choi state 
\begin{equation}
\begin{aligned}
    \mathcal{C}(\mathcal{J}(\bar W_{\text{EC}})) & = (\id_{A^{(1)}A^{(2)}}\otimes \mathcal{J}(\bar W_{\text{EC}})_{B^{(1)}B^{(2)}})[\ketbra{\bar \Psi}] = \frac{1}{2} \big[ \ketbra{G_0} \otimes \ketbra{0}_r + \ketbra{G_1} \otimes \ketbra{1}_r \big]
\end{aligned}
\end{equation}
when the channel is applied, where $\ket{G_0} = \ket{GHZ^4}$, $\ket{G_1} = \sigma_x^{A^{(1)}}\ket{GHZ^4}$, and
\begin{equation}
    \ket{GHZ^4} = \frac{1}{\sqrt{2}}\big( \ket{0000} + \ket{1111}\big).
\end{equation}
In the following, we detail how $\mathcal{C}(\mathcal{J}(\bar W_{\text{EC}}))$ can be self-tested with the same measurement as in by~\cref{LemmaSelftest_CHSH_parallel} for the parties $A^{(i)}$. First, let us state the following lemma.

\begin{lemma}[Section 3.3 of \cite{Barizien24}] \textbf{Self-testing $\ketbra{G_0}$.}\label{lem:GHZ} \\ 
Consider the following 4-partite Bell inequality, in a binary settings scenario:
\begin{equation}\label{eq:BellineqGHZ}
\begin{split}
    \langle S^4\rangle := \frac{1}{3}\big(\langle & (A_0^{(1)} + A_0^{(2)} + B_0^{(1)}) (B_0^{(2)}+B_1^{(2)})\rangle \big)   + \langle A_1^{(1)} A_1^{(2)} B_1^{(1)} (B_0^{(2)} - B_1^{(2)}) \rangle  \leq 2\sqrt{2}.
\end{split}
\end{equation}
Reaching the Tsirelson bound of this inequality self-test the following reference realization:
\begin{equation} \label{eq:measGHZ}
    \begin{split}
    & \ketbra{G_0} := \ketbra{GHZ^4}, \\
        & A_0^{(1)} = A_0^{(2)} = B_0^{(1)} = \sigma_z, \quad A_1^{(1)} = A_1^{(2)} = B_1^{(1)} = \sigma_x, \\
        & B_0^{(2)} = \frac{\sigma_z+\sigma_x}{\sqrt{2}}, \quad B_1^{(2)} = \frac{\sigma_z-\sigma_x}{\sqrt{2}}.
    \end{split}
\end{equation}
\end{lemma}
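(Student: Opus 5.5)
The plan is to treat $\langle S^4\rangle$ as a CHSH expression in disguise, derive operator relations from a sum-of-squares (SOS) decomposition, and then run the standard swap-isometry argument on the four parties.

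\textbf{CHSH form and Tsirelson bound.} Take a realization $(\ket{\psi},A_x^{(i)},B_y^{(i)})$, with the state purified and the measurements projective, i.e.\ $\pm1$-valued unitary observables acting on distinct tensor factors. Write
\[
C:=\tfrac13\big(A_0^{(1)}+A_0^{(2)}+B_0^{(1)}\big),\qquad D:=A_1^{(1)}A_1^{(2)}B_1^{(1)} .
\]
Then $S^4=C(B_0^{(2)}+B_1^{(2)})+D(B_0^{(2)}-B_1^{(2)})$. Here $C$ and $D$ are Hermitian contractions, with $D$ a unitary involution, and both commute with the $B^{(2)}$'s. This is a CHSH expression in which one party's observables are $C$ and $D$. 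The familiar CHSH SOS then gives
\[
2\sqrt2\,\id-S^4=\tfrac{1}{\sqrt2}\Big[\big(C-\tfrac{B_0^{(2)}+B_1^{(2)}}{\sqrt2}\big)^2+\big(D-\tfrac{B_0^{(2)}-B_1^{(2)}}{\sqrt2}\big)^2\Big]+\tfrac{1}{\sqrt2}(\id-C^2),
\]
using $D^2=\id$ and $(B_y^{(2)})^2=\id$. This proves the bound $2\sqrt2$, since every term on the right is positive semidefinite.

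\textbf{Relations at maximal violation.} At the Tsirelson bound each square annihilates $\ket\psi$, and $\langle\psi|(\id-C^2)|\psi\rangle=0$. Define $Z_B:=(B_0^{(2)}+B_1^{(2)})/\sqrt2$ and $X_B:=(B_0^{(2)}-B_1^{(2)})/\sqrt2$. The vanishing squares give $C\ket\psi=Z_B\ket\psi$ and $D\ket\psi=X_B\ket\psi$.

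The condition $\|C\psi\|=1$ says that $C\ket\psi$, an average of three unit vectors, has unit norm. By strict convexity of the norm, all three vectors coincide:
\[
A_0^{(1)}\ket\psi=A_0^{(2)}\ket\psi=B_0^{(1)}\ket\psi=Z_B\ket\psi .
\]

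Unitarity of $D$ gives $\|X_B\psi\|=1$. Together with $\|Z_B\psi\|=1$, the usual CHSH argument then yields $\{B_0^{(2)},B_1^{(2)}\}\ket\psi=0$. On the support of $\ket\psi$, $Z_B$ and $X_B$ are therefore anticommuting $\pm1$ observables, after the regularization of Appendix~\ref{app:extractionmap}.

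\textbf{Anticommutation for the other three parties.} I would transport the anticommutation of the $B^{(2)}$ observables to $A^{(1)}$, $A^{(2)}$ and $B^{(1)}$. For $A^{(1)}$ the chain is
\[
A_0^{(1)}D\psi=X_BA_0^{(1)}\psi=X_BZ_B\psi=-Z_BX_B\psi=-Z_BD\psi=-DZ_B\psi=-DA_0^{(1)}\psi .
\]
The factor $A_1^{(2)}B_1^{(1)}$ of $D$ commutes with $A_0^{(1)}$, so it can be cancelled. This gives $\{A_0^{(1)},A_1^{(1)}\}$ vanishing on the vector $A_1^{(2)}B_1^{(1)}\ket\psi$.

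The main obstacle is getting the relation on $\ket\psi$ itself, or at least on a set of vectors large enough for the isometry. To handle it, I would use the $Z$-equalities to move operators between parties. Since $Z_B$ commutes with all the other parties' operators, one can show that $A_1^{(2)}B_1^{(1)}\ket\psi$ lies in the same reduced support as $\ket\psi$ on party $A^{(1)}$. The same chain with the roles of $A^{(1)}$, $A^{(2)}$ and $B^{(1)}$ permuted gives the remaining anticommutators. If this becomes delicate, the fallback is to work directly with the vectors produced by the isometry, where only the action of these specific products on $\ket\psi$ is needed.

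\textbf{Isometry and identification of the state.} With anticommuting $\pm1$ pairs $(Z,X)$ for each party, namely $(A_0,A_1)$ for the first three parties and $(Z_B,X_B)$ for $B^{(2)}$, apply the local swap circuit of Eq.~\eqref{eq:explicitextraction} at each party. The extracted four-qubit state $\ket{\bar\psi}$ then inherits the relations on $\ket\psi$:
\begin{itemize}
\item $\sigma_z^{(i)}\sigma_z^{(j)}\ket{\bar\psi}=\ket{\bar\psi}$ for all pairs $i,j$, coming from the equalities of all four $Z$-actions on $\ket\psi$;
\item $\sigma_x^{\otimes4}\ket{\bar\psi}=\ket{\bar\psi}$, coming from $DX_B\ket\psi=\ket\psi$.
\end{itemize}
These four independent stabilizers fix $\ket{\bar\psi}=\ket{GHZ^4}$ uniquely. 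The measurement rigidity follows from the same isometry: $A_0^{(i)}\mapsto\sigma_z$ and $A_1^{(i)}\mapsto\sigma_x$ for the first three parties, and $B_{0/1}^{(2)}=(Z_B\pm X_B)/\sqrt2\mapsto(\sigma_z\pm\sigma_x)/\sqrt2$, which is exactly the reference realization of Eq.~\eqref{eq:measGHZ}.
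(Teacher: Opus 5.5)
\textbf{Verdict: correct, via a somewhat different route.} Your argument is correct. The paper gives no proof of this lemma; it imports it from Section~3.3 of \cite{Barizien24}. The method it relies on is the one it carries out itself for the Grover and switch Choi states (Appendices~\ref{sec:STGrover} and~\ref{app:self_testing_channel}). For this expression, that method uses the nullifier sum of squares
\[
2\sqrt2\,\id-S^4=\tfrac{1}{\sqrt2}\Big[\tfrac13\textstyle\sum_{i=1}^{3}(a_i-Z_B)^2+(D-X_B)^2\Big],\qquad a_i\in\{A_0^{(1)},A_0^{(2)},B_0^{(1)}\}.
\]
The $\tfrac13$ weights are exactly what cancel the non-measurable $\{B_0^{(2)},B_1^{(2)}\}$ terms. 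In the paper's own proofs this is followed by anticommutation on the state, a Jordan-lemma reduction to qubits, and completeness of the nullifiers.

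\textbf{Comparison with the paper's method.} Your decomposition is the same operator regrouped, since $\id-C^2=\tfrac19\sum_{i<j}(a_i-a_j)^2$. Reading the expression as CHSH with a contraction makes the Tsirelson bound transparent. The cost is the extra strict-convexity step to recover $a_i\ket{\psi}=Z_B\ket{\psi}$, which the nullifier form gives directly. To conclude, you use the explicit swap isometry and the GHZ stabilizers $\sigma_z^{(i)}\sigma_z^{(j)}$ and $\sigma_x^{\otimes4}$ instead of Jordan's lemma; the paper itself mentions this alternative in the Grover proof. It is longer, but it builds the extraction maps explicitly from the measurement boxes, as in Appendix~\ref{app:extractionmap}. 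That is what makes rigorous the paper's claim that tests with the same measurement boxes (the $r=0$, $r=1$ and CHSH tests) share the same extraction maps.

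\textbf{Two small corrections.}
\begin{enumerate}
\item The ``main obstacle'' you flag is not one. With $R:=A_1^{(2)}B_1^{(1)}$ you have $\{A_0^{(1)},D\}=R\,\{A_0^{(1)},A_1^{(1)}\}$, because $R$ commutes with party-$A^{(1)}$ operators. Since $R$ is unitary, $\{A_0^{(1)},D\}\ket{\psi}=0$ already gives $\{A_0^{(1)},A_1^{(1)}\}\ket{\psi}=0$. No appeal to the $Z$-equalities or to reduced supports is needed.
\item The norm conditions $\|Z_B\psi\|=\|X_B\psi\|=1$ by themselves only give $\langle\psi|\{B_0^{(2)},B_1^{(2)}\}|\psi\rangle=0$. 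The vector identity comes from $Z_B^2\ket{\psi}=Z_BA_0^{(1)}\ket{\psi}=A_0^{(1)}Z_B\ket{\psi}=\ket{\psi}$, as in the paper's Grover proof. Presumably this is what you meant by ``the usual CHSH argument''.
\end{enumerate}
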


Now notice that measuring the state $\ket{G_1} = \sigma_x^{A^{(1)}}\ket{GHZ^4}$ with the same measurements will give the same correlations but with $A_0^{(1)} \to -A_0^{(1)}$, which is a symmetry of the quantum set. Therefore, these correlations are still extremal and self-test the state (see~\cite{Barizien25} for a discussion on symmetries and self-testing relations) $\ket{G_1} = \sigma_x^{A^{(1)}}\ket{GHZ^4}$ with the exact same measurements, so we have the following result. 

\begin{lemma}\textbf{Self-testing $\ketbra{G_1}$.}\label{lem:GHZsim} \\
Consider the following symmetry of the Bell inequality~\cref{eq:BellineqGHZ}:
\begin{equation}\label{eq:BellineqGHZsim}
\begin{split}
    \langle S^4_{A_0^{(1)} \to -A_0^{(1)}}\rangle := \frac{1}{3}\big(\langle & (-A_0^{(1)} + A_0^{(2)} + B_0^{(1)}) (B_0^{(2)}+B_1^{(2)})\rangle \big) + \langle A_1^{(1)} A_1^{(2)} B_1^{(1)} (B_0^{(2)} - B_1^{(2)}) \rangle \leq 2\sqrt{2}.
\end{split}
\end{equation}
The Tsirelson bound of this inequality self-tests the state $\ketbra{G_1}:=\sigma_x^{A^{(1)}}\ketbra{GHZ^4}\sigma_x^{A^{(1)}}$ and the measurements of~\cref{eq:measGHZ}.
\end{lemma}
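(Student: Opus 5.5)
The plan is to reduce the statement to \cref{lem:GHZ} through an outcome relabeling, and then absorb the resulting sign flip into a local unitary on party $A^{(1)}$.

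\emph{Step 1: relabeling.} Take any physical realization $\rho$, $\{A_x^{(i)}, B_y^{(i)}\}$ that attains the value $2\sqrt{2}$ for $\langle S^4_{A_0^{(1)}\to -A_0^{(1)}}\rangle$. Define a second realization that keeps the same state and all the same observables, except that $A_0^{(1)}$ is replaced by $\tilde A_0^{(1)} := -A_0^{(1)}$. Physically this only swaps the two outcome labels of that setting, so the new realization is again a valid quantum model with the same network structure. By construction, $\langle S^4\rangle$ evaluated on the new realization equals $\langle S^4_{A_0^{(1)}\to -A_0^{(1)}}\rangle$ evaluated on the old one, namely $2\sqrt{2}$.

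The relabeling is a bijection of the quantum set onto itself. Hence the Tsirelson bound of the symmetrized inequality is also $2\sqrt{2}$, and the new realization saturates the bound of \cref{eq:BellineqGHZ}.

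\emph{Step 2: apply \cref{lem:GHZ}.} The lemma then provides local extraction maps $\Lambda_{A^{(1)}},\dots,\Lambda_{B^{(2)}}$ with $\bigotimes \Lambda[\rho]=\ketbra{GHZ^4}$. It also provides injection maps that send the relabeled observables to the reference ones of \cref{eq:measGHZ}. In particular, $\tilde A_0^{(1)}$ is sent to $\sigma_z$, and therefore the original $A_0^{(1)}$ is sent to $-\sigma_z$, while all other observables go to their values in \cref{eq:measGHZ}.

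\emph{Step 3: local unitary correction.} Conjugation by $\sigma_x$ gives $\sigma_x(-\sigma_z)\sigma_x=\sigma_z$ and $\sigma_x\sigma_x\sigma_x=\sigma_x$. I would therefore modify only party $A^{(1)}$:
\begin{equation}
\Lambda'_{A^{(1)}}:=\sigma_x\,\Lambda_{A^{(1)}}[\,\cdot\,]\,\sigma_x ,
\end{equation}
and compose its injection map with $\sigma_x$ in the same way. The extracted state becomes $\sigma_x^{A^{(1)}}\ketbra{GHZ^4}\sigma_x^{A^{(1)}}=\ketbra{G_1}$. The observables of $A^{(1)}$ are now mapped to $\sigma_z$ and $\sigma_x$, so every measurement is exactly as in \cref{eq:measGHZ}. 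These are precisely the rigidity relations \cref{eq: st states} and \cref{eq: st meas}, with reference state $\ketbra{G_1}$ and the reference measurements of \cref{eq:measGHZ}.

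\emph{Main obstacle.} The step needing most care is the one the main text relies on: the extraction map on the channel-free parties must coincide with the one used in the CHSH self-test of \cref{LemmaSelftest_CHSH_parallel}. Two points matter here.
\begin{itemize}
\item Parties $A^{(2)}$, $B^{(1)}$ and $B^{(2)}$ are untouched by Steps 1 and 3, so their maps are unchanged.
\item For $A^{(1)}$, I would check this directly on the explicit Mayers--Yao circuit of \cref{eq:explicitextraction}. Building it from $\tilde{A}_0^{(1)}=-A_0^{(1)}$ instead of $A_0^{(1)}$ only changes the controlled unitary by a sign on the control branch. That sign amounts to a fixed Pauli correction on the output qubit, and it is exactly compensated by the $\sigma_x$ conjugation above.
\end{itemize}
If this bookkeeping goes through, the circuit built from the original physical measurements of $A^{(1)}$ already extracts $\ket{G_1}$. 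The extraction map is then literally the one of the CHSH test, which is what \cref{app:self_testing_EC_comb} needs.
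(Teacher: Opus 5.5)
Your proposal is correct and follows the paper's own argument. The paper notes that measuring $\ket{G_1}$ with the reference measurements reproduces the $\ket{GHZ^4}$ correlations up to the relabeling $A_0^{(1)}\to -A_0^{(1)}$, a symmetry of the quantum set, so the self-test transfers; your Steps 1--3 make the compensating local $\sigma_x$ on $A^{(1)}$ explicit. Your extra check on the Mayers--Yao circuit also goes through: since $\hat A_1^2=\id$, replacing $\hat A_0$ by $-\hat A_0$ changes the extraction channel exactly by a $\sigma_x$ conjugation of the output qubit, which supports the paper's brief assertion that the same extraction maps serve for $\ket{G_0}$, $\ket{G_1}$ and the CHSH tests.
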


Note that both self-testing results implies the existence of local maps on $A^{(1)}, A^{(2)}, B^{(1)}, B^{(2)}$ that extract the target state (either $\ketbra{G_0}$ or $\ketbra{G_1}$) from any state achieving the quantum bound. Since the measurements used in both lemmas are fixed, these local maps can be taken identical in both cases. This leads to the following result.

\begin{theorem} \label{thm:outputbitflipstate}
    \textbf{Self-testing $\mathcal{C}(\mathcal{J}(\bar W_{\text{EC}}))$.} \\
    Consider a physical realization of the considered self-testing experiment, two sources and an instrument applied on half of each source between the four parties $A^{(1)},A^{(2)},B^{(1)},B^{(2)}$. Each party performs two binary outcomes measurements and there is an additional binary classical register $r\in\{0,1\}$, accessible only to $B^{(1)}$ and $B^{(2)}$. 

    Denote respectively by $\langle \cdot \rangle_{r=0}$ and $\langle \cdot \rangle_{r=1}$ the statistics between the four parties conditioned on the classical register. The following conditions are sufficient to self-test the target output state $\mathcal{C}(\mathcal{J}(\bar W_{\text{EC}}))$:
    \begin{enumerate}
        \item $P(r=0) = P(r=1) = 1/2$, 
        \item $\langle S^4\rangle_{r=0} = 2\sqrt{2}$,
        \item $\langle S^4_{A_0^{(1)} \to -A_0^{(1)}}\rangle_{r=1} = 2\sqrt{2}$,
    \end{enumerate}    
    along with reference measurements given by~\cref{eq:measGHZ}.  
\end{theorem}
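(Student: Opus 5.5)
The plan is to reduce the statement to Lemmas~\ref{lem:GHZ} and~\ref{lem:GHZsim}, applied branch by branch, and then glue the two branches back together using the classical register. Write the physical output of the experiment as a classical--quantum state
\begin{equation*}
\rho = \sum_{r=0,1} P(r)\, \rho^{(r)}_{A^{(1)}A^{(2)}B^{(1)}B^{(2)}} \otimes \ketbra{r}_r ,
\end{equation*}
which is the most general form, since $r$ is a classical register. Because $r$ is read only by $B^{(1)}$ and $B^{(2)}$, their physical measurements may depend on $r$; write them as $B^{(i)}_{y|r}$. The measurements of $A^{(1)},A^{(2)}$ cannot depend on $r$. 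Conditioning on $r=0$ gives a valid four-partite realization $(\rho^{(0)}, A^{(i)}_x, B^{(i)}_{y|0})$ reaching the Tsirelson bound of $\langle S^4\rangle$. Lemma~\ref{lem:GHZ} therefore yields local extraction maps $\Lambda_{A^{(1)}},\Lambda_{A^{(2)}},\Lambda^{0}_{B^{(1)}},\Lambda^{0}_{B^{(2)}}$ with $\bigotimes\Lambda[\rho^{(0)}]=\ketbra{G_0}$. In the same way, condition 3 and Lemma~\ref{lem:GHZsim} give maps with $\bigotimes\Lambda[\rho^{(1)}]=\ketbra{G_1}$.

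The key step is to make sure the extraction maps on $A^{(1)},A^{(2)}$ are the same in both branches and coincide with those of Lemma~\ref{LemmaSelftest_CHSH_parallel}. For this I would use the explicit construction of App.~\ref{app:extractionmap}: the maps in Eq.~\eqref{eq:explicitextraction} are built only from the party's own measurement operators $\hat A_0,\hat A_1$. These are $r$-independent and, by assumption, the same ones used in the CHSH rounds. The reference observables on $A^{(i)}$ are $\sigma_z,\sigma_x$ in all three tests (CHSH, $S^4$, and its symmetric version), so the same Mayers--Yao swap circuit works in each.

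Two things then need checking. First, in each branch the self-test must imply the algebraic relations needed for this circuit to act as a swap on the support of the state: anticommutation of $\hat A_0,\hat A_1$ on the support of $\rho^{(r)}$. These relations come from the sum-of-squares decomposition of $S^4$ in Ref.~\cite{Barizien24}. For $r=1$, the map $A_0^{(1)}\to -A_0^{(1)}$ flips the sign of $\hat A_0$ only, which leaves the anticommutation unchanged. Second, in the $r=1$ branch the circuit built from the physical $\hat A_0^{(1)}$ must output $\sigma_x\ket{GHZ^4}$ rather than relabel the outcome. This holds because the extracted state is fixed by the correlations produced with the \emph{unflipped} reference observable $\sigma_z$, and those correlations match $\sigma_x^{A^{(1)}}\ket{GHZ^4}$.

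For $B^{(1)},B^{(2)}$, I would define a controlled extraction map $\Lambda_{B^{(i)}}=\sum_r \Lambda^{r}_{B^{(i)}}\otimes\ketbra{r}_r$, which is allowed because $B$ has access to $r$. Applying $\bigotimes\Lambda$ to $\rho$ and using condition 1, $P(r)=1/2$, gives
\begin{equation*}
\tfrac12\big(\ketbra{G_0}\otimes\ketbra{0}_r+\ketbra{G_1}\otimes\ketbra{1}_r\big),
\end{equation*}
which is the target $\mathcal{C}(\mathcal{J}(\bar W_{\text{EC}}))$.

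I expect the main obstacle to be justifying that the $A$-side extraction is identical across the two branches and the CHSH rounds, in particular that the sign flip in $S^4_{A_0^{(1)}\to -A_0^{(1)}}$ is absorbed into the state ($\sigma_x^{A^{(1)}}$) rather than into the extraction map. The first thing I would try is to check directly that the swap circuit, fed with operators satisfying the $S^4$ sum-of-squares relations, produces $\sigma_x^{A^{(1)}}\ket{GHZ^4}$ on the $r=1$ support.
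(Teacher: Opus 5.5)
Your proposal is correct and follows essentially the same route as the paper: write the output as a classical--quantum state with $P(r)=1/2$, apply Lemmas~\ref{lem:GHZ} and~\ref{lem:GHZsim} branch by branch, and use that the $A$-side extraction maps coincide in both branches because they are built from the fixed, $r$-independent measurements (the construction of App.~\ref{app:extractionmap}). The one difference is that you allow $r$-controlled extraction maps on $B^{(1)},B^{(2)}$, whereas the paper simply asserts that all local maps can be taken $r$-independent; your version is slightly more general, since it also covers $B$ devices that exploit their access to $r$.
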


\begin{proof}
    For a given physical realization of the statistics, the physical output state is given by:
\begin{equation}
    \rho =\frac{1}{2} \left(\rho_0 \otimes \ketbra{0}_r + \rho_1 \otimes \ketbra{1}_r\right)
\end{equation}
where $\rho_0$ and $\rho_1$ are normalized states, and we already used the first condition $P(r=0) = P(r=1) = 1/2$. The second and third conditions imply that we can use \cref{lem:GHZ} and \cref{lem:GHZsim} to certify that there exist local maps, which do not depend on the classical register $r$, such that the states $\rho_0$ and $\rho_1$ are mapped to $\ketbra{G_0}$ and $\ketbra{G_1}$ respectively. Thus, under the action of these local maps $\rho$ is mapped to $\mathcal{C}(\mathcal{J}(\bar W_{\text{EC}}))$.
\end{proof}

Finally, since the reference measurements and the extraction maps for parties $A^{(i)}$ are the same in~\cref{thm:outputbitflipstate} and~\cref{LemmaSelftest_CHSH_parallel}, these two results can be used to self-test $\mathcal{J}(\bar W_{\text{EC}})$.

\subsection{Self-testing the Choi channel of Grover's algorithm} \label{sec:STGrover}

In this section we propose a self-testing result for the Choi channel of Grover's algorithm. As discussed in~\cref{sec:self_test_Grover}, this amounts to certifying the state preparation -- when the channel is not applied, and the Choi state of the channel -- when the channel is applied.
The former state, product of $(M+1)$ copies of the state $(\Phi^+)^{\otimes n}$, can be certified using $(M+1)n$ parallel CHSH self-tests, see~\cref{LemmaSelftest_CHSH_parallel}.
The latter is the product of one $(\Phi^+)^{\otimes n}$ state, certified again by parallel CHSH tests, and $M$ states $({\rm id}\otimes {\rm D})[(\Phi^{+})^{\otimes n}]$ arising from each slot, certified by $M$ parallel ``Grover'' tests, that we will now derive and summarize in~\cref{TheoremSelftestGrover}. We emphasize that all the reference measurements used in these self-test are identical to the ones required by~\cref{LemmaSelftest_CHSH_parallel}, and can thus be combined to self-test the Grover's Choi channel.

Let us first take a closer look at the state that we need to certify. In the following, we label $A =\{A^{(i)}\}$ the ``channel-free'' systems and $B =\{B^{(i)}\}$ the ones on which $\rm D$ is applied. We can write $({\rm id}_A\otimes {\rm D}_B)[(\Phi^{+})^{\otimes n}] = \ketbra{\psi_{\rm G}} $ with
\begin{equation} \label{eq:groverstateST}
    \begin{split}
        \ket{\psi_{\rm G}}_{AB} = \frac{1}{\sqrt{2^n}} \sum_{\bm x \in \{+,-\}^n} (-1)^{\delta_{\bm x \neq \bm +^n}} \ket{\bm x}_A \ket{\bm x}_B. 
    \end{split}
\end{equation}

In order to construct the desired Bell inequality to self-test this state, we use the method presented in~\cite{Barizien24}. This method's aim is to derive a Bell inequality tailored to the target state, i.e.~that is uniquely maximized by this state. This is a necessary requirement for any Bell inequality self-testing the target state. It relies on 4 steps:
\begin{enumerate}
    \item Find a complete set of ``nullifiers'' $\{\hat{N}_i\}$ of the target state $\ket{\psi}$, i.e.~operators such that $\hat{N}_i\ket{\psi}=0$ and $\cap_i \ker \hat{N}_i = \text{Vect}(\ket{\psi})$. 
    \item Make a choice (or a parametrization) of measurements and express the nullifiers in terms of the measurements operators.
    \item Lift the nullifers to the formal polynomial algebra (where commutation rules specific to the ideal Hilbert space are omitted) and compute there SOS: $\sum_i N_i^2$. 
    \item Check wether it is possible to cancel all ``non-measurable'' terms in the SOS, i.e.~all terms whose local degree is strictly larger than $1$. The Bell inequality and its SOS bound are then obtained by looking at measurable terms and identity terms respectively.
\end{enumerate}
We present below how we apply this method to the state in~\cref{eq:groverstateST}. \\

\noindent \underline{\textbf{Step 1:}} In order to find the nullifiers, we first observe that in all generality, if, for some $\ket{\psi}$, an operator $\hat{N}$ verifies $\hat{N}\ket{\psi}=0$, then for any unitary $U$, the state $U\ket{\psi}$ is nullified by the operator $\hat{N}_U := U\hat{N}U^\dagger$. Moreover, the kernel of the new nullifier $\ker(\hat{N}_U)$ is given by $U\ker(\hat{N})$. This simple observation means that for any complete set of nullifiers $\{ \hat N_i \}$ of state $\ket{\psi}$, we can build a complete set of nullfiers $\{\hat{N}_{U,i} \}$ for the state $U \ket{\psi}$. 

We now use this property to construct a complete family of nullifiers for~\cref{eq:groverstateST}. Indeed, as $\ket{\psi_{\rm G}}_{AB}= (\id_A \otimes U_{\rm D}) \ket{\phi^+}^{\otimes n}_{AB}$ with
\begin{equation}
U_{\rm D} =2 \ketbra{+}^{\otimes n}-\id = 2\prod_{j=1}^n \frac{\id+\sigma_x^{B^{(j)}}}{2} - \id,
\end{equation}
it suffices to find a complete set of nullifiers for the state $\ket{\phi^+}^{\otimes n}_{AB}$. Such a set is for example given by:
\begin{equation}
    \hat{N}_{x,i} = {\sigma_x}^{A^{(i)}}-{\sigma_x}^{B^{(i)}}, \quad \hat{N}_{z,i} = {\sigma_z}^{A^{(i)}}-{\sigma_z}^{B^{(i)}}. 
\end{equation}
The complete set of nullifiers for our target state $\ket{\psi_{\rm G}}_{AB}$ is therefore given by:
\begin{subequations}
    \begin{align}
        & \hat{N}_{U_{\rm D}, x,i} := (\id_A \otimes U_{\rm D})\hat{N}_{x,i} (\id_A \otimes U_{\rm D})^\dagger = \hat{N}_{x,i} = {\sigma_x}^{A^{(i)}}-{\sigma_x}^{B^{(i)}}, \label{eq:Xtypenull}\\
        & \hat{N}_{U_{\rm D}, z,i} := (\id_A \otimes U_{\rm D})\hat{N}_{z,i} (\id_A \otimes U_{\rm D})^\dagger = {\sigma_z}^{A^{(i)}} - {\sigma_z}^{B^{(i)}} \left(1- 2 \prod_{j\neq i} \frac{\id + \sigma_x^{B^{(j)}}}{2}\right).
    \end{align}
\end{subequations}
Note that since the state is symmetric under $A \leftrightarrow B$, we will rather use 
\begin{equation} \label{eq:Ztypenull}
    \begin{split}
        \hat{N}_{U_{\rm D}, z,i}' := {\sigma_z}^{A^{(i)}} \left(1- 2 \prod_{j\neq i} \frac{\id + \sigma_x^{A^{(j)}}}{2}\right) - {\sigma_z}^{B^{(i)}}.
    \end{split}
\end{equation}

\noindent \underline{\textbf{Step 2:}} We choose the measurement operators to be the same as in the CHSH test, see~\cref{eq:refchsh}. As such, the Pauli operators express as 
\begin{equation}
    \begin{split}
        & {\sigma_z}^{A^{(i)}} = A_0^{(i)}, \quad {\sigma_x}^{A^{(i)}} = A_1^{(i)},\\
        & {\sigma_z}^{B^{(i)}} = \frac{B_0^{(i)}+B_1^{(i)}}{\sqrt{2}}, \quad {\sigma_x}^{B^{(i)}} = \frac{B_0^{(i)}-B_1^{(i)}}{\sqrt{2}},
    \end{split}
\end{equation}
in terms of the measurement operators for all $i \in \{1,...,n\}$. We can then insert this in~\cref{eq:Xtypenull} and~\cref{eq:Ztypenull} to obtain the expression of the nullifiers in terms of measurement operators to obtain:
\begin{equation}
    \begin{split}
        & N_{U_{\rm D}, x,i} = A_1^{(i)} - \frac{B_0^{(i)}-B_1^{(i)}}{\sqrt{2}},\\
        & N'_{U_{\rm D}, z,i} = A_0^{(i)} \left(1-2\prod_{j\neq i} \frac{1 + A_1^{(j)}}{2}\right) - \frac{B_0^{(i)}+B_1^{(i)}}{\sqrt{2}}.
    \end{split}
\end{equation}

\noindent \underline{\textbf{Step 3:}} We can now lift the nullifiers to the formal polynomial algebra (don't assume anything on the anti-commutation relations of the measurement operators) and compute the SOS
\begin{equation}\label{eq:SOS1}
    P := \sum_{i\in\{1,...,n\}} N_{U_{\rm D}, x,i}^2  + N_{U_{\rm D}, z,i}'^2,
\end{equation}
where the variables $N$ and $N'$ without hats are indeterminates, to be distinguished from the qubit operators $\hat N$, $\hat N'$.
To compute this quantity, we use the identities $(A_x^{(i)})^2 = (B_y^{(i)})^2=1$ (which holds for any projective binary outcome measurement) and commutation between the measurements of different parties. As such:
\begin{equation}
\begin{split}
    N_{U_{\rm D}, x,i}^2 & = \left(A_1^{(i)} - \frac{B_0^{(i)}-B_1^{(i)}}{\sqrt{2}}\right)^2\\
    & = 1 - 2 A_1^{(i)} \frac{B_0^{(i)}-B_1^{(i)}}{\sqrt{2}} + \frac{1}{2} (B_0^{(i)}-B_1^{(i)})^2 \\
    & = 2 - \sqrt{2} A_1^{(i)} (B_0^{(i)}-B_1^{(i)}) - \frac{1}{2} \{B_0^{(i)}, B_1^{(i)}\},
\end{split} 
\end{equation}
while 
\begin{equation}
\begin{split}
    N_{U_{\rm D}, z,i}'^2 &= \left(A_0^{(i)} \left(1-2\prod_{j\neq i} \frac{1 + A_1^{(j)}}{2}\right) - \frac{B_0^{(i)}+B_1^{(i)}}{\sqrt{2}}\right)^2\\
    & = \left(A_0^{(i)} \left(1-2\prod_{j\neq i} \frac{1 + A_1^{(j)}}{2}\right)\right)^2 - 2A_0^{(i)} \left(1-2\prod_{j\neq i} \frac{1 + A_1^{(j)}}{2}\right)\frac{B_0^{(i)}+B_1^{(i)}}{\sqrt{2}} + \frac{1}{2}(B_0^{(i)}+ B_1^{(i)})^2 \\
    & = 1- 2A_0^{(i)} \left(1-2\prod_{j\neq i} \frac{1 + A_1^{(j)}}{2}\right)\frac{B_0^{(i)}+B_1^{(i)}}{\sqrt{2}}  + \frac{1}{2} \{B_0^{(i)}, B_1^{(i)}\},
\end{split}
\end{equation}
where we also used that $\left(\frac{1 + A_1^{(j)}}{2}\right)^2 = \frac{1 + A_1^{(j)}}{2}$.\\

\noindent \underline{\textbf{Step 4:}} Summing the last two terms cancels out the non-measurable terms $\{B_0^{(i)}, B_1^{(i)}\}$. The SOS is then obtained by summing over $i\in\{1,...,n\}$ and is given by
\begin{equation} \label{eq:SOSgrover}
\begin{split}
    P = 3n - \sqrt{2}S_{\text{Grover}}
\end{split}
\end{equation}
where 
\begin{equation} \label{eq:BellopgroverST}
\begin{split}
    S_{\text{Grover}} := \sum_{i\in\{1,...,n\}}\left[  A_1^{(i)} (B_0^{(i)}-B_1^{(i)}) + A_0^{(i)}(B_0^{(i)}+B_1^{(i)})  \left(1-2\prod_{j\neq i} \frac{1 + A_1^{(j)}}{2}\right)\right]
\end{split}
\end{equation}
is the candidate Bell expression. Note that the SOS in \cref{eq:SOS1} immediately implies that $S_{\text{Grover}} \preceq 3n/\sqrt{2}$. Since the ideal realization (with state~\cref{eq:groverstateST} and CHSH measurements~\cref{eq:refchsh}) saturates this bounds, this is indeed the quantum bound of this operator. We can now move to our main result.

\begin{theorem} \textbf{Self-testing $\ketbra{\psi_{\rm G}}$.} \label{TheoremSelftestGrover} \\
    Let us consider a 2n-partite Bell scenario where each party performs two binary measurements. Consider the Bell expression~\cref{eq:BellopgroverST}. If the observed statistics verify  
\begin{equation}
    \Tr(\rho S_{\text{Grover}}) = 3n/\sqrt{2},
\end{equation} the state $\ketbra{\psi_{\rm G}}$ (see~\cref{eq:switchstateout}) is self-tested. 
\end{theorem}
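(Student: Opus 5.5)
The plan is to follow the standard sum-of-squares (SOS) route to self-testing, in the spirit of~\cite{Barizien24}, and then build a Mayers--Yao-type extraction map as in Appendix~\ref{app:extractionmap}.

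\textbf{Step 1: from the SOS to nullifier relations.}
By Naimark dilation we may assume all measurements are projective. Then each $A_x^{(i)}$ and $B_y^{(i)}$ is a $\pm1$ observable, and operators of different parties commute. Under exactly these relations, the identity $P = 3n - \sqrt{2}S_{\text{Grover}}$ of Eq.~\eqref{eq:SOSgrover} holds as an operator identity on the physical Hilbert space. Saturating $\Tr(\rho S_{\text{Grover}})=3n/\sqrt2$ therefore forces $\Tr(\rho P)=0$. Purify $\rho$ to $\ket{\psi}$, absorbing the purification into one party. Since $P$ is a sum of squares of Hermitian operators, each square has zero expectation, so
\begin{equation}
N_{U_{\rm D},x,i}\ket{\psi}=0,\qquad N'_{U_{\rm D},z,i}\ket{\psi}=0\qquad \forall i .
\end{equation}

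\textbf{Step 2: recovering complementary observables on the $B$ side.}
Set $Z_B^{(i)}:=(B_0^{(i)}+B_1^{(i)})/\sqrt2$ and $X_B^{(i)}:=(B_0^{(i)}-B_1^{(i)})/\sqrt2$. The nullifiers give $X_B^{(i)}\ket\psi=A_1^{(i)}\ket\psi$ and $Z_B^{(i)}\ket\psi=A_0^{(i)}Q_i\ket\psi$, where
\begin{equation}
Q_i:=1-2\prod_{j\neq i}\frac{1+A_1^{(j)}}{2}
\end{equation}
is a $\pm1$ observable built only from other $A$ parties. Hence
\begin{equation}
(Z_B^{(i)})^2\ket\psi=\|Q_i A_0^{(i)}\ket\psi\|^2=1,
\end{equation}
and likewise for $X_B^{(i)}$. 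Using $(Z_B^{(i)})^2+(X_B^{(i)})^2=2$ together with the commutation of $B^{(i)}$ with every $A$, we obtain $\{Z_B^{(i)},X_B^{(i)}\}\ket\psi=0$, and the regularized versions $\hat Z_B^{(i)},\hat X_B^{(i)}$ act on $\ket\psi$ as anticommuting unitaries.

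\textbf{Step 3: recovering complementary observables on the $A$ side.}
Conversely, we need $\{A_0^{(i)},A_1^{(i)}\}\ket\psi=0$. I would derive it from
\begin{equation}
A_0^{(i)}A_1^{(i)}\ket\psi = A_0^{(i)}X_B^{(i)}\ket\psi = X_B^{(i)}A_0^{(i)}\ket\psi,
\end{equation}
then move $A_0^{(i)}$ to the $B$ side via $A_0^{(i)}\ket\psi = Q_i Z_B^{(i)}\ket\psi$, which uses $Q_i^2=1$ and $[Q_i,A_0^{(i)}]=0$. This gives $X_B^{(i)}Q_iZ_B^{(i)}\ket\psi$. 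Since $Q_i$ commutes with the $B$ operators, the $B$-side anticommutation yields $-Q_iZ_B^{(i)}X_B^{(i)}\ket\psi$, and walking back produces $-A_1^{(i)}A_0^{(i)}\ket\psi$.

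This step is the main obstacle. The $A$-side relations involve the multi-party operators $Q_i$, so each commutation must be checked carefully. One must also ensure that $Q_i$, which contains $A_1^{(j)}$, commutes with $A_0^{(i)}$ but not necessarily with $A_0^{(j)}$, and that no step requires this.

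\textbf{Step 4: extraction.}
Apply the SWAP-circuit extraction map~\eqref{eq:explicitextraction} locally to every party, with $(A_0^{(i)},A_1^{(i)})$ on $A^{(i)}$ and $(\hat Z_B^{(i)},\hat X_B^{(i)})$ on $B^{(i)}$. The anticommutation relations on $\ket\psi$, together with the nullifier relations, let us expand
\begin{equation}
\Lambda[\ketbra{\psi}] = \ketbra{\psi_{\rm G}}\otimes\sigma_{\rm junk}
\end{equation}
by the usual Pauli-string bookkeeping: each nullifier relation mirrors the reference nullifiers~\eqref{eq:Xtypenull}--\eqref{eq:Ztypenull}. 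Because this complete set of nullifiers has a one-dimensional common kernel spanned by $\ket{\psi_{\rm G}}$, the extracted qubit state is forced to equal $\ket{\psi_{\rm G}}$.

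The measurements are exactly those of Lemma~\ref{LemmaSelftest_CHSH_parallel}, so the $A$-side extraction maps coincide with the ones used for $(\Phi^+)^{\otimes n}$, as required for channel self-testing.
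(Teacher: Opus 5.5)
Your Steps 1 and 3 match the paper's argument. The SOS forces the nullifier relations, and your chain $A_0^{(i)}A_1^{(i)}\ket{\psi}=\dots=-A_1^{(i)}A_0^{(i)}\ket{\psi}$ is the paper's chain written in the opposite order. It goes through without difficulty, because $Q_i$ involves only $A_1^{(j)}$ with $j\neq i$.

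The real difference is Step 4. The paper does not build an extraction map. It invokes Jordan's lemma to reduce to qubit blocks. There the state-level relations $\{A_0^{(i)},A_1^{(i)}\}\ket{\psi}=\{B_0^{(i)},B_1^{(i)}\}\ket{\psi}=0$ become exact operator anticommutation, since $\{A_0,A_1\}\propto\id$ for qubit observables. The measurements are then the CHSH ones in a local basis, the nullifiers are literally the reference ones, and their one-dimensional common kernel fixes $\ket{\psi_{\rm G}}$.

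Your swap-isometry route is an alternative the paper mentions but does not carry out. It has a clear advantage: the extraction map is built from the measurement devices themselves, so it is manifest that the $A$-side maps coincide with those of the CHSH tests. The cost is that you must check that every nullifier intertwines through $\bigotimes_k\Phi_k$. For the $z$-type nullifiers this involves the multi-party product $A_0^{(i)}Q_i$. The check works for two reasons: each local isometry only needs its own anticommutation relation on vectors $M\ket{\psi}$ with $M$ supported on other parties, which follows from the relation on $\ket{\psi}$ plus commutation across parties; and each $A_1^{(j)}$ appears at most once in $Q_i$. That verification, not Step 3, is where the real bookkeeping of your route lies, and your ``Pauli-string bookkeeping'' sentence should make it explicit.

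One step needs repair. In Step 2, $\{Z_B^{(i)},X_B^{(i)}\}=0$ holds as an operator identity for any pair of $\pm1$ observables (expand $(B_0+B_1)(B_0-B_1)$). Deriving it ``on $\ket{\psi}$'' therefore carries no information. Your displayed computation also mixes a vector with a scalar, and at best gives $\|Z_B^{(i)}\ket{\psi}\|=1$, i.e.\ $\bra{\psi}\{B_0^{(i)},B_1^{(i)}\}\ket{\psi}=0$ only in expectation.

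What Step 4 actually needs is the vector identity $(X_B^{(i)})^2\ket{\psi}=(Z_B^{(i)})^2\ket{\psi}=\ket{\psi}$, i.e.\ $\{B_0^{(i)},B_1^{(i)}\}\ket{\psi}=0$. Without it, the regularized $\hat Z_B^{(i)},\hat X_B^{(i)}$ need not agree with $Z_B^{(i)},X_B^{(i)}$ on the state or anticommute there. The identity follows in one line from the nullifiers, as in the paper: $(X_B^{(i)})^2\ket{\psi}=X_B^{(i)}A_1^{(i)}\ket{\psi}=A_1^{(i)}X_B^{(i)}\ket{\psi}=\ket{\psi}$.

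Then $\ket{\psi}$ lies in the eigenvalue-$1$ subspace of $(Z_B^{(i)})^2=2-(X_B^{(i)})^2$. Both $Z_B^{(i)}$ and $X_B^{(i)}$ leave this subspace invariant, and on it they are anticommuting Hermitian unitaries, so your claim about the regularized operators holds. Step 3 only uses the operator identity $X_B^{(i)}Z_B^{(i)}=-Z_B^{(i)}X_B^{(i)}$, so it is unaffected.
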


\begin{proof} 
Without loss a generality (dilation to larger spaces), we assume that the shared state $\rho = \ketbra{\psi}$ is pure. 

First, due to the SOS decomposition~\cref{eq:SOSgrover}, it holds that $N_{U_{\rm D}, x,i} \ket{\psi} = N_{U_{\rm D}, z,i}'\ket{\psi} = 0 $ for all $i\in\{1,...,n\}$, i.e.
\begin{subequations}
    \begin{align}
        & A_1^{(i)} \ket{\psi} =  \frac{B_0^{(i)}-B_1^{(i)}}{\sqrt{2}} \ket{\psi}, \label{eq:nullif1G}\\
        & A_0^{(i)} \left(1-2\prod_{j\neq i} \frac{1 + A_1^{(j)}}{2}\right) \ket{\psi} =  \frac{B_0^{(i)}+B_1^{(i)}}{\sqrt{2}}\ket{\psi}. \label{eq:nullif2G}
    \end{align}
\end{subequations}
\cref{eq:nullif1G} guarantees for all $i$ that 
\begin{equation}
    \ket{\psi} = (A_1^{(i)})^2\ket{\psi} = \left(\frac{B_0^{(i)}-B_1^{(i)}}{\sqrt{2}} \right)^2\ket{\psi} = \left(1 - \frac{1}{2}\{B_0^{(i)}, B_1^{(i)}\} \right)\ket{\psi},
\end{equation}
or equivalently
\begin{equation} \label{eq:Banticomm}
\{B_0^{(i)}, B_1^{(i)}\} \ket{\psi} = 0.
\end{equation}

Next, notice that the operator $ 1-2\prod_{j\neq i} \frac{1 + A_1^{(j)}}{2}$ is an involution, i.e.
\begin{equation} \label{eq:Ginvolution}
    \left(1-2\prod_{j\neq i} \frac{1 + A_1^{(j)}}{2}\right)^2 = 1.
\end{equation}
We can thus rewrite~\cref{eq:nullif2G} as
\begin{equation}
    A_0^{(i)} \ket{\psi} =  \frac{B_0^{(i)}+B_1^{(i)}}{\sqrt{2}} \left(1-2\prod_{j\neq i} \frac{1 + A_1^{(j)}}{2}\right)\ket{\psi} \label{eq:nullif2Gbis}
\end{equation}
using the commutation across distinct parties. Now the $z$ and $x$ nullifying conditions can be be combined to obtain
\begin{equation}
\begin{split}
    A_1^{(i)} A_0^{(i)} \ket{\psi} & \underset{\ref{eq:nullif2Gbis} }{=} A_1^{(i)} \frac{B_0^{(i)} + B_1^{(i)}}{\sqrt{2}} \left(1-2\prod_{j\neq i} \frac{1 + A_1^{(j)}}{2}\right)\ket{\psi} \\
    & = \left(1-2\prod_{j\neq i} \frac{1 + A_1^{(j)}}{2}\right) \frac{B_0^{(i)} + B_1^{(i)}}{\sqrt{2}} A_1^{(i)}  \ket{\psi} \\
    & \underset{\ref{eq:nullif1G}}{=} \left(1-2\prod_{j\neq i} \frac{1 + A_1^{(j)}}{2}\right) \frac{B_0^{(i)} + B_1^{(i)}}{\sqrt{2}} \frac{B_0^{(i)} - B_1^{(i)}}{\sqrt{2}}   \ket{\psi} \\
    & = - \left(1-2\prod_{j\neq i} \frac{1 + A_1^{(j)}}{2}\right) \frac{B_0^{(i)} - B_1^{(i)}}{\sqrt{2}} \frac{B_0^{(i)} + B_1^{(i)}}{\sqrt{2}} \ket{\psi} \\
    & \underset{\ref{eq:nullif2Gbis}}{=} -\frac{B_0^{(i)} - B_1^{(i)}}{\sqrt{2}}  A_0^{(i)} \ket{\psi} \\
    & = -A_0^{(i)} \frac{B_0^{(i)} - B_1^{(i)}}{\sqrt{2}}  \ket{\psi} \underset{\ref{eq:nullif1G}}{=}-A_0^{(i)}A_1^{(i)} \ket{\psi},
\end{split}
\end{equation}
or equivalently:
\begin{equation} \label{eq:Aanticomm}
    \{A_0^{(i)}, A_1^{(i)}\}  \ket{\psi} = 0.
\end{equation}

So far, we have thus established that both $A_0^{(i)}, A_1^{(i)}$ and $B_0^{(i)}, B_1^{(i)}$ must be complementary observables on $\ket{\psi}$.
There are now several possible ways to proceed: one could for example build explicitly the local extraction maps (see e.g.~the swap gate isometry presented in Section 4.3 of Ref.~\cite{Supic20}); or, as we will do below, utilize Jordan's lemma (see e.g. Section~7.1.2 of Ref.~\cite{Supic20}). 

Since every party holds two binary measurements, Jordan's lemma implies that we can decompose the local Hilbert spaces into qubit blocks, with the guarantee that measurement operators act on each block independently. The overall statistics can then be expressed as the convex combination of the statistics obtained on each qubit-...-qubit blocks, and the bound of $S_{\text{Grover}}$ can only be saturated if all qubit-...-qubit contributions reach the maximal value $3n/\sqrt{2}$. It thus suffices to show that the only qubit realization that can reach $3n/\sqrt{2}$ is the reference one up to local unitaries. From now on, we thus assume that $\ket{\psi}$ is a normalized $2n$ qubit state and every measurement can be expressed as a norm 1 linear combination of the Pauli operators, i.e.~$A_x^{(i)} = \bm{a}_x^{(i)} \cdot \bm{\sigma}$ with $\bm{\sigma} = (\sigma_x,\sigma_y,\sigma_z)$ and $|\bm{a}_x^{(i)}|=1$ (likewise for B's systems).

Since for each $i$ qubit operators verify $\{A_0^{(i)}, A_1^{(i)}\} \propto \id$, \cref{eq:Aanticomm} implies that $\{A_0^{(i)}, A_1^{(i)}\} = 0$. Likewise, \cref{eq:Banticomm} implies $\{B_0^{(i)}, B_1^{(i)}\} = 0$. This means that there exist a local choice of basis such that the measurements are the ideal CHSH ones (see~\cref{eq:refchsh}). Therefore, the nullifiers are equivalent to the ideal ones in \cref{eq:Xtypenull} and \cref{eq:Ztypenull} up to a choice of local basis and thus identify a unique quantum state, equivalent to $\ket{\psi_{\text{G}}}$ up to local unitaries. This concludes the proof.
\end{proof}

\subsection{Self-testing the Choi channel of the quantum switch} \label{app:self_testing_channel}

In this section we propose a self-testing result for the Choi channel $\mathcal{J}(\bar W_{\text{Switch}})$. This amounts to certifying 
\begin{equation}
    \bar \Psi = \Phi^+_{A^{(C)}B^{(C)}}\otimes \Phi^+_{A^{(T)}B^{(T)}}\otimes \Phi^+_{A^{(1)}B^{(1)}}\otimes \Phi^+_{A^{(2)}B^{(2)}},
\end{equation}
when the channel is not applied, which can be done using~\cref{LemmaSelftest_CHSH_parallel}, and the Choi state $\mathcal{C}(\mathcal{J}(\bar W_{\text{Switch}}))=\ketbra{\psi_\text{out}}$, where
\begin{align} \label{eq:switchstateout}
    \ket{\psi_\text{out}} = \frac{1}{4} \sum_{t,a,b} \Big( & \ket{0,0}_{A^{(C)}B^{(C)}}\ket{t,a,b}_{A^{(T)}A^{(1)}A^{(2)}}\ket{b,t,a}_{B^{(T)}B^{(1)}B^{(2)}} \notag \\[-3mm]
    & + \ket{1,1}_{A^{(C)}B^{(C)}}\ket{t,a,b}_{A^{(T)}A^{(1)}A^{(2)}}\ket{a,b,t}_{B^{(T)}B^{(1)}B^{(2)}} \Big).
\end{align}
The idea to self-test this state is in two steps. First, one observes that when projecting the control system $A^{(C)}$ on the computational basis $\{\ket{0}, \ket{1}\}$, the remaining target and auxiliary systems are product of three maximally entangled states. Second, when simultaneously projecting the target and auxiliary systems in the $\ket{0}$ state, the control systems are left in a maximally entangled state. We use this intuition to derive a Bell inequality, whose quantum bound self-test the target state~\cref{eq:switchstateout}. The method to derive this inequality, the SOS method, is presented in the previous subsection~\ref{sec:STGrover}, while the self-testing result is formalized in~\cref{TheoremSelftest}. We emphasize that all the measurements used here are identical to the ones required by~\cref{LemmaSelftest_CHSH_parallel}, allowing to overall self-test the channel $\mathcal{J}(\bar W_{\text{Switch}})$.\\

\noindent We present below how we apply the SOS method (see~\cref{sec:STGrover} for more details) to the state~\cref{eq:switchstateout}. \\

\noindent \underline{\textbf{Step 1:}} In order to find the nullifiers, we first observe that when projecting the control system $A^{(C)}$ on the computational basis $\{\ket{0}, \ket{1}\}$, the remaining target and auxiliary systems are product of three maximally entangled states. This leads to the following nullifiers for $I\in\{T,1,2\}$:
\begin{equation} \label{eq:Cprojectednullifiers}
    \begin{split}
        & \hat{N}_{z,0,1} = \hat{\Pi}_0^{(C)} ({\sigma_z}^{A^{(T)}}-{\sigma_z}^{B^{(1)}}), \quad \hat{N}_{z,0,2} = \hat{\Pi}_0^{(C)} ({\sigma_z}^{A^{(1)}}-{\sigma_z}^{B^{(2)}}), \quad \hat{N}_{z,0,3} = \hat{\Pi}_0^{(C)} ({\sigma_z}^{A^{(2)}}-{\sigma_z}^{B^{(T)}}), \\
        & \hat{N}_{x,0,1} = \hat{\Pi}_0^{(C)} ({\sigma_x}^{A^{(T)}}-{\sigma_x}^{B^{(1)}}), \quad \hat{N}_{x,0,2} = \hat{\Pi}_0^{(C)} ({\sigma_x}^{A^{(1)}}-{\sigma_x}^{B^{(2)}}), \quad \hat{N}_{x,0,3} = \hat{\Pi}_0^{(C)} ({\sigma_x}^{A^{(2)}}-{\sigma_x}^{B^{(T)}}), \\
        & \hat{N}_{z,1,1} = \hat{\Pi}_1^{(C)} ({\sigma_z}^{A^{(T)}}-{\sigma_z}^{B^{(2)}}), \quad \hat{N}_{z,1,2} = \hat{\Pi}_1^{(C)} ({\sigma_z}^{A^{(1)}}-{\sigma_z}^{B^{(T)}}), \quad \hat{N}_{z,1,3} = \hat{\Pi}_1^{(C)} ({\sigma_z}^{A^{(2)}}-{\sigma_z}^{B^{(1)}}), \\
        & \hat{N}_{x,1,1} = \hat{\Pi}_1^{(C)} ({\sigma_x}^{A^{(T)}}-{\sigma_x}^{B^{(2)}}), \quad \hat{N}_{x,1,2} = \hat{\Pi}_1^{(C)} ({\sigma_x}^{A^{(1)}}-{\sigma_x}^{B^{(T)}}), \quad \hat{N}_{x,1,3} = \hat{\Pi}_1^{(C)} ({\sigma_x}^{A^{(2)}}-{\sigma_x}^{B^{(1)}}), 
    \end{split}
\end{equation}
where ${\sigma_z}^I, {\sigma_x}^I$ denote the Pauli operators on party $I$ and $\hat{\Pi}_s^{(i)} := (1+(-1)^s {\sigma_z}^{A^{(i)}})/2$ is the projector on the computational basis of $A^{(i)}$. 

The next nullifiers come from the fact that whenever parties $A^{(T)}$, $A^{(1)}$ and $A^{(2)}$ are projected simultaneously on $\ket{0}$, the marginal state between $A^{(C)}$ and $B^{(C)}$ becomes the maximally entangled state $\ket{\phi^+}$. This translates to 
\begin{equation} \label{eq:Cnullifiers}
    \begin{split}
        & \hat{N}_{z,C} = \hat{\Pi}_0^{(T)} \hat{\Pi}_0^{(1)} \hat{\Pi}_0^{(2)} ({\sigma_z}^{A^{(C)}} - {\sigma_z}^{B^{(C)}}), \\
        & \hat{N}_{x,C} =\hat{\Pi}_0^{(T)} \hat{\Pi}_0^{(1)} \hat{\Pi}_0^{(2)} ({\sigma_x}^{A^{(C)}} - {\sigma_x}^{B^{(C)}})
    \end{split}
\end{equation}
nullifying the state~\cref{eq:switchstateout}. 

We now have to check that the set of nullifiers is complete, i.e.~will be sufficient to uniquely identify the target state. To do so, notice that an arbitrary 8 qubit pure state can always be written $\ket{\psi} = \ket{0}_{A^{(C)}} \otimes \ket{\psi_0} + \ket{1}_{A^{(C)}} \otimes \ket{\psi_1}$. If all operators in~\cref{eq:Cprojectednullifiers} nullify $\ket{\psi}$, one can easily see that it implies
\begin{equation}
    \begin{split}
        & \ket{\psi_0} = \ket{\xi_0}_{B^{(C)}}\otimes \ket{\phi^+}_{A^{(T)}B^{(1)}}\otimes \ket{\phi^+}_{A^{(1)}B^{(2)}}\otimes \ket{\phi^+}_{A^{(2)}B^{(T)}}, \\
        & \ket{\psi_1} = \ket{\xi_1}_{B^{(C)}}\otimes \ket{\phi^+}_{A^{(T)}B^{(2)}}\otimes \ket{\phi^+}_{A^{(1)}B^{(T)}}\otimes \ket{\phi^+}_{A^{(2)}B^{(1)}},
    \end{split}
\end{equation}
where for $s\in\{0,1\}$, $\ket{\xi_s}$ is a sub-normalized state on $B^{(C)}$. Then, one can see that the projection on $\hat{\Pi}_0^{(T)} \hat{\Pi}_0^{(1)} \hat{\Pi}_0^{(2)}$ of the overall state sends the marginal state on $A^{(C)}$ and $B^{(C)}$ to $\ket{0}_{A^{(C)}} \ket{\xi_0}+ \ket{1}_{A^{(C)}} \ket{\xi_1}$. Now, imposing that operators in~\cref{eq:Cnullifiers} are nullifiers, we get that $\ket{\xi_s} = \ket{s}/\sqrt{2}$ and thus $\ket{\psi}=\ket{\psi_{\text{out}}}$ as desired.\\

\noindent \underline{\textbf{Step 2:}} We choose the measurement operators to be the same as in the CHSH test, see~\cref{eq:refchsh}. As such, the Pauli operators express as 
\begin{equation}
    \begin{split}
        & {\sigma_z}^{A^{(i)}} = A_0^{(i)}, \quad {\sigma_x}^{A^{(i)}} = A_1^{(i)},\\
        & {\sigma_z}^{B^{(i)}} = \frac{B_0^{(i)}+B_1^{(i)}}{\sqrt{2}}, \quad {\sigma_x}^{B^{(i)}} = \frac{B_0^{(i)}-B_1^{(i)}}{\sqrt{2}},
    \end{split}
\end{equation}
in terms of the measurement operators for all $i \in \{C,T,1,2\}$. We can then insert this in~\cref{eq:Cprojectednullifiers} and~\cref{eq:Cnullifiers} to obtain the expression of the nullifiers in terms of measurement operators.\\

\noindent \underline{\textbf{Step 3:}} We can now lift the nullifiers to the formal polynomial algebra and compute the SOS
\begin{equation}
    P := \sum_{r\in\{z,x\}} \left(\sum_{s\in \{0,1\}, \ t\in \{1,2,3\}} N_{r,s,t}^2 \right) + N_{r,C}^2. 
\end{equation}
To compute this we use the following identities: $(A_x^{(i)})^2 = (B_y^{(i)})^2=1$, $(\Pi_s^{(i)})^2 =\Pi_s^{(i)}$ and measurements across different parties commute. Note in particular that
\begin{equation}
    N_{z,0,1}^2 = \Pi_0^{(C)} (2 - \sqrt{2} A_0^{(T)} (B_0^{(1)}+B_1^{(1)}) + \frac{1}{2} \{B_0^{(1)}, B_1^{(1)}\}), 
\end{equation}
while 
\begin{equation}
    N_{x,0,1}^2 = \Pi_0^{(C)} (2 - \sqrt{2} A_1^{(T)} (B_0^{(1)}-B_1^{(1)}) - \frac{1}{2} \{B_0^{(1)}, B_1^{(1)}\}).
\end{equation}
The main idea for building the SOS in this way is that when summing the last two terms over $r\in\{z,x\}$ we cancel out the non-measurable terms $\Pi_0^{(C)} \{B_0^{(1)}, B_1^{(1)}\}$ while the measurement part is proportional to $\beta^{(A^{(T)}B^{(1)})}_\text{CHSH}$. \\

\noindent \underline{\textbf{Step 4:}} A direct algebraic computation of the full SOS gives: 
\begin{equation} \label{eq:SOSswitch}
\begin{split}
    P = 12 - \sqrt{2}S_{\text{Switch}}
\end{split}
\end{equation}
where we used the fact that $\Pi_0^{(C)}+\Pi_1^{(C)}=1$ and 
\begin{equation} \label{eq:BellopswitchST}
\begin{split}
    S_{\text{Switch}} := \Pi_0^{(C)}&  \left[ \beta^{(A^{(T)}B^{(1)})}_\text{CHSH}+\beta^{(A^{(1)}B^{(2)})}_\text{CHSH}+\beta^{(A^{(2)}B^{(T)})}_\text{CHSH}\right]  + \Pi_1^{(C)} \left[ \beta^{(A^{(T)}B^{(2)})}_\text{CHSH}+\beta^{(A^{(1)}B^{(T)})}_\text{CHSH}+\beta^{(A^{(2)}B^{(1)})}_\text{CHSH}\right] \\
    & + \Pi_0^{(T)} \Pi_0^{(1)} \Pi_0^{(2)} \left[\beta^{(A^{(C)}B^{(C)})}_\text{CHSH}-2\sqrt{2}\right], 
\end{split}
\end{equation}
is the candidate Bell expression. Note that the SOS immediately implies that $S_{\text{Switch}} \preceq 6\sqrt{2}$. Since the ideal realization (with state~\cref{eq:switchstateout} and CHSH measurements~\cref{eq:refchsh}) saturates this bounds, this is indeed the quantum bound of this polynomial. We can now move to our main result.

\begin{theorem} \textbf{Self-testing $\mathcal{C}(\mathcal{J}(\bar W_{\text{Switch}}))$.} \label{TheoremSelftest} \\
    Let us consider an 8-partite Bell scenario where each party performs two binary measurements. For every ``channel-free'' system, we denote $\Pi_s^{(i)} = (\id+(-1)^s A_0^{(i)})/2$ the projection on the eigenspaces of the first measurement operator. \\
    Consider the Bell expression~\cref{eq:BellopswitchST}. If the observed statistics verify  
\begin{equation}
    \Tr(\rho S_{\text{Switch}}) = 6\sqrt{2},
\end{equation} the state $\mathcal{C}(\mathcal{J}(\bar W_{\text{Switch}}))=\ketbra{\psi_{\text{out}}}$ (see~\cref{eq:switchstateout}) is self-tested. 
\end{theorem}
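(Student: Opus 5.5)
The plan follows the template of the proof of \cref{TheoremSelftestGrover}. We work with a purification and take $\rho=\ketbra{\psi}$ pure. Since every term of the SOS decomposition~\cref{eq:SOSswitch} is positive semidefinite, reaching $\Tr(\rho S_{\text{Switch}})=6\sqrt{2}$ forces every lifted nullifier to annihilate $\ket{\psi}$.

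Concretely, $N_{r,s,t}\ket{\psi}=0$ for all $r\in\{z,x\}$, $s\in\{0,1\}$, $t\in\{1,2,3\}$, and $N_{z,C}\ket{\psi}=N_{x,C}\ket{\psi}=0$. Here the $\Pi$'s are the projectors built from the physical operators $A_0^{(i)}$, and the Pauli operators are replaced by the CHSH combinations of Step~2.

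The first step extracts anticommutation relations from the $C$-conditioned nullifiers. For fixed $s$ and a pair of parties, say $A^{(T)}B^{(1)}$ with $s=0$, the nullifiers give
$$\Pi_0^{(C)}A_0^{(T)}\ket{\psi}=\Pi_0^{(C)}\tfrac{B_0^{(1)}+B_1^{(1)}}{\sqrt2}\ket{\psi}, \qquad \Pi_0^{(C)}A_1^{(T)}\ket{\psi}=\Pi_0^{(C)}\tfrac{B_0^{(1)}-B_1^{(1)}}{\sqrt2}\ket{\psi}.$$
Since $\Pi_0^{(C)}$ acts on a different party, it commutes with all the operators involved. The manipulation used for~\cref{eq:Banticomm,eq:Aanticomm} therefore goes through verbatim on the vector $\Pi_0^{(C)}\ket{\psi}$. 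This yields $\{B_0^{(1)},B_1^{(1)}\}\Pi_0^{(C)}\ket\psi=0$ and $\{A_0^{(T)},A_1^{(T)}\}\Pi_0^{(C)}\ket\psi=0$.

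Each $A^{(I)}$ and $B^{(I)}$ with $I\in\{T,1,2\}$ appears in one pairing for $s=0$ and in another for $s=1$. Adding the two branches, using $\Pi_0^{(C)}+\Pi_1^{(C)}=\id$, gives full anticommutation on $\ket\psi$ for all these parties.

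For the control pair we only obtain $\{A_0^{(C)},A_1^{(C)}\}\,\Pi_0^{(T)}\Pi_0^{(1)}\Pi_0^{(2)}\ket\psi=0$, and similarly for $B^{(C)}$. A subtlety is that $\Pi_0^{(C)}$ is built from $A_0^{(C)}$, which is itself being characterised. I expect this part to be the main obstacle.

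Next I would pass to qubits via Jordan's lemma, exactly as in the proof of \cref{TheoremSelftestGrover}. Each party has two binary observables, so the local spaces split into qubit blocks. The Bell value is a convex combination of block values, so each block realization must attain $6\sqrt2$ and satisfy all the nullifier relations.

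Within a block, $\{A_0^{(I)},A_1^{(I)}\}\propto\id$. Any vector annihilated by this operator is therefore either zero or the anticommutator vanishes. For $I\in\{T,1,2\}$ the relevant vector is $\ket\psi$ itself, so these parties are genuinely complementary. One then chooses local bases in which they take the CHSH form~\cref{eq:refchsh}.

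For the control parties the relevant vector is $\Pi_0^{(T)}\Pi_0^{(1)}\Pi_0^{(2)}\ket\psi$. I would first show it is nonzero. Now $\Pi_0^{(I)}$ is the projector onto the $+1$ eigenspace of $\sigma_z$ in the chosen basis. The $C$-conditioned nullifiers already fix $\ket{\psi_0}$ and $\ket{\psi_1}$ as products of Bell pairs times $\ket{\xi_s}$, exactly as in the completeness argument of Step~1, and these have nonvanishing overlap with $\ket{000}$ on $A^{(T)}A^{(1)}A^{(2)}$. Hence the vector is nonzero, and both $A^{(C)}$ and $B^{(C)}$ are complementary on their qubit blocks.

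Care is needed here because the decomposition $\ket\psi=\ket0\ket{\psi_0}+\ket1\ket{\psi_1}$ uses the eigenbasis of $A_0^{(C)}$. This is legitimate once we are in a qubit block, since $A_0^{(C)}$ is then a genuine Pauli observable whose eigenbasis defines $\ket0,\ket1$.

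With all measurements equal to the reference ones up to local unitaries, the nullifiers coincide with the ideal~\cref{eq:Cprojectednullifiers,eq:Cnullifiers}. These were shown to be complete in Step~1, so each block state equals $\ket{\psi_{\text{out}}}$ up to local unitaries.

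Finally, the block-wise identifications are assembled into local extraction maps. One can either take the direct sum of the block unitaries followed by tracing out the block label, or use the explicit circuit~\cref{eq:explicitextraction} built from the now complementary observables. Because these maps depend only on the local measurements, which are the same as in \cref{LemmaSelftest_CHSH_parallel}, the resulting extraction maps coincide with those certifying $\bar\Psi$, as required for the channel self-test.
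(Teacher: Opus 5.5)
Your proof is correct and follows the paper's skeleton. The SOS forces every nullifier to annihilate $\ket{\psi}$. Anticommutation for the $T,1,2$ parties is obtained on $\Pi_s^{(C)}\ket{\psi}$ and summed over $s$. For the control pair it is obtained only on $\Pi_0^{(T)}\Pi_0^{(1)}\Pi_0^{(2)}\ket{\psi}$. Jordan's lemma and completeness of the nullifiers then finish.

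The one step you handle differently is showing that $\Pi_0^{(T)}\Pi_0^{(1)}\Pi_0^{(2)}\ket{\psi}\neq 0$.

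\textbf{The paper's route.} It proves this before any reduction to qubits, by an operator chain:
\begin{itemize}
\item it inserts $\bigl((B_0^{(1)}-B_1^{(1)})/\sqrt{2}\bigr)^2$ and trades it for $A_1^{(T)}$ via $N_{x,0,1}\ket{\psi}=0$;
\item it uses $A_1^{(T)}\Pi_0^{(T)}A_1^{(T)}=\Pi_1^{(T)}$ on the state;
\item it repeats this with $B^{(2)}$ and $B^{(T)}$, reaching $\bra{\psi}\Pi_0^{(T)}\Pi_0^{(1)}\Pi_0^{(2)}\ket{\psi}=\tfrac{1}{8}$.
\end{itemize}

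\textbf{Your route.} You argue inside a qubit block, after the $T,1,2$ parties have been rotated to CHSH form. You then read the norm off the Bell-pair structure of $\ket{\psi_0},\ket{\psi_1}$ that the completeness argument produces. The projected vector is $\tfrac{1}{2\sqrt{2}}\bigl(\ket{0}\ket{\xi_0}+\ket{1}\ket{\xi_1}\bigr)$, tensored with $\ket{000}$ on both $A^{(T)}A^{(1)}A^{(2)}$ and $B^{(T)}B^{(1)}B^{(2)}$. Its squared norm is therefore $\tfrac{1}{8}$ of the block state's. No cancellation can occur, since the two branches are orthogonal on $A^{(C)}$, a point your ``nonvanishing overlap'' sentence leaves implicit.

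\textbf{Comparison.} Your version reuses a computation you need anyway and gives nonvanishing block by block, which is exactly what the Jordan step consumes. The paper's identity is stated for $\ket{\psi}$. It transfers to each block component only because all operators involved commute with the block projectors, a point the paper does not spell out. The paper's version, in turn, is basis-free and holds on the original Hilbert space, which is the form better suited to a robust extension. Your ordering also involves no circularity. Fixing the $T,1,2$ measurements and decomposing in the eigenbasis of $A_0^{(C)}$ presupposes nothing about $A_1^{(C)}$ or $B^{(C)}$.
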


\begin{proof}
The proof then follows the same steps to the one of~\cref{TheoremSelftestGrover}. Namely, we first use the SOS decomposition to proove anti-commutation relations for observables on the states. Then, we utilize Jordan's lemma to reduce to qubit realizations and conclude using the completeness of the set of nullifiers.

Specifically, without loss a generality (dilation to larger spaces), we assume that the shared state $\rho = \ketbra{\psi}$ is pure. Due to the SOS decomposition~\cref{eq:SOSswitch}, it holds that $N_{r,s,t} \ket{\psi} = N_{r,C}\ket{\psi} = 0 $ for all $r,s,t$. Now this implies that the nullifier operators are the ideal ones when acting on $\ket{\psi}$. For example, using the first relation for $r,s,t=z,0,1$ and $r,s,t=z,1,3$, we obtain that $\Pi_s^{(C)} \{B_0^{(1)}, B_1^{(1)}\} \ket{\psi} = 0 $ and summing over $s$ we get $\{B_0^{(1)}, B_1^{(1)}\} \ket{\psi} = 0 $. Similar relations can be used to derive anti-commutation relations for parties $A/B^{(T/1/2)}$, i.e.
\begin{equation}\label{eq:12TcommSwitch}
    \begin{split}
        & \{A_0^{(i)}, A_1^{(i)}\} \ket{\psi} = \{B_0^{(i)}, B_1^{(i)}\} \ket{\psi} = 0, 
    \end{split}
\end{equation}
for $i \in \{1,2,T\}$. For the control systems, $N_{r,C}\ket{\psi} = 0$ gives rise to
\begin{equation}\label{eq:Ccommswitch}
    \{A_0^{(C)}, A_1^{(C)}\}\Pi_0^{(T)} \Pi_0^{(1)} \Pi_0^{(2)} \ket{\psi} = \{B_0^{(C)}, B_1^{(C)}\}\Pi_0^{(T)} \Pi_0^{(1)} \Pi_0^{(2)} \ket{\psi}=0.
\end{equation}
Note that one can also make a statement one the norm of $\Pi_0^{(T)} \Pi_0^{(1)} \Pi_0^{(2)} \ket{\psi}$. Indeed, observe that:
\begin{equation} \label{eq:painfulderivation}
    \begin{split}
        \bra{\psi}\Pi_0^{(C)}\Pi_0^{(T)} \Pi_0^{(1)} \Pi_0^{(2)} \ket{\psi} & \underset{xx}{=} \bra{\psi}\Pi_0^{(C)}\left(\frac{B_0^{(1)}-B_1^{(1)}}{\sqrt{2}}\right)^2\Pi_0^{(T)} \Pi_0^{(1)} \Pi_0^{(2)} \ket{\psi}  \\
        & = \bra{\psi}\Pi_0^{(C)}\left(\frac{B_0^{(1)}-B_1^{(1)}}{\sqrt{2}}\right)\Pi_0^{(T)} \Pi_0^{(1)} \Pi_0^{(2)} \left(\frac{B_0^{(1)}-B_1^{(1)}}{\sqrt{2}}\right) \Pi_0^{(C)} \ket{\psi}\\
        & \underset{N_{x,0,1}\ket{\psi}=0}{=} \bra{\psi}\Pi_0^{(C)}A_1^{(T)}\Pi_0^{(T)} \Pi_0^{(1)} \Pi_0^{(2)} A_1^{(T)} \Pi_0^{(C)} \ket{\psi}\\
        & \underset{\ref{eq:12TcommSwitch}}{=} \bra{\psi}\Pi_0^{(C)}\Pi_1^{(T)} \Pi_0^{(1)} \Pi_0^{(2)} \ket{\psi}, 
    \end{split}
\end{equation}
and thus using $\Pi_0^{(T)} + \Pi_1^{(T)} = \id$, one has
\begin{equation}
    \bra{\psi}\Pi_0^{(C)}\Pi_0^{(T)} \Pi_0^{(1)} \Pi_0^{(2)} \ket{\psi} = \frac{1}{2} \bra{\psi}\Pi_0^{(C)} \Pi_0^{(1)} \Pi_0^{(2)} \ket{\psi}.
\end{equation}
Doing the same as in \cref{eq:painfulderivation} but inserting operators on $B^{(2)}$ and using $N_{x,0,2} \ket{\psi} = 0$, we obtain: 
\begin{equation}
    \bra{\psi}\Pi_0^{(C)} \Pi_0^{(1)} \Pi_0^{(2)} \ket{\psi} = \frac{1}{2} \bra{\psi}\Pi_0^{(C)}  \Pi_0^{(2)} \ket{\psi}.
\end{equation}
Last, doing the same again inserting operators on $B^{(T)}$ and using $N_{x,0,3} \ket{\psi} = 0$, we get: 
\begin{equation}\label{eq:lastpainful}
    \bra{\psi}\Pi_0^{(C)} \Pi_0^{(2)} \ket{\psi} = \frac{1}{2} \bra{\psi}\Pi_0^{(C)} \ket{\psi}.
\end{equation}
Combining Eqs.~(\ref{eq:painfulderivation}-\ref{eq:lastpainful}) together with their analog for $\Pi_1^{(C)}$, and using the fact that $\Pi_0^{(C)}+\Pi_1^{(C)}=\id$, we finally obtain
\begin{equation} \label{eq:projectionnorm}
    \bra{\psi}\Pi_0^{(T)} \Pi_0^{(1)} \Pi_0^{(2)} \ket{\psi} = \bra{\psi}\ket{\psi} = 1/8.
\end{equation}

Next, since every party holds two binary measurements, we can use Jordan's lemma~\cite{Supic20} to decompose the local Hilbert spaces to qubit blocks. The overall statistics can then be expressed as the convex combination of the statistics obtained on each qubit-...-qubit blocks, and the bound of $S_{\text{Switch}}$ can only be saturated if all qubit-...-qubit contributions reach the maximal value $\sqrt{2}n$. It thus suffices to show that the only qubit realization that can reach $\sqrt{2}n$ is the reference one up to local unitaries.

Since we can  now assume that the realization is on local qubit Hilbert spaces, the measurements  verify $\{A_0^{(i)}, A_1^{(i)}\}~\propto~\id$. \cref{eq:12TcommSwitch} then implies that $\{A_0^{(i)}, A_1^{(i)}\}=0$ for $i\in\{1,2,T\}$. Moreover, since \cref{eq:projectionnorm} implies that $\Pi_0^{(T)} \Pi_0^{(1)} \Pi_0^{(2)} \ket{\psi}$ is a non-zero vector, \cref{eq:Ccommswitch} proves that $\{A_0^{(C)}, A_1^{(C)}\}=0$. Likewise, $\{B_0^{(i)}, B_1^{(i)}\}=0$ for $i\in\{1,2,T,C\}$. This means that there exist a local choice of basis such that all measurements are the ideal CHSH ones (see~\cref{eq:refchsh}). Therefore, the nullifiers are equivalent to the ideal ones in \cref{eq:Cprojectednullifiers} and \cref{eq:Cnullifiers} up to a choice of local basis and thus identify a unique quantum state, equivalent to $\ket{\psi_{\text{out}}}$ up to local unitaries. This concludes the proof. 
\end{proof}

%%%%%%%%%%%%%%%%%%%%%%%%%%%%%%%%%%%%%%%%%
\section{Proof of~\cref{lemma1}} \label{app:lemmas}

In this appendix, we present the proof of \cref{lemma1} of the main text. We first introduce a number of preliminary lemmas that will be used to establish the desired result. We prove these through the diagrammatic notations already used in the main text, where green wires and boxes represent trusted systems and operations (whose dimensions are known, assumed to be finite), red wires and boxes represent untrusted systems and operations (whose dimensions are unknown, allowed to be either finite or countably infinite), while the introduction of a blue wire or a blue box is to be interpreted as ``there exists a system or an operation such that...''. All boxes in the diagrams below represent CPTP (deterministic) operations; the ordering of the wires, from top to bottom, on each side of an inequality is the same.\footnote{These conventions are the same as those used in the main text, except that occasionally we changed the ordering of the wires in an equation---in which case we were using labels to clearly identify these: see e.g.~\cref{eq: grover Choi,eq: choi swap,eq:STJC_Switch}.}

\begin{lemma}\label{prop0_1} For any CPTP map $\mathcal{E}$, 
\begin{align}
    \begin{aligned}
        \includegraphics[width= 8cm]{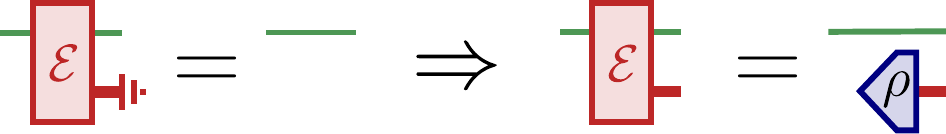} \ \ .
    \end{aligned}
\end{align}
\end{lemma}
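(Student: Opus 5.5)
I read the (graphical) statement of \cref{prop0_1} as the diagrammatic form of the Stinespring dilation theorem. For any CPTP map $\mathcal{E}$ (red box, input and output systems of possibly countably infinite dimension), there exist an auxiliary output system (blue wire) and an isometric channel $\mathcal{V}[\cdot]=V\cdot V^\dagger$ (blue box) such that discarding (ground symbol) the auxiliary output of $\mathcal{V}$ reproduces $\mathcal{E}$. As an immediate companion identity, discarding the output of $\mathcal{E}$ equals discarding its input. This is the trace-preservation condition, which is how the lemma will be used in the causal argument behind \cref{lemma1}. My plan is to prove the dilation algebraically and then translate each step back into the diagrammatic conventions of this appendix.

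First, I will take a Kraus decomposition $\mathcal{E}[\rho]=\sum_k K_k\rho K_k^\dagger$ with $\sum_k K_k^\dagger K_k=\id$. In the separable (possibly infinite-dimensional) setting, the Kraus index can be taken countable, with the sum converging in the strong (ultraweak) topology; this is the standard Kraus/Stinespring theorem for normal CP maps. I will then introduce an environment Hilbert space $\mathcal{H}_E$ with orthonormal basis $\{\ket{k}\}$ and define
\begin{equation*}
V\ket{\psi}=\sum_k K_k\ket{\psi}\otimes\ket{k}_E .
\end{equation*}
The map $V$ is an isometry, since $V^\dagger V=\sum_k K_k^\dagger K_k=\id$. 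A direct computation then gives $\mathrm{Tr}_E\,V\rho V^\dagger=\sum_k K_k\rho K_k^\dagger=\mathcal{E}[\rho]$, which is exactly the diagrammatic identity with $\mathcal{V}$ as the blue box and $E$ as the blue wire. If the diagram instead requires a unitary with a fixed ancilla input, I will extend $V$ to a unitary on $\mathcal{H}_{\rm in}\otimes\mathcal{H}_{E'}$ by completing the orthonormal image of $\mathcal{H}_{\rm in}\otimes\ket{0}$ to a basis. In infinite dimension this completion can require enlarging $E'$ so that the orthogonal complements have equal dimension; taking $E'$ to be countably infinite, so that both complements are countably infinite, ensures this.

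Second, for the trace-preservation form, I will note that $\mathrm{Tr}\,\mathcal{E}[\rho]=\mathrm{Tr}\,\rho$ for all $\rho$. Discarding the output therefore coincides with discarding the input, also when $\mathcal{E}$ acts on part of a larger system, because $\mathrm{id}\otimes\mathcal{E}$ is again trace preserving on the relevant factor.

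The main obstacle is purely technical and concerns the red wires of unknown, possibly countably infinite, dimension. There I must check that the Kraus sum and the operator $V$ are well defined, and that the unitary extension exists. I would handle this by appealing to the Stinespring theorem for normal CP maps between $B(\mathcal{H})$ algebras. This theorem guarantees a separable dilation space whenever the input and output spaces are separable. I would then perform the unitary completion using countable orthonormal bases. In the finite-dimensional (green) case all of these steps are elementary linear algebra.
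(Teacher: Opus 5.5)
\textbf{The proposal proves a different statement.} \cref{prop0_1} is neither the Stinespring dilation nor trace preservation. It is an implication, a decoupling statement:
\begin{itemize}
\item \emph{Hypothesis:} $\mathcal{E}$ has a trusted (green) input $A$ and two outputs, a green one $A$ and an untrusted red one $R$, and it transmits $A$ perfectly once $R$ is discarded, i.e.\ $\mathrm{Tr}_R\circ\mathcal{E}=\mathrm{id}_A$.
\item \emph{Conclusion:} there exists a state $\rho$ on $R$ (the blue box) such that $\mathcal{E}[X]=X\otimes\rho$, so the red output is completely uncorrelated with the input.
\end{itemize}
You can see this from how the lemma is used in the proof of \cref{prop1}. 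There, Stinespring is invoked only as a standard tool to dilate $\Lambda$. \cref{prop0_1} is then applied to $U\circ(\mathcal{T}\otimes\tau)$ to conclude that it equals $\mathrm{id}\otimes\rho$. Nothing in your proposal yields this conclusion: not the existence of an isometric dilation, not the unitary completion in infinite dimension, and not $\mathrm{Tr}\circ\mathcal{E}=\mathrm{Tr}$. So the proposal does not establish the lemma. The infinite-dimensional technicalities you single out as the main obstacle play no role here.

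\textbf{The missing idea is a purity (monogamy) argument.} The paper feeds half of $\Phi^+$ into $\mathcal{E}$. By hypothesis, the resulting Choi state $\mathcal{C}(\mathcal{E})$ on $A'AR$ has the pure marginal $\Phi^+$ on $A'A$. Any state with a pure marginal is a product, so $\mathcal{C}(\mathcal{E})=\Phi^+\otimes\rho$. Inverting the Choi--Jamio{\l}kowski isomorphism, which is legitimate because $A$ is finite-dimensional, gives $\mathcal{E}[X]=X\otimes\rho$.

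Alternatively, your Kraus machinery can be redirected to the correct statement, as follows.
\begin{itemize}
\item Let $\{K_k\}$ be Kraus operators of $\mathcal{E}$ and $\{|j\rangle\}$ a basis of $R$. Then $\{(\id_A\otimes\langle j|)K_k\}_{j,k}$ is a Kraus representation of $\mathrm{id}_A$.
\item The Choi operator of $\mathrm{id}_A$ is rank one, so every Kraus operator of $\mathrm{id}_A$ is proportional to $\id_A$. Hence $(\id_A\otimes\langle j|)K_k=c_{jk}\id_A$, with $\sum_{j,k}|c_{jk}|^2=1$.
\item Therefore $K_k=\id_A\otimes|\phi_k\rangle$ with $|\phi_k\rangle=\sum_j c_{jk}|j\rangle$, and $\mathcal{E}[X]=X\otimes\sum_k|\phi_k\rangle\langle\phi_k|$.
\end{itemize}
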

\begin{proof} Let us plug in half a maximally entangled state $\Phi^+$ into $\mathcal{E}$. By the hypothesis of the lemma, the Choi state $\mathcal{C}(\mathcal{E})$ thus obtained is equal to $\Phi^+$ after tracing out the untrusted red system. It must therefore be product, i.e.~there must exist a state $\rho$ such that
\begin{align}
    \begin{aligned}
        \includegraphics[height= 1.6cm]{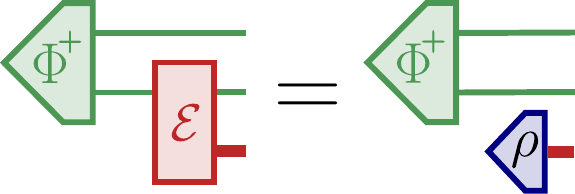} \ \ .
    \end{aligned}
\end{align}
The conclusion of the lemma is then obtained after applying the inverse \ChJa isomorphism to remove $\Phi^+$ in the equation above.
\end{proof}

\begin{lemma} \label{prop1} For any CPTP map $\mathcal{T}$,
    \begin{align}
    \begin{aligned}
        \includegraphics[height= 1.5cm]{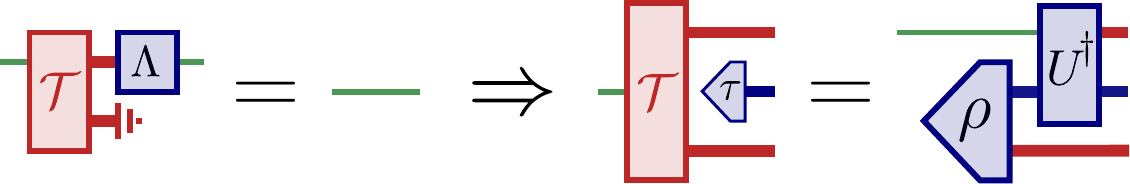} \ \ ,
    \end{aligned}
\end{align} 
where $U^\dag$ is a unitary operation. 
\end{lemma}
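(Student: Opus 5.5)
The diagram is not reproduced here, so I read the statement as a dilation statement. Any CPTP map $\mathcal{T}$ can be written as a unitary $U$ acting on its input together with an auxiliary (red) system prepared in a fixed state, followed by discarding part of the output. Moreover, the unitary can be undone by $U^\dag$, so the full (undiscarded) output of $U$ contains the input faithfully. My plan is the standard Stinespring route, phrased so that it holds for the untrusted, possibly countably-infinite-dimensional wires used throughout this appendix.

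\emph{Isometric dilation.} First I would invoke the Stinespring dilation theorem. Since $\mathcal{T}$ is CPTP, there is an isometry $V:\mathcal{H}_{\rm in}\to\mathcal{H}_{\rm out}\otimes\mathcal{H}_E$ with $\mathcal{T}[\rho]=\mathrm{Tr}_E(V\rho V^\dag)$. Here $E$ is an environment system, which I draw as a red wire because its dimension is not controlled. It can be chosen separable, since $\mathcal{T}$ admits a countable Kraus decomposition $\{K_k\}$; one then takes $V=\sum_k K_k\otimes\ket{k}_E$.

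\emph{Unitary extension.} Next I would promote $V$ to a unitary. Append an ancilla prepared in a fixed state $\ket{0}_R$ and define $U$ on the subspace $\mathcal{H}_{\rm in}\otimes\ket{0}_R$ by $U(\ket{\psi}\ket{0})=V\ket{\psi}$. To extend $U$ to a unitary on $\mathcal{H}_{\rm in}\otimes\mathcal{H}_R$, I would enlarge $R$ and $E$ if necessary so that $\mathcal{H}_{\rm in}\otimes\mathcal{H}_R\cong\mathcal{H}_{\rm out}\otimes\mathcal{H}_E$. I would then match the orthogonal complement of $\mathcal{H}_{\rm in}\otimes\ket{0}$ with the orthogonal complement of $\mathrm{Im}(V)$, which is possible once both complements have equal Hilbert dimension.

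\emph{Equality of diagrams.} Diagrammatically this gives $\mathcal{T}=\mathrm{Tr}_E\circ U\circ(\,\cdot\otimes\ketbra{0}_R)$. Applying $U^\dag$ to the full output $\mathrm{out}\otimes E$ returns the input in tensor product with $\ketbra{0}_R$. This is presumably the form that the later steps in the proof of \cref{lemma1} need: they can absorb $U$ into an embedding comb and invert it with $U^\dag$. If the diagram instead places $U^\dag$ after $\mathcal{T}$ with the red system kept, the same construction applies with the roles of input and output exchanged.

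The main obstacle is the dimension bookkeeping for the extension when the red wires are infinite-dimensional. In that case the complements may both be infinite, and matching them needs care. My first attempt would be to take $R$ and $E$ both infinite-dimensional and separable, for example by tensoring each with $\ell^2(\mathbb{N})$. Then both orthogonal complements are infinite-dimensional separable spaces, hence unitarily isomorphic, and the extension exists. In the finite-dimensional case the extension is routine once one pads $R$ so that $d_{\rm in}d_R=d_{\rm out}d_E$.
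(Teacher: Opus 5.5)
\textbf{There is a genuine gap: your proposal proves a different statement.} Lemma~\ref{prop1} is not a dilation theorem for $\mathcal{T}$. It is an implication.
\begin{itemize}
\item Its hypothesis is that $\mathcal{T}$ (trusted input, untrusted output) is perfectly undone by some channel $\Lambda$, i.e.\ $\Lambda\circ\mathcal{T}=\mathrm{id}$ on the trusted system.
\item Its conclusion is that $\mathcal{T}$ itself discards nothing: for some states $\tau,\rho$ and a unitary $U$, one has $U\circ(\mathcal{T}\otimes\tau)=\mathrm{id}\otimes\rho$, i.e.\ $\mathcal{T}\otimes\tau=U^\dag\circ(\mathrm{id}\otimes\rho)$.
\end{itemize}
This conclusion is false for a generic CPTP map. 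For the completely depolarizing channel, $\mathcal{T}\otimes\tau$ does not depend on the input, while $U^\dag\circ(\mathrm{id}\otimes\rho)$ is injective. So no argument that never uses the hypothesis can establish it.

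Your Stinespring construction gives $\mathcal{T}=\mathrm{Tr}_E\circ U\circ(\,\cdot\otimes|0\rangle\langle 0|_R)$. The trace over $E$ is exactly what the lemma has to exclude. That $U^\dag$ inverts the \emph{full} output of your dilation says nothing about recovering the input from the output of $\mathcal{T}$ alone. That recovery is what Lemma~\ref{lemma0} relies on when it substitutes this identity for $\mathcal{T}$ and inserts $U\circ U^\dag=\mathrm{id}$. Your fallback of ``exchanging the roles of input and output'' fails for the same reason: it would make every channel reversible.

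\textbf{The missing idea is to dilate the left inverse $\Lambda$, not $\mathcal{T}$.}
\begin{enumerate}
\item Write $\Lambda=\mathrm{Tr}_E\circ U\circ(\,\cdot\otimes\tau)$ for a unitary $U$ and a state $\tau$.
\item The hypothesis then says that $\mathcal{E}:=U\circ(\mathcal{T}\otimes\tau)$, a channel from the trusted input to (trusted output, $E$), has the identity as its trusted marginal.
\item By Lemma~\ref{prop0_1}, the Choi state of $\mathcal{E}$ has the pure marginal $\Phi^+$ on the trusted systems. It is therefore a product $\Phi^+\otimes\rho$, so $\mathcal{E}=\mathrm{id}\otimes\rho$.
\item Applying $U^\dag$ to both outputs gives $\mathcal{T}\otimes\tau=U^\dag\circ(\mathrm{id}\otimes\rho)$.
\end{enumerate}
Your careful unitary extension for possibly infinite-dimensional red wires is still useful. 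It is what justifies the unitary dilation of $\Lambda$, which the paper takes for granted; you applied it to the wrong channel.
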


\begin{proof}
    Let us consider a unitary Stinespring dilation of the channel $\Lambda$, i.e.
    \begin{align}
    \begin{aligned}
        \includegraphics[height= 1cm]{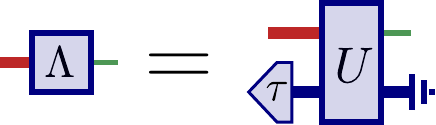} \ \ ,
    \end{aligned}
\end{align}
for some unitary channel $U$ and some state $\tau$.
The hypothesis of the lemma can then be written
\begin{align}
    \begin{aligned}
        \includegraphics[height= 1.5cm]{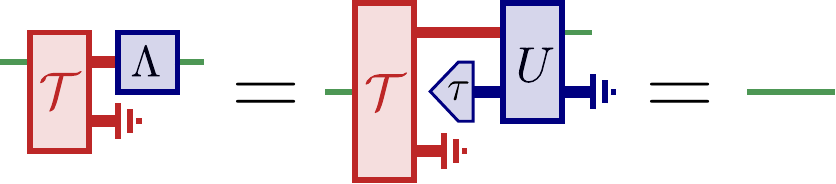} \ \ ,
    \end{aligned}
\end{align}
and it immediately follows from~\cref{prop0_1} that 
\begin{align} 
    \begin{aligned}
        \includegraphics[height= 1.5cm]{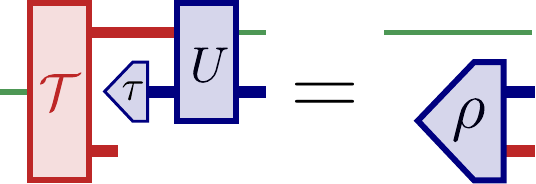} \ \ .
    \end{aligned}
\end{align}
Applying the inverse unitary channel $U^\dag$ on the top two output wires cancels $U$ in the lhs of the equation above and concludes the proof.
\end{proof}

\begin{lemma}\label{prop0_2} For any CPTP map $\mathcal{E}$ and any state $\rho$,
    \begin{align}
    \begin{aligned}
        \includegraphics[height= 1.5 cm]{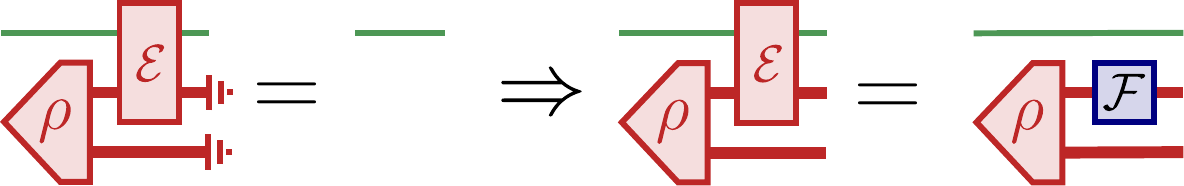} \ \ .
    \end{aligned}
\end{align}
\end{lemma}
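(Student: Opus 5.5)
Since the statement has no hypothesis, I read \cref{prop0_2} as a structural rewriting identity rather than a consequence of a premise. The identity holds for every CPTP map $\mathcal{E}$ and every (possibly untrusted, red) state $\rho$. It says that the configuration in which $\rho$ is prepared and fed into one input of $\mathcal{E}$, alongside the open wire(s), can be re-expressed with the blue objects on the right-hand side. Those blue objects are existentially quantified. My plan is to construct them explicitly from two standard dilation results: purification of states and the Stinespring dilation of channels. I will then check that the resulting diagram coincides with the left-hand side as a map on the open wires. This is the same tool already used in the proof of \cref{prop1}.

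\textbf{Step 1.} Observe that the map $X\mapsto\mathcal{E}[X\otimes\rho]$ is itself CPTP on the remaining open input, since it is the composition of the CPTP preparation $X\mapsto X\otimes\rho$ with $\mathcal{E}$. Thus the left-hand side is a legitimate channel, to which the dilation theorems apply.

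\textbf{Step 2.} Purify $\rho$ as $\rho=\mathrm{Tr}_R\,\ketbra{\psi}$ on an auxiliary system $R$. Since the red systems may be countably infinite-dimensional, I take $R$ to be a copy of the (separable) support of $\rho$, using its spectral decomposition $\rho=\sum_k p_k\ketbra{k}$ and $\ket{\psi}=\sum_k\sqrt{p_k}\ket{k}\ket{k}_R$.

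\textbf{Step 3.} Take a unitary Stinespring dilation $\mathcal{E}[Y]=\mathrm{Tr}_E\,U(Y\otimes\tau)U^\dagger$, exactly as in the proof of \cref{prop1}; this exists for separable Hilbert spaces. Substituting Steps 2 and 3 into the left-hand side gives
\[
\mathcal{E}[X\otimes\rho]=\mathrm{Tr}_{E R}\,(U\otimes\id_R)\big(X\otimes\ketbra{\psi}\otimes\tau\big)(U\otimes\id_R)^\dagger .
\]
I then identify the blue boxes of the right-hand side with the corresponding pieces: the joint pure state $\ketbra{\psi}\otimes\tau$, the unitary $U$ (possibly tensored with the identity on $R$), and the discarded systems $E R$. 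Equality holds term by term, by linearity and the definition of the partial trace. If the diagram instead groups the state and part of the channel into a single blue channel acting on the green wire, I will instead define that channel directly as the composite from Step 1. In that case the identity is immediate, and the only thing to verify is the CPTP property, already established in Step 1.

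\textbf{Main obstacle.} I expect the only genuine issue to be matching the exact wiring and colouring of the right-hand side. That means making sure the constructed objects act on the right systems, and that no trusted (green) wire is required to carry unknown dimension. A secondary point is the infinite-dimensional case, which I would handle by citing the Stinespring theorem for separable spaces and the normal-state spectral decomposition. The remaining algebra is routine.
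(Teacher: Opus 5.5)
\textbf{There is a genuine gap: the proposal proves a statement different from the lemma.} \cref{prop0_2} is not hypothesis-free; the diagram is an implication. The paper's proof opens by invoking ``the hypothesis of the lemma'', and \cref{cor0} uses the lemma as an implication.

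The premise is that when $\rho$ is fed into the red input of $\mathcal{E}$ and the red output is discarded, the green wire is transmitted perfectly: $\mathrm{Tr}_R\,\mathcal{E}[X\otimes\rho]=X$ for all $X$. The conclusion is a decoupling statement: $\mathcal{E}[X\otimes\rho]=X\otimes\tau$ for a red state $\tau$ independent of $X$. The paper identifies $\tau$ by tracing out the green systems, i.e.\ $\tau=\mathrm{Tr}_G\,\mathcal{E}[\tfrac{\id}{d}\otimes\rho]$, with $d$ the green dimension.

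Purifying $\rho$ and dilating $\mathcal{E}$ is possible for every $\mathcal{E}$. An argument built only from these steps therefore uses nothing about $\mathcal{E}$, and it cannot yield this conclusion, which is false in general. Take $\mathcal{E}$ to be a CNOT from the green qubit onto a red qubit, with $\rho=|0\rangle\langle 0|$. The inputs $X=|0\rangle\langle 0|$ and $X=|1\rangle\langle 1|$ give red outputs $|0\rangle\langle 0|$ and $|1\rangle\langle 1|$, so no $X$-independent $\tau$ exists. Your fallback, which defines a blue channel as the Step~1 composite so that the identity is \emph{immediate}, is equally empty. The right-hand side is a bare green wire in parallel with a red state, not an arbitrary channel acting on the green wire.

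\textbf{The missing idea is that a pure marginal forces a product state.} Your Step~1 is in fact the right starting point. The map $\mathcal{F}[X]:=\mathcal{E}[X\otimes\rho]$ is CPTP on the finite-dimensional green input, and the premise reads $\mathrm{Tr}_R\circ\mathcal{F}=\mathrm{id}$. The Choi state $(\mathrm{id}\otimes\mathcal{F})[\Phi^+]$ then has the pure marginal $\Phi^+$ on the green systems, hence equals $\Phi^+\otimes\tau$. To see this, sandwich it with the projector onto the orthogonal complement of $\Phi^+$: this gives a positive operator of zero trace, and positivity then kills the off-diagonal blocks. The argument works even if $R$ is infinite-dimensional.

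Tracing out the green systems identifies $\tau$ as above. Inverting the Choi--Jamio{\l}kowski isomorphism on the green wire alone then gives $\mathcal{E}[X\otimes\rho]=X\otimes\tau$. This is the paper's argument, equivalently \cref{prop0_1} applied to $\mathcal{F}$ plus the identification of $\tau$. No purification or Stinespring dilation is needed. Your infinite-dimensional concerns also disappear, because the isomorphism is only ever applied to the trusted, finite-dimensional wire.
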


\begin{proof}
    By virtue of the \ChJa isomorphism, we do everything at the level of the Choi states. By hypothesis of the lemma, the state
    \begin{align}\label{app: lemm 2 eq1}
    \begin{aligned}
       \includegraphics[height=2cm]{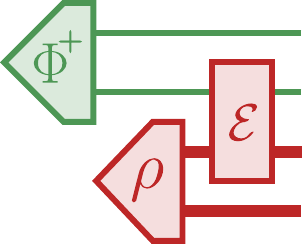} \ \ ,
    \end{aligned}
\end{align}
is equal to $\Phi^+$ after tracing the untrusted red systems. It must therefore be product, i.e.~there exists a state $\tau$ such that 
\begin{align}\label{app: lemm 2 eq2}
    \begin{aligned}
        \includegraphics[height=2cm]{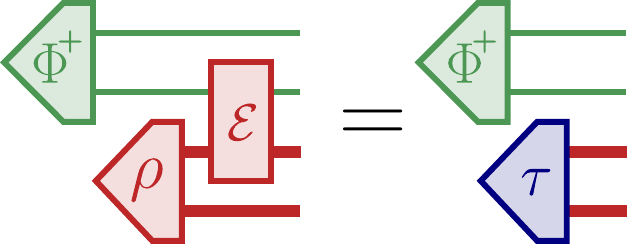} \ \ .
    \end{aligned}
\end{align}
Furthermore, tracing out the trusted green systems, we obtain 
\begin{align}
\begin{aligned}
    \includegraphics[height=2.1cm]{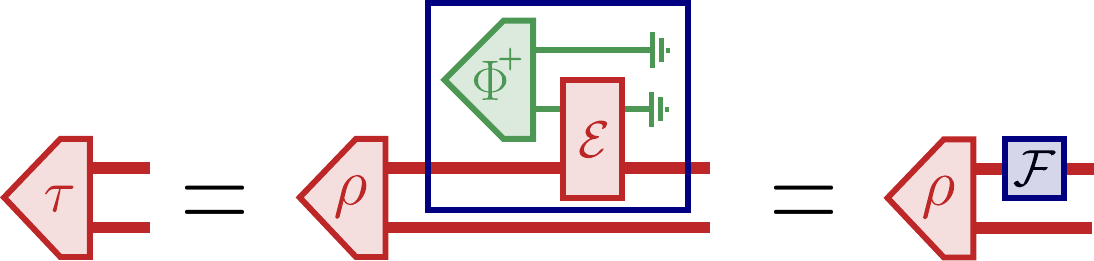} \ \ .
\end{aligned}
\end{align}
Finally, after plugging this expression of the state $\tau$ into Eq.~\eqref{app: lemm 2 eq2}, the conclusion of the lemma is obtained  by the \ChJa isomorphism.
\end{proof}

\begin{lemma}    
\label{cor0} For any CPTP map $\mathcal{E}$ and any state $\rho$
    \begin{align}
    \begin{aligned}
        \includegraphics[height= 2cm]{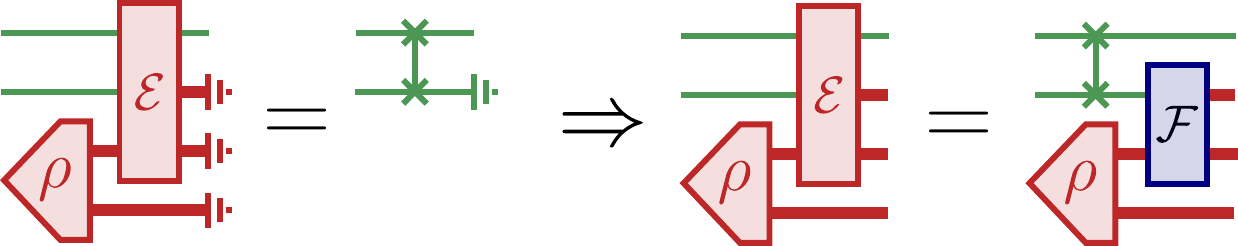} \ \ .
    \end{aligned}
\end{align}
\end{lemma}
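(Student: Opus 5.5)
The plan is to reduce the statement to \cref{prop0_2}, and if needed \cref{prop0_1}, by regrouping wires, so that the argument is again at the level of \ChJa states. Concretely, I read the hypothesis as follows: feeding the state $\rho$ into the untrusted input of $\mathcal{E}$ and discarding the untrusted outputs leaves an identity channel on the trusted system, while some additional untrusted wire is carried through the diagram. I would first absorb everything that is not the trusted system into one composite red system. That means grouping the extra untrusted input together with the system prepared in $\rho$, and likewise grouping all untrusted outputs. The premise then has exactly the form required by \cref{prop0_2}, with $\mathcal{E}$ replaced by this regrouped channel and $\rho$ kept as the fixed state (or purified, if a pure input is needed).

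Second, I would plug half of a maximally entangled state $\Phi^+$ into the trusted input and a reference purification into any free untrusted input. By hypothesis, the reduced state on the trusted output and its Choi partner is exactly $\Phi^+$. Since $\Phi^+$ is pure, monogamy forces the global state to factorize: the trusted pair is in $\Phi^+$, tensored with some state $\tau$ on all remaining (red) systems. This is the same step used in the proofs of \cref{prop0_1,prop0_2}. Tracing out the trusted systems then identifies $\tau$ as the result of applying $\mathcal{E}$ to $\rho$ with the trusted input discarded, i.e.\ fed with the maximally mixed state. This expression uses only the red part of the circuit, exactly as in the last step of the proof of \cref{prop0_2}.

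Third, I would substitute this expression for $\tau$ back into the product form and invert the \ChJa isomorphism on the trusted wire. This yields the claimed decomposition: $\mathcal{E}$ acting on $\rho$ equals an identity wire on the trusted system in parallel with a channel acting on the untrusted wires alone. That channel is obtained by feeding a fixed (maximally mixed) state into the trusted input and discarding the trusted output. If the statement additionally involves a Stinespring or unitary reformulation, I would finish as in \cref{prop1}: dilate and then cancel the unitary with its inverse $U^\dagger$.

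The main obstacle is the factorization step when an untrusted input remains open, rather than being fixed to $\rho$. In that case the product structure must hold uniformly for every input on that wire, not only for one state. I would handle it by purifying that wire with a red reference system, which makes the argument independent of the chosen input. The monogamy argument then applies to the single pure global state, and linearity of the \ChJa isomorphism extends the conclusion to all inputs. A further technical point is that the untrusted spaces may be countably infinite-dimensional. Monogamy of $\Phi^+$ only requires the trusted systems to be finite-dimensional, so I expect this to go through without change.
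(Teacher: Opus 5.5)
Your plan is correct and is essentially the paper's proof: the paper also fixes the extra open input by feeding it half of a maximally entangled state, applies \cref{prop0_2} to the resulting product input state (the $\Phi^+$ half together with $\rho$), and then removes $\Phi^+$ via the \ChJa isomorphism. The only differences are bookkeeping. In the statement, that extra open wire is a second trusted (green, finite-dimensional) input rather than an untrusted one, so the paper just swaps the two green inputs before and after to match the wire ordering of \cref{prop0_2}, and the infinite-dimensional purification concern you raise does not arise.
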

\begin{proof}
    First, we apply the \SWAP gate on both green wires to define the channel $\cE'$ satisfying 
       \begin{align}
    \begin{aligned}
        \includegraphics[height= 2cm]{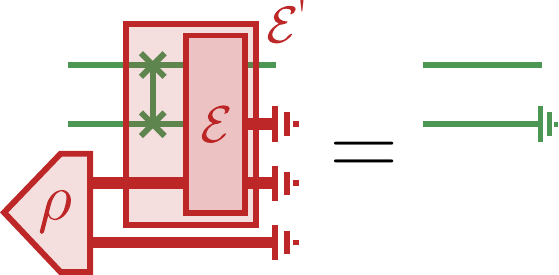} \ \ .
    \end{aligned}
\end{align}
Next, we insert half of a maximally entangled state into the second green wire to obtain the identity
\begin{align}
    \begin{aligned}
        \includegraphics[height= 2.3cm]{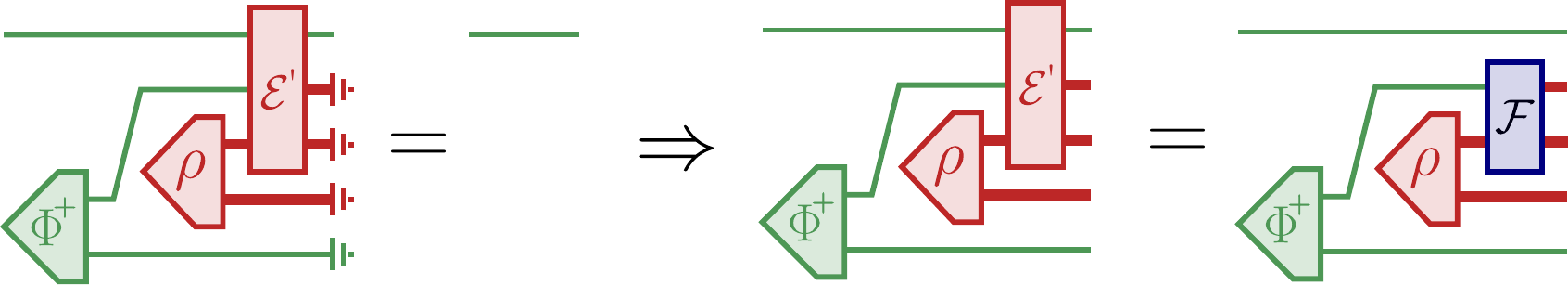} \ \,
    \end{aligned}
\end{align}
where the implication follows by applying the \cref{prop0_2} to this specific configuration with a product input state. By the \ChJa isomorphism, the last identity can be rewritten as 
\begin{align}
    \begin{aligned}
        \includegraphics[height= 2cm]{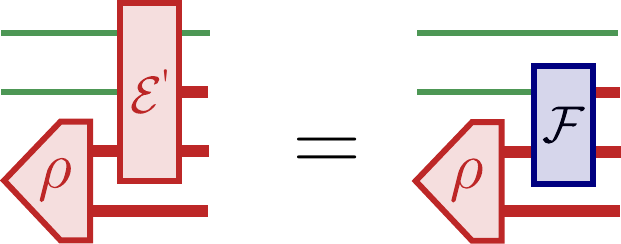} \ \ ,
    \end{aligned}
\end{align}
from which the conclusion of the lemma follows by applying a \SWAP gate onto the two green input wires to recover the original channel $\cE$ from $\cE'$.
\end{proof}

\begin{lemma}\label{lemma0}
    For any CPTP maps  $\cX$ and $\mathcal{T}$
\begin{align}\label{eq: lem01}
\begin{aligned}\includegraphics[height=1.5cm]{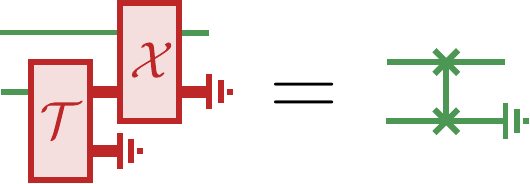}
\end{aligned} \quad \mathlarger{\mathlarger{\mathlarger{\Rightarrow}}} \ \ \  
\begin{aligned}\includegraphics[height= 1.85cm]{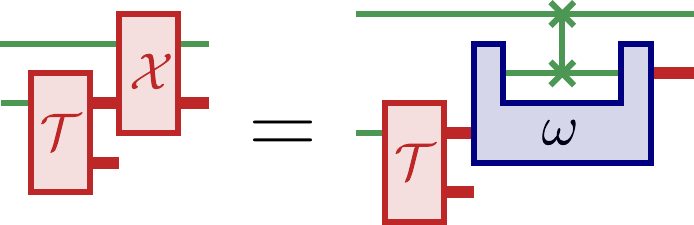} \ .
\end{aligned}
\end{align}
\end{lemma}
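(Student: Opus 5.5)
Since the diagrams are not reproduced in the source, I take \cref{lemma0} to have the following form. A channel $\cX$ takes a trusted (green) input together with an untrusted (red) input and produces an untrusted output. A channel $\mathcal{T}$ acts on that output and returns a green system, possibly with red ``garbage''. The hypothesis is that $\mathcal{T}\circ\cX$ acts as the identity on the green wire, with all red systems discarded. The conclusion should be that $\cX$ factorizes: up to a unitary that depends only on $\mathcal{T}$, the green input is carried untouched on a wire, while the red input is processed by some independent channel $\mathcal{E}$. My plan is to reduce this to \cref{prop0_1,prop1,prop0_2,cor0}, in the same way \cref{prop1} was obtained from \cref{prop0_1}.

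\emph{Step 1: dilate $\mathcal{T}$.} Write $\mathcal{T}$ as a Stinespring dilation: append an ancilla state $\tau$, apply a unitary channel $U$, and trace out a complementary red output. The hypothesis then says the following. The map $U\circ(\cX\otimes\tau)$, with its red output traced out, acts as the identity on the green wire. This holds for every red input, and in particular for every red input state $\rho$.

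\emph{Step 2: invoke \cref{cor0}.} With $\mathcal{E}:=U\circ(\cX\otimes\tau)$ and an arbitrary red input state $\rho$, the hypothesis of \cref{cor0} applies. It yields, for each $\rho$, a product decomposition: $\mathcal{E}$ equals the green identity tensored with a red state depending on $\rho$. The step I expect to be the main obstacle is upgrading this to a statement uniform in the red input, namely
\[
U\circ(\cX\otimes\tau)=\mathrm{id}_{\rm green}\otimes\mathcal{E}'
\]
for a single CPTP map $\mathcal{E}'$ from the red input to the red output. I would do this at the level of the Choi states, treating the red input as also purified by a reference. The green marginal is $\Phi^+$, which is pure, so the global state must be a product between the green Choi pair and everything else. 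Defining $\mathcal{E}'$ by tracing out the green systems, exactly as in the last step of the proof of \cref{prop0_2}, gives the desired decomposition. Linearity and the \ChJa isomorphism then extend it from the chosen inputs to all inputs. For the countably infinite red dimension, I would purify the red input with an explicit reference system rather than rely on a maximally entangled state there.

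\emph{Step 3: undo the unitary.} Apply $U^\dagger$ to both sides, as in the proof of \cref{prop1}. This removes $U$ from the left-hand side and gives
\[
\cX\otimes\tau = U^\dagger\circ\bigl(\mathrm{id}_{\rm green}\otimes\mathcal{E}'\bigr).
\]
Equivalently, $\cX$ is obtained by transmitting the green wire untouched, acting on the red wire with $\mathcal{E}'$, and re-encoding with $U^\dagger$; this is the asserted right-hand side.

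A final check is that the blue objects ($U^\dagger$, $\mathcal{E}'$, and the ancilla) are genuine CPTP maps and states independent of the inputs. This holds because they are built only from the dilation of $\mathcal{T}$ and from partial traces of the Choi state.
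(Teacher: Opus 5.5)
Your argument proves a different statement from \cref{lemma0}, and it misses what makes that lemma nontrivial. The diagrams are not reproduced, but their shape is fixed by the proof of \cref{lemma1old}. There the comb is split as $\Omega=\mathcal{S}\circ(\,\cdot\,)\circ\mathcal{T}$, \cref{lemma0} is applied, and composing with $\mathcal{S}$ gives the conclusion. So $\mathcal{T}$ acts \emph{first}: it encodes the green input into the slot system plus a red memory. $\cX$ is the bipartite slot channel: it takes a green external input together with the slot system, and returns a green external output and a red output. The hypothesis is that, with all red outputs discarded, the green input of $\mathcal{T}$ emerges untouched at $\cX$'s green output while $\cX$'s own green input is discarded. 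The conclusion is $\cX\circ(\mathrm{id}\otimes\mathcal{T})=\omega[\SWAP]\circ(\mathrm{id}\otimes\mathcal{T})$ for some embedding comb $\omega$, with the memory and the red output kept for the later $\mathcal{S}$.

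In your reading the roles are swapped: $\cX$ is the encoder and $\mathcal{T}$ a decoder acting afterwards, and neither a \SWAP nor $\omega$ appears. Your Steps 1--3 are in fact \cref{prop1} with relabeled roles and an extra red input handled as in \cref{prop0_2}: dilate the decoder, factor off the green identity, undo $U$. That reasoning is sound. In the paper, however, it is only the opening move, obtained simply by citing \cref{prop1}. Also, \cref{cor0} is not the right tool for your Step 2, since it concerns a channel with \emph{two} green inputs.

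The remainder of the argument is missing. First, a decoder must be manufactured from $\cX$. Fix its discarded green input to an arbitrary state $\sigma$ and trace its red output to obtain $\Lambda$. The hypothesis then says $\Lambda\circ\mathcal{T}$ is the identity once the memory is traced. \cref{prop1} gives $\mathcal{T}\otimes\tau=U^\dagger\circ(\mathrm{id}\otimes\rho)$, where $U$ is a Stinespring unitary of $\Lambda$ and $\rho$ is a state on its environment and the memory.

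Second, substitute this back into the hypothesis, now leaving $\cX$'s green input free. Apply \cref{cor0} to $\mathcal{E}=\cX\circ(\mathrm{id}\otimes U^\dagger)$ and $\rho$. One green input is transmitted perfectly to the green output, so the other one---$\cX$'s external input---must be processed, together with $\rho$, by a separate channel $\mathcal{F}$ into the red outputs. This decoupling of the external green input from the green output is precisely the \SWAP content of the lemma, and nothing in your proposal produces it.

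Finally, $\omega$ is built with first tooth $U(\,\cdot\otimes\tau)$, which extracts the green system from the slot and keeps the environment as memory, and second tooth $\mathcal{F}$. Inserting the expression for $\mathcal{T}$ yields the claimed identity. It holds only after composition with $\mathcal{T}$, not for $\cX$ alone, which is why an embedding comb rather than local maps appears.
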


\begin{proof}
    The first thing is to see that the hypothesis directly implies that 
    \begin{align*}
        \begin{aligned}
            \includegraphics[height= 1cm]{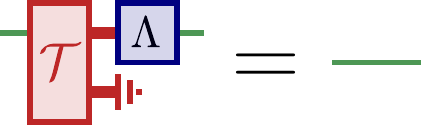} \end{aligned} \quad \text{, with } \quad  \begin{aligned}\includegraphics[height= 1cm]{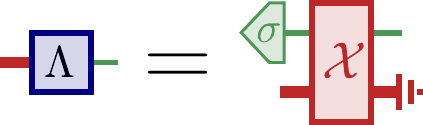} \ ,
        \end{aligned}
    \end{align*}
    for any state $\sigma$. It then follows from~\cref{prop1} that 
    \begin{align}\label{eqapp:prop1consequence}
    \begin{aligned}
        \includegraphics[height= 1.5cm]{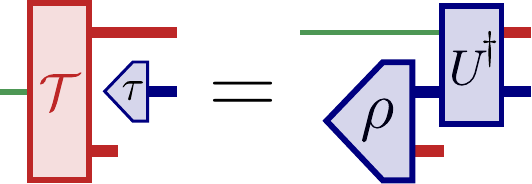} \ \ .
    \end{aligned}
\end{align}
We can then rewrite the hypothesis of the lemma as
\begin{align}
    \begin{aligned}
        \includegraphics[height= 2.1cm]{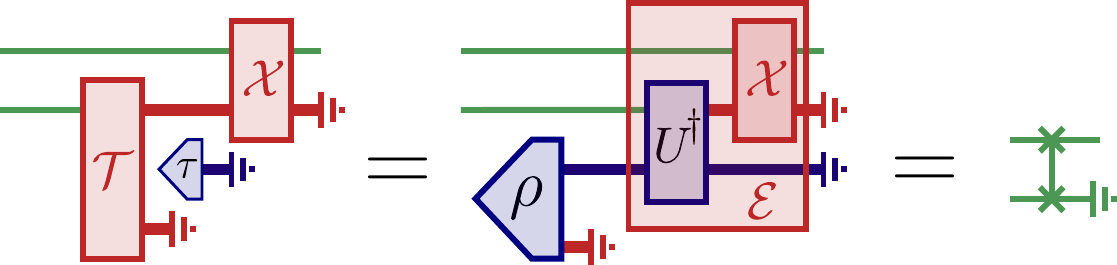} \ \ .
    \end{aligned}
\end{align}
From the last equality, applying the \cref{cor0} for the state $\rho$ and the channel $\mathcal{E}$ identified in the above diagram we obtain
\begin{align}\label{appeq: temp1}
    \begin{aligned}
        \includegraphics[height= 1.7cm]{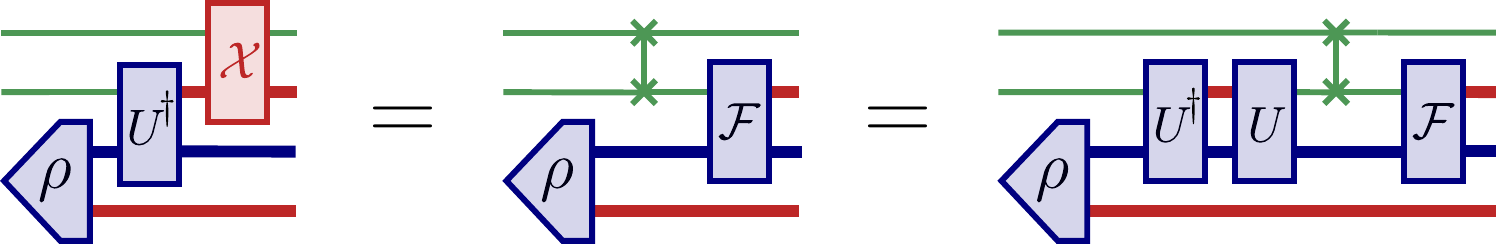} \ \ ,
    \end{aligned}
\end{align}
where to obtain the last equality we used the fact that $U^\dag$ is unitary to insert the identity $U\circ U^\dag ={\rm id}$. Finally, plugging  Eq.~\eqref{eqapp:prop1consequence} into Eq.~\eqref{appeq: temp1} we obtain 
\begin{align}
    \begin{aligned}
        \includegraphics[height= 1.7cm]{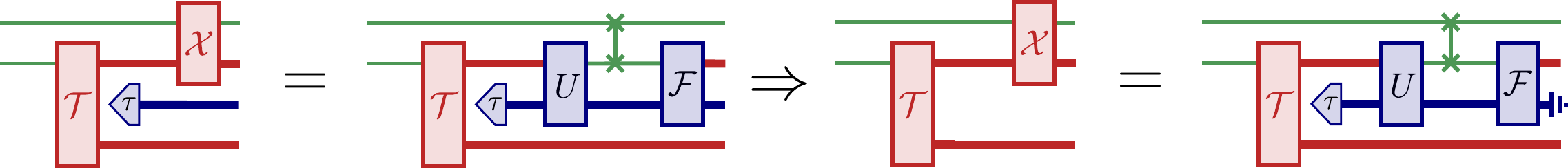} \ \,
    \end{aligned}
\end{align}
which proves the lemma upon defininig the embedding comb $\omega$ as
\begin{align}
    \begin{aligned}
        \includegraphics[height= 1cm]{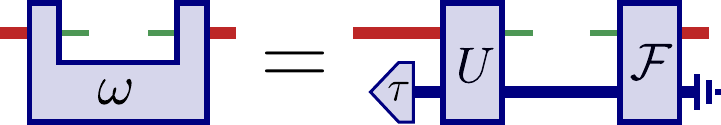} \ \ .
    \end{aligned}
\end{align}
\end{proof}

\begin{lemma}\label{lemma1old}
    For any bipartite channel $\mathcal{X}$ and any one-slot comb $\Omega$ such that
\begin{align}\label{eq: lem61}
\begin{aligned}\includegraphics[height=2cm]{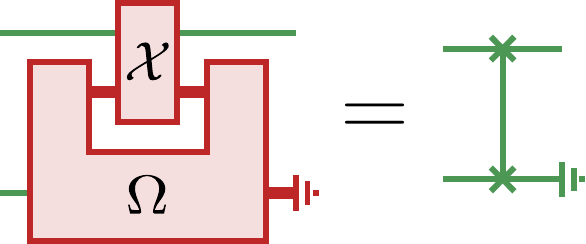}\ \ ,
\end{aligned}
\end{align}
    it holds true that
\begin{align}\label{eq: lem62}
\begin{aligned}\includegraphics[height=2.5cm]{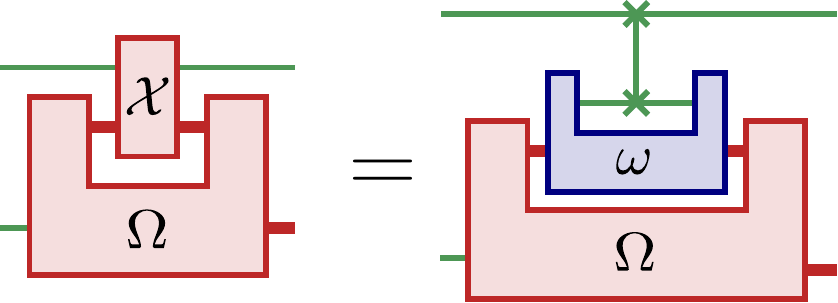}\ \ .
\end{aligned}
\end{align}
\end{lemma}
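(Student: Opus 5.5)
The plan is to reduce \cref{lemma1old} to \cref{lemma0}, which already handles the case where a channel $\cX$ is followed by some CPTP map $\mathcal{T}$ and the composition transmits a trusted wire perfectly. The extra ingredients in \cref{lemma1old} are the one-slot comb $\Omega$, which also acts \emph{before} $\cX$, and the lower systems. So the work is in putting the hypothesis \eqref{eq: lem61} into the form of the premise of \eqref{eq: lem01}.

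\textbf{Step 1: memory-channel realization of $\Omega$.} By the standard decomposition~\cite{chiribella2008transforming}, write $\Omega$ as a pre-processing channel $\mathcal{P}$ followed, after the slot, by a post-processing channel $\mathcal{Q}$, linked by a memory wire. I then take a Stinespring dilation of $\mathcal{P}$ as a unitary $U_{\mathcal P}$ acting on its inputs together with a fixed ancilla state. The hypothesis then reads: the middle trusted input, after $U_{\mathcal P}$, then $\cX$ on the slot wires, then $\mathcal{Q}$, with everything else discarded, emerges unchanged on the upper output.

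\textbf{Step 2: grouping into the form of \cref{lemma0}.} I group the trusted middle input together with everything preceding $\cX$ into an effective ``input preparation''. Everything after $\cX$, namely $\mathcal{Q}$ acting jointly on the memory, the slot output and the lower systems, is taken as the map $\mathcal{T}$ of \cref{lemma0}.

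The delicate point is that $\cX$ does not receive the trusted wire directly, but only $U_{\mathcal P}$ applied to it together with the ancilla. To handle this, I pre-compose both sides of the identity with $U_{\mathcal P}$ and its inverse. In this way the object playing the role of $\cX$ in \cref{lemma0} becomes $(\cX\otimes{\rm id}_{\rm mem})\circ U_{\mathcal P}$, with the ancilla treated as an extra red input fed with a fixed state. I then use \cref{prop1} and \cref{cor0} to argue that the part of the information routed to the memory cannot carry the trusted wire. The reason is that the trusted wire appears perfectly at the output, so by the decoupling argument underlying \cref{prop0_1} and \cref{prop0_2}, the remaining red systems must be product with it.

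\textbf{Step 3: applying \cref{lemma0} and recombining.} With the premise of \cref{lemma0} in place, I obtain a decomposition of the grouped channel as pre-map, then \SWAP, then post-map, with a memory wire in between. I then peel $U_{\mathcal P}$ back off by applying $U_{\mathcal P}^\dagger$, which is unitary and so can be absorbed into the first tooth of the comb. I define $\omega$ by collecting the resulting pre- and post-maps, and substitute back into $\Omega$ to get \eqref{eq: lem62}.

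I expect Step 2 to be the main obstacle: showing that the first tooth of $\omega$ can be chosen to act only on the slot's input and not on $\Omega$'s memory. If this cannot be separated cleanly, my fallback is to prove the weaker equality at the level of the composed object $\Omega[\cX]=\Omega[\omega[\SWAP]]$. I would then check that this equality is all that is used when \cref{lemma1} is applied in \cref{sec: 1-comb,sec:self_test_QS}.
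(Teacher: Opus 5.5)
Your Step 1, writing $\Omega$ as a first tooth $\mathcal{P}$ and a second tooth $\mathcal{Q}$ joined by a memory, is exactly how the paper starts. The reduction to \cref{lemma0}, however, has a genuine gap: you have the roles in that lemma reversed, and you miss the observation that makes the reduction immediate.

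In \cref{lemma0} the map $\mathcal{T}$ acts \emph{before} $\mathcal{X}$ and feeds it the trusted wire. That is why its proof can apply \cref{prop1} to $\mathcal{T}$, with a decoder $\Lambda$ built from $\mathcal{X}$: a fixed state is fed into its upper input and its slot-side output is discarded.

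The observation you need is about the discards. In \eqref{eq: lem61} every output of $\Omega$ is discarded, and only the upper output of $\mathcal{X}$ is kept. Since $\mathcal{Q}$ is trace preserving, discarding all its outputs is the same as discarding its inputs. So $\mathcal{Q}$ drops out, and the hypothesis is verbatim the premise of \cref{lemma0} with $\mathcal{T}:=\mathcal{P}$, with $\Omega$'s memory and the slot-side output of $\mathcal{X}$ discarded. \cref{lemma0} then gives $\mathcal{X}\circ\mathcal{P}=\omega[\SWAP]\circ\mathcal{P}$ with the memory passed through, and composing with $\mathcal{Q}$ gives \eqref{eq: lem62}. Casting $\mathcal{Q}$ as the $\mathcal{T}$ of \cref{lemma0} does not produce its premise, and no dilation of $\mathcal{P}$ is needed.

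The conjugation by $U_{\mathcal P}$ and $U_{\mathcal P}^\dagger$ is in fact what creates the obstacle you anticipate. $U_{\mathcal P}^\dagger$ acts jointly on the slot wire and on $\Omega$'s memory. Absorbing it into the first tooth of $\omega$ therefore yields an object that reads $\Omega$'s memory and cannot be inserted into the slot.

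Your decoupling argument does not repair this. Applied to $(\mathcal{X}\otimes{\rm id})\circ U_{\mathcal P}$, it only shows that this composite splits into a wire from the middle input to the upper output and a map on the remaining systems. That routes the trusted system around the slot, not through a comb sitting in it.

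The missing ingredient is a decoder acting on the slot wire alone, namely $\Lambda$. \cref{prop1} then writes $\mathcal{P}$ (with an ancilla) as $U^\dagger$ applied to the untouched trusted wire together with a red state. Here $U$ is a Stinespring unitary of $\Lambda$ acting only on the slot wire and its own ancilla. This $U$ is the first tooth of $\omega$, and $\Omega$'s memory enters only through that red state, passing untouched to $\mathcal{Q}$.

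Finally, your fallback $\Omega[\mathcal{X}]=\Omega[\omega[\SWAP]]$ is not weaker than the target: it \emph{is} \eqref{eq: lem62}. It only makes sense if $\omega$ acts on the slot wire alone, so it does not sidestep the difficulty.
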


\begin{proof} For any 1-comb it holds true that \cite{taranto2025}:
    \begin{equation}
        \includegraphics[height=1.5cm]{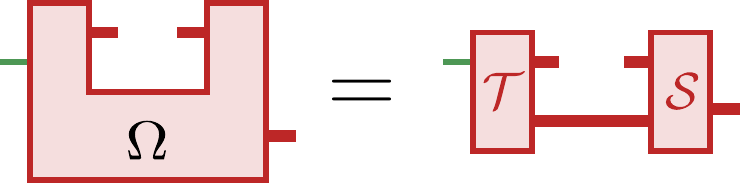}
    \end{equation}
    for some CPTP maps $\mathcal{T}, \mathcal{S}$. Hence the hypothesis becomes  
    \begin{equation}
        \includegraphics[height=2cm]{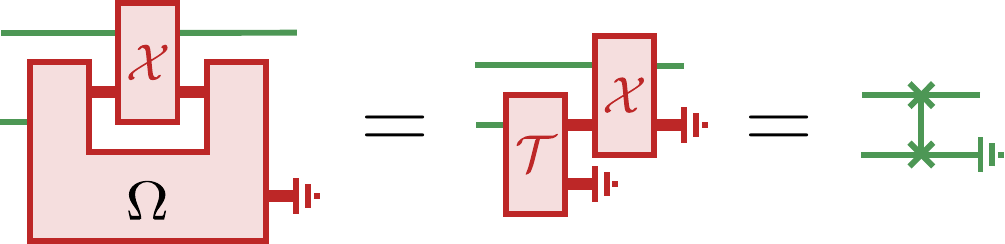} \ ,
    \end{equation}
    from which we can apply~\cref{lemma0}. After composing with $\mathcal{S}$, we get Eq.~\eqref{eq: lem12}.
\end{proof}

\LemmaOne*

\begin{proof} First, let us define a new comb 
     \begin{equation}
        \includegraphics[height=3cm]{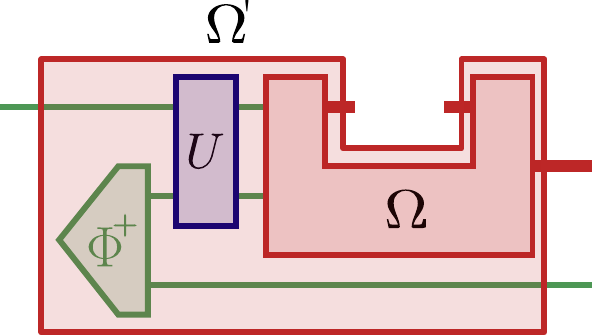},
    \end{equation}
    for which the statement of the lemma guarantees
       \begin{equation}
        \includegraphics[height=2.1cm]{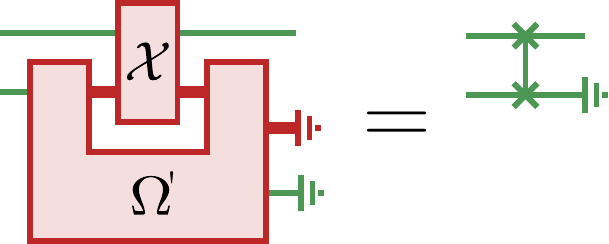}.
    \end{equation}
    By \cref{lemma1old} this identity implies
     \begin{equation}
        \includegraphics[height=3cm]{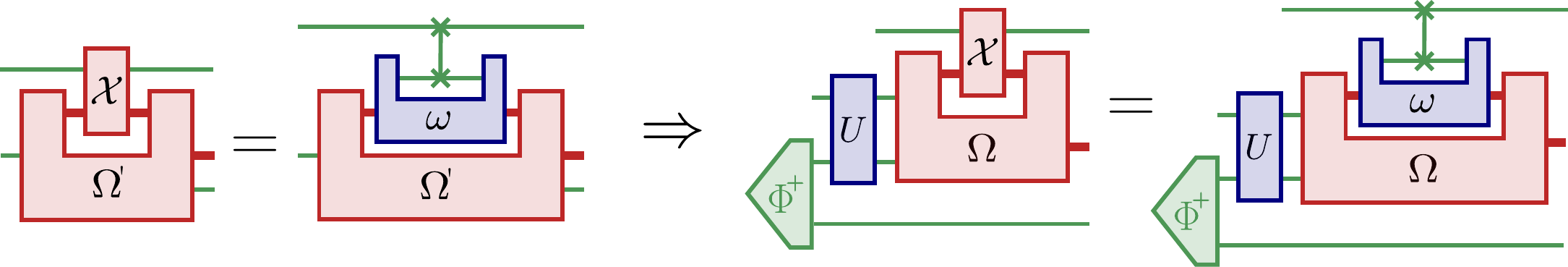}.
    \end{equation}
    We conclude the proof by using the \ChJa isomorphism to remove the $\Phi^+$ state and applying the inverse $U^\dagger$ operation on the two input wires of $\Omega$.
\end{proof}

\end{document}